\documentclass[letterpaper,twocolumn,10pt]{article}
\usepackage{usenix}

\usepackage{tikz}
\usepackage{amsmath}

\usepackage{algorithm,algorithmic}

\usepackage{microtype}
\usepackage{graphicx}
\usepackage{subcaption}
\usepackage{booktabs}

\providecommand{\citep}{\cite}
\providecommand{\citet}{\cite}

\usepackage{algorithm,algorithmic}

\usepackage{tikz}
\usetikzlibrary{arrows.meta, positioning}

\usepackage{xspace}
\usepackage{xcolor}
\definecolor{darkgreen}{RGB}{0,100,0}   
\definecolor{darkred}{RGB}{139,0,0}     

\newcommand{\dtrain}{\mathcal D_\mathrm{train}}

\newcommand{\E}{\mathbb{E}}

\newcommand{\cZ}{\mathcal{Z}}
\newcommand{\cI}{\mathcal{I}}

\newcommand{\cO}{\mathcal O}

\newcommand{\R}{\mathbb R}

\newcommand{\PLRV}{\mathrm{PLRV}}

\newcommand{\cE}{\mathcal{E}}

\newcommand{\eqdef}{\stackrel{\text{def}}{=}}

\def\<#1,#2>{\langle #1,#2\rangle}

\usepackage{dsfont}

\def\<{\langle}
\def\>{\rangle}
\def\|{\Vert}

\def\eps{\varepsilon}

\newcommand{\esp}[1]{\mathbb{E}\left[#1\right]}

\newcommand{\NRM}[1]{{{\left\| #1\right\|}}} 
\newcommand{\set}[1]{{\left\{ #1\right\}}} 
\newcommand{\proba}[1]{\mathbb{P}\left(#1\right)}

\newcommand{\Tot}{{\mathrm{Tot}}}
\newcommand{\DS}{{\mathrm{DS}}}

\renewcommand{\P}{\mathbb{P}}
\newcommand{\Q}{\mathbb{Q}}
\newcommand{\cA}{\mathcal{A}}
\newcommand{\cB}{\mathcal{B}}

\newcommand{\cX}{\mathcal{X}}

\newcommand{\cM}{\mathcal{M}}

\newcommand{\dd}{{\rm d}}

\newcommand{\cP}{\mathcal{P}}
\newcommand{\cH}{\mathcal{H}}

\newcommand{\N}{\mathbb{N}}
\newcommand{\cD}{\mathcal{D}}
\newcommand{\one}{\mathds{1}}

\usepackage{amsfonts}
\usepackage{amsthm}
\usepackage{amsmath}
\usepackage{amssymb}
\usepackage{mathrsfs}
\usepackage{amscd}
\usepackage{caption}
\usepackage{cleveref}

\newcommand{\pano}{\textsc{Panoramia}\xspace}

\usepackage{hyperref}

\usepackage{graphicx}
\usepackage[T1]{fontenc}
\usepackage[utf8]{inputenc}
\usepackage[english]{babel}
\usepackage{amsmath,amsthm,amssymb,amsfonts}
\usepackage{bm}

\usepackage{booktabs} 
\usepackage{threeparttable}
\usepackage{makecell}
\usepackage{multirow}

\usepackage[nice]{nicefrac}
\usepackage{pifont}
\usepackage[colorinlistoftodos,textwidth=2.1cm]{todonotes}

\usepackage{tcolorbox}
\usepackage{pifont}
\definecolor{mydarkgreen}{RGB}{39,130,67}
\definecolor{mydarkred}{RGB}{192,25,25}
\definecolor{mydarkblue}{RGB}{0,0,140}
\definecolor{lightorange}{RGB}{252,210,153}

\usepackage[table]{xcolor} 
\usepackage{booktabs}

\newcommand{\subtlerule}{%
    \arrayrulecolor{black!12}%
    \specialrule{0.25pt}{1.5pt}{1.5pt}%
    \arrayrulecolor{black}%
}

\usepackage{empheq}
\definecolor{darkgreen}{rgb}{0.00,0.5,0.00}

\usepackage{tcolorbox}
\usepackage{framed}

\usepackage{amsmath,amsthm,amssymb,amsfonts}
\usepackage{bm}
\usepackage{aliascnt}

\usepackage{hyperref}

\theoremstyle{plain}

\newtheorem{theorem}{Theorem}[section]

\newaliascnt{proposition}{theorem}
\newtheorem{proposition}[proposition]{Proposition}
\aliascntresetthe{proposition}

\newaliascnt{lemma}{theorem}
\newtheorem{lemma}[lemma]{Lemma}
\aliascntresetthe{lemma}

\newaliascnt{corollary}{theorem}
\newtheorem{corollary}[corollary]{Corollary}
\aliascntresetthe{corollary}

\theoremstyle{definition}

\newaliascnt{definition}{theorem}
\newtheorem{definition}[definition]{Definition}
\aliascntresetthe{definition}

\newaliascnt{assumption}{theorem}
\newtheorem{assumption}[assumption]{Assumption}
\aliascntresetthe{assumption}

\newaliascnt{example}{theorem}

\aliascntresetthe{example}

\theoremstyle{remark}

\newaliascnt{remark}{theorem}
\newtheorem{remark}[remark]{Remark}
\aliascntresetthe{remark}

\usepackage{booktabs}
\usepackage{tabularx}
\usepackage{array}

\usepackage{cleveref}

\crefname{theorem}{theorem}{theorems}
\Crefname{theorem}{Theorem}{Theorems}

\crefname{proposition}{proposition}{propositions}
\Crefname{proposition}{Proposition}{Propositions}

\crefname{lemma}{lemma}{lemmas}
\Crefname{lemma}{Lemma}{Lemmas}

\crefname{corollary}{corollary}{corollaries}
\Crefname{corollary}{Corollary}{Corollaries}

\crefname{definition}{definition}{definitions}
\Crefname{definition}{Definition}{Definitions}

\crefname{assumption}{assumption}{assumptions}
\Crefname{assumption}{Assumption}{Assumptions}

\crefname{example}{example}{examples}
\Crefname{example}{Example}{Examples}

\crefname{remark}{remark}{remarks}
\Crefname{remark}{Remark}{Remarks}
\begin{document}

\date{}

\title{\Large \bf Privacy Auditing with Zero (0) Training Run
}

\author{
{\rm Tudor Cebere\thanks{Equal contribution}}\\
PreMeDICaL team, Inria \\
  Idesp, Inserm, Univ. de Montpellier
\and
{\rm Mathieu Even\footnotemark[1]}\\
PreMeDICaL team, Inria \\
  Idesp, Inserm, Univ. de Montpellier
\and
{\rm Linus Bleistein}\\
School of Computer and Communication \\ Sciences
  and School of Life Sciences (EPFL)
\and
{\rm Aurélien Bellet}\\
PreMeDICaL team, Inria \\
  Idesp, Inserm, Univ. de Montpellier
} 

\maketitle

\begin{abstract}
Privacy auditing provides empirical lower bounds on the differential privacy parameters of learning algorithms. Existing methods, however, require interventional access to the training pipeline, either to retrain models multiple times or to randomize data inclusion. This is often infeasible for large and opaque deployed systems such as foundation or language models. We introduce Zero-Run privacy auditing, a framework for auditing models post-hoc using two fixed datasets: examples known to be training-set members and examples known to be non-members. In this observational regime, membership is no longer randomized; instead, member and non-member data often differ in distribution, so membership inference scores may reflect a distribution shift rather than algorithmic leakage. 
Drawing on ideas from causal inference, we formalize this confounding effect and propose two complementary corrections that yield valid privacy audits.
Our first approach models the combined effect of distribution shift and algorithmic leakage as an adaptive composition, producing conservative global corrections. 
Our second approach conditions on observed data and adjusts pointwise membership guesses, yielding sharper instance-dependent bounds. 
Experiments on synthetic data and large-scale image and text models show that Zero-Run auditing enables practical privacy evaluation when retraining or controlled data insertion is infeasible.
\end{abstract}

\section{Introduction}

As AI systems increasingly process sensitive data, mitigating privacy leakage (i.e., the inadvertent disclosure of training examples) has become a critical priority. This risk is exacerbated by high-capacity foundation models trained across various modalities \citep{touvron2023llama,jiang2023mistral7b, ramesh2021zero, defossez2024moshi, hollmann2023tabpfntransformersolvessmall,dingtabula}. Trained on massive, web-crawled corpora, they are inherently susceptible to memorizing personal or proprietary information \citep{carlini2021extracting,carlini2019secret,lukas2023analyzing}, with the extent of memorization observed to increase 
with model size \citet{carlini2023quantifyingmemorizationneurallanguage}. Moreover, these vulnerabilities introduce challenges in domain-specific contexts; for instance, clinical foundation models memorize rare patient profiles despite rigorous data de-identification \citep{tonekaboni2025an}.

A countermeasure to these risks is \emph{differential privacy} (DP) \citep{DworkMNS06}, the de facto standard for privacy-preserving algorithms. By bounding the impact of any individual's data on the result, DP protects against privacy attacks that infer sensitive information from released outputs. However, despite offering worst-case guarantees over the privacy leakage, its adoption in large-scale training pipelines is limited due to steep utility penalties \cite{jayaraman2019evaluating} and the prohibitive cost of overhauling existing data-processing infrastructure \cite{sinha2025vaultgemma}. 

Motivated by these practical barriers, \emph{privacy auditing} offers an empirical complement to DP by lower-bounding an algorithm's privacy parameters \citep{ding2018detecting,jagielski2020auditing,nasr2023tight,auditing_cebere}. Audits probe a model's behavior on its training data to expose implementation flaws \citep{tramer2022debugging}, identify misconfigured privacy mechanisms \cite{cebere2026privacy}, or demonstrate that theoretical guarantees are tight (i.e., the auditing lower bound closely matches the DP upper bound) \cite{nasr2021adversary, nasr2023tight}.
More broadly, privacy auditing provides an empirical means of assessing privacy even for models trained without formal privacy guarantees.
In contrast to worst-case theoretical upper bounds, auditing quantifies empirical leakage for a specific model architecture, dataset, and threat model, yielding deployment-relevant evidence for privacy claims, compliance, and risk assessment \citep{edpb}.

While traditional privacy auditing requires repeatedly retraining models on randomized datasets \citep{jagielski2020auditing,nasr2023tight,auditing_cebere}, recent advances reduce this requirement to a single \emph{randomized} training run, in which the inclusion of multiple data points is randomized jointly \citep{steinke2023privacy,mahloujifar2024auditing}. 
This substantially reduces the computational burden, but still requires access to the training process itself. For modern AI models, whose training consumes thousands of GPU days and relies on rigid, distributed data pipelines, such access is generally limited to the model provider. Large language models perfectly illustrate this tension. While trained on vast web corpora loaded with sensitive data, their immense scale, opaque data mixtures, and black-box deployments prevent third parties from independently performing such privacy audits.
Consequently, researchers and regulators lack tools to verify privacy guarantees, relying on trust and the voluntary cooperation of model providers. This creates an \textit{accountability gap}: the systems posing the greatest privacy risks are the most insulated from independent privacy evaluation.


\paragraph{Contributions.} To address this gap, we introduce a \emph{Zero-Run} auditing framework that enables auditors to assess the privacy leakage of an algorithm \emph{post-hoc from its release}, without modifying the algorithm or data pipeline.

We ground our approach in a fundamental analogy to causal inference. Existing privacy audits based on randomized data insertion are inherently interventional—analogous to \textit{randomized controlled trials} (RCTs)—because training-set membership is experimentally randomized, allowing these audits to isolate privacy leakage. In contrast, our Zero-Run framework is the \textit{observational} counterpart, estimating privacy leakage from fixed member and non-member datasets. Since membership is not randomized, systematic differences can arise between the distributions of members and non-members, acting as a confounding factor \citep{rosenbaum1983central,wager2024causal}. For the audit to yield a valid measure of privacy leakage, these confounding effects must therefore be accounted for.


Guided by this analogy, our main contributions are as follows:
\begin{enumerate}
    \item \textbf{Formalizing Zero-Run Auditing (\Cref{sec:zero-run-auditing-framework}).} We formalize the post-hoc setting in which an auditor relies on disjoint datasets of known members and non-members to assess a released model's privacy. We demonstrate that naive audits can substantially overestimate privacy leakage: membership may be inferred merely by exploiting distributional shifts between member and non-member data rather than algorithmic leakage, causing standard procedures to incorrectly attribute this spurious signal to the audited model.

    
    \item \textbf{Privacy Leakage Correction (\Cref{sec:worstcase_compo,sec:conditional}).} We extend One-Run analyses \citep{steinke2023privacy,mahloujifar2024auditing} to the post-hoc regime, correcting empirical privacy estimates for distribution shift through two complementary viewpoints:
    \begin{itemize}
        \item \emph{A Composition Viewpoint (\Cref{sec:worstcase_compo})}, modeling observed leakage as an adaptive composition of a distribution-shift mechanism and the audited algorithm. This yields a deconvolution-based estimate of privacy leakage, though it can be conservative under severe distribution shift.
        \item \emph{A Conditional Viewpoint (\Cref{sec:conditional})}, characterizing pointwise privacy leakage conditional on the audit data. We tamper the membership guesses to remove overconfidence induced by confounding, yielding a robust correction that adapts to the observed distribution shift.
    \end{itemize}
    \item \textbf{Propensity Scores and Uncertainty Quantification (\Cref{sec:propensity_estimation}).}
    Our composition and conditional correction approaches rely on \textit{propensity scores} (the probability that an example is a member given its features), which must be estimated. We explicitly quantify uncertainty in propensity score estimation in our privacy lower bounds via bootstrapping \citep{efron1987better}, and provide a principled criterion for rejecting invalid propensity score models.
    \item \textbf{Empirical Evaluation (\Cref{sec:experiments}).}
    We evaluate Zero-Run auditing in three settings: a controlled Gaussian mechanism noisy-sum, a  CIFAR-10 image classification with synthetic non-members, and GPT-2-small on WikiText-103 with generated text as non-members. We show that ignoring distribution shift can produce invalid privacy lower bounds, while our corrections remain valid and retain power when sufficient member--non-member overlap is available. On CIFAR-10 and WikiText-103, we obtain non-trivial propensity-corrected privacy lower bounds and show that the resulting conclusions are stable across multiple propensity-score estimators, training checkpoints, and training regimes. 

\end{enumerate}

\section{Background on Differential Privacy and Privacy Auditing}

\paragraph{Differential Privacy.}
Informally, Differential Privacy (DP) \citep{DworkMNS06,Dwork2014a} ensures that a randomized algorithm's output remains largely unchanged whether a single individual's data is included or not in the input. DP is widely considered the standard for privacy-preserving data analysis because it provides a mathematically rigorous guarantee against adversaries with arbitrary side information. DP is immune to post-processing (protections cannot be degraded by further computation) and composes gracefully, enabling accurate tracking of the cumulative privacy loss across multiple analyses.

\begin{definition}[$(\eps,\delta)$-DP]
\label{def:eps-delta-dp}
Let $\eps,\delta>0$.
A randomized algorithm $\cA$ that takes as input a dataset $\cD \subset \cX$ and outputs $\cA(\cD) \in \Theta$ is said to satisfy $(\eps,\delta)$-differential privacy if, for any pair of adjacent datasets $\cD$ and $\cD'$ (i.e., differing in a single data sample) and any measurable set $\cE \subset \Theta$:
\begin{equation*}
    \proba{\cA(\cD)\in\cE}\leq e^\eps\proba{\cA(\cD')\in\cE} + \delta\,.
\end{equation*}
\end{definition}
Pure $\eps$-DP corresponds to the special case where $\delta=0$.
In this work, we will also rely on the more expressive notion of $f$-DP \citep{Dong22GDP}, which is based on trade-off functions that lower bound the Type II error given a Type I error of $\alpha$ for testing $z\sim \P$ versus $z\sim \Q$.
We denote by $T_{Z,Z'}$ the trade-off function between the distributions of two random variables $Z,Z'$.

\begin{definition}[Trade-off function]
\label{def:tradeoff_function}
    For two distributions $\P$ and $\Q$ on some space $\cZ$, the trade-off function $T_{\P,\Q}$ is the function $[0,1]\to[0,1]$ defined as $T_{\P,\Q}(\alpha) = \inf\set{\beta_\phi\,|\,\alpha_\phi\leq \alpha}$, where the infimum is taken over all measurable functions $\cZ\to\set{0,1}$, $\alpha_\phi=\E_\P\phi$ and $\beta_\phi=1-\E_\Q\phi$.
\end{definition}
\begin{definition}[$f$-DP]
A randomized algorithm $\cA$
is said to satisfy $f$-differential privacy for  $f$ a decreasing and convex function satisfying $f(p) \le 1 - p$ for $p \in [0,1]$  if, for any pair of adjacent datasets $\cD$ and $\cD'$, we have, for all $\alpha\in[0,1] $, $T_{\cA(\cD),\cA(\cD')}(\alpha)\geq f(\alpha)$.
Equivalently, for any measurable set $\cE$, 
$\proba{\cA(\cD)\in\cE}\leq \bar f \big(\proba{\cA(\cD')\in\cE}\big)$, where $\bar f = 1-f.$
\end{definition}
For two mechanisms $\cA_1$ and $\cA_2$, respectively $f_1$-DP and $f_2$-DP, their composition $\cA_2\odot\cA_1$ satisfies $f$-DP, with $f$ given by the convolution $f=f_1\otimes f_2$, which characterizes the optimal testing error after observing both outputs. This remains valid even if $\cA_2$ depends on the outputs of $\cA_1$ \citep{Dong22GDP}.

Two special cases of $f$-DP will be used throughout: $(\varepsilon,\delta)$-DP (\Cref{def:eps-delta-dp}), parameterized by $(\varepsilon,\delta)$, and Gaussian DP ($\mu$-GDP, \citep{Dong22GDP}), parameterized by a scalar $\mu$. Their respective trade-off functions are $f_{\eps,\delta}(p)=\max(0,1-e^\eps p-\delta,e^{-\eps}(1-\delta-p))$ and $G_\mu(p)= \Phi(\Phi^{-1}(1-p)-\mu))$. 
For $\mu$-GDP, the function $f$ coincides with the false negative rate as a function of the false positive rate for the optimal test distinguishing two unit-variance Gaussians with means differing by $\mu$.

While $(\eps,\delta)$-DP summarizes privacy using two worst-case parameters, $f$-DP and GDP characterize the full trade-off between false positives and false negatives. This finer description provides a complete view of an adversary's inference capability and enables more expressive privacy reporting and accounting \citep{gomez2025varepsilondeltaconsideredharmful}.

\paragraph{Privacy Auditing.}
\label{sec:auditing}
\looseness=-1 An auditing algorithm takes as input an algorithm $\cA$ (e.g., DP-SGD) and a privacy hypothesis $\cH$ (e.g., ``$\cA$ satisfies $f$-DP''), outputting a bit indicating whether $\cA$ satisfies $\cH$. Standard ``interventional'' auditing \citep{steinke2023privacy,mahloujifar2024auditing,jagielski2020auditing} follows a three-step procedure:
\begin{enumerate}
    \looseness=-1
    \item \textbf{\textit{(Randomized Experiment)}} Evaluate $\cA$ on a dataset \(\cD\) with randomized membership assignments;
        \looseness=-1
    \item \textbf{\textit{(Membership Inference)}} A \textit{membership inference attack (MIA)} produces guesses $(T_i)_{i\in[m]} \in \set{-1,0,1}^m$ for the true membership $S_i\in\{-1,1\}$ of $X_i$ based on the output of the experiment, where a guess of $T_i = 1$ indicates membership, $T_i = -1$ indicates non-membership, and $T_i = 0$ corresponds to abstention (e.g., low confidence).
        \looseness=-1
    \item \textbf{\textit{(Audit Evaluation)}}
    Output $b\in\set{\textsc{True},\textsc{False}}$ based on $\cH$, the MIA output $T$, and the experiment's observations (including true memberships).
\end{enumerate}

In the \textit{(Randomized Experiment)} step, both \textit{Multi-Run} \citep{jagielski2020auditing} and \textit{One-Run} \citep{steinke2023privacy} auditing randomize data inclusions, executing $\cA$ repeatedly for individual points \(X_i\), or once for multiple points \((X_i)_{i \in [m]}\), respectively. Both rely fundamentally on membership indicators \(S_i \in \{-1,1\}\) being independent and centered Bernoulli random variables. This randomized assignment is the key statistical assumption underlying the validity of interventional audits---an assumption that our Zero-Run auditing framework explicitly eliminates.

\begin{remark}\label{remark:empirical_param_0}
    For an error probability $p\in[0,1]$, the audit is $p$-valid if $\proba{b = \textsc{False} |\cA\text{ satisfies }\cH}\leq p$.
From such procedures, one can derive \emph{empirical privacy parameters}---high-probability lower bounds on leakage---by identifying the largest privacy level (e.g., $\hat\mu$ for GDP) that the audit can confidently rule out, see \Cref{remark:empirical_privacy_parameters} in \Cref{sec:propensity_estimation} for details.
\end{remark}

\section{Zero-Run Privacy Auditing Framework}
\label{sec:zero-run-auditing-framework}

Traditional privacy auditing techniques rely on randomized experiments (see \Cref{sec:auditing}) that demand computationally prohibitive re-executions and randomized data insertions, making them infeasible for complex, large-scale systems. We address this bottleneck by replacing the interventional randomized experiment with an \textit{observational study}. Instead of randomizing input data and re-running the algorithm $\cA$, our \emph{Zero-Run} auditor passively evaluates its existing output. This approach requires zero control over or access to the original execution pipeline.\looseness=-1

\begin{figure*}
    \centering
    \resizebox{\textwidth}{!}{%
\begin{tikzpicture}[
    >=stealth,
    font=\small,
    box/.style={draw, rounded corners=4pt, thick, align=center, fill=white, inner sep=6pt},
    data/.style={box, draw=gray!80!black, fill=gray!8},
    algo/.style={box, draw=mydarkblue, fill=mydarkblue!8},
    proc/.style={box, draw=purple, fill=purple!8},
    prop/.style={box, draw=orange!90!black, fill=lightorange!30},
    audit/.style={box, draw=mydarkgreen, fill=mydarkgreen!8},
    bound/.style={font=\bfseries, text=mydarkgreen, align=center},
    arrow/.style={->, thick, draw=black!70}
]

\node[algo, text width=3.8cm] (algo) at (0, 1.35) {
    \textbf{Algorithm Output}\\[0.6ex] 
    $\theta \sim \mathcal{A}(\mathcal{D}_{\text{train}})$
};

\node[data, text width=3.8cm] (data) at (0, -1.35) {
    \textbf{Observational Data}\\[0.3ex]
    {\footnotesize $X_i \in \mathcal{D}_1 \cup \mathcal{D}_0$}\\[0.5ex]
    \begin{tabular}{@{}c@{}}
        \scriptsize \textbf{Members} ($S_i=1$) \\
        \scriptsize $X_i \in \mathcal{D}_1 \subseteq \mathcal{D}_{\text{train}} \sim \mathbb{P}_1$ \\[0.4ex]
        \scriptsize \textbf{Non-members} ($S_i=-1$) \\
        \scriptsize $X_i \in \mathcal{D}_0 \subseteq \mathcal{D}_{\text{train}}^c \sim \mathbb{P}_0$
    \end{tabular}\\[0.5ex]
    \textcolor{mydarkred}{\scriptsize \textbf{Confounding Shift: } $\mathbb{P}_1 \neq \mathbb{P}_0$}
};

\node[proc, text width=3.2cm] (mia) at (5.8, 1.35) {
    \textbf{Membership Inference}\\[0.5ex] 
    Guesses $T_i$
};

\node[prop, text width=3.2cm] (prop) at (5.8, -1.35) {
    \textbf{Propensity Estimator}\\[0.5ex] 
    Scores $\hat{\pi}(X_i)$
};

\node[audit, text width=4.0cm] (audit) at (10.8, 0) {
    \textbf{Zero-Run Correction}\\[0.6ex] 
    \textbf{Compositional} {\scriptsize (Global)}\\
    \textbf{Conditional} {\scriptsize (Pointwise)}
};

\node[bound, text width=2.4cm] (bound) at (14.8, 0) {
    Empirical\\[0.3ex]
    Lower Bounds
};

\draw[arrow] (algo) -- (mia) node[midway, above, font=\footnotesize] {$\theta$};
\draw[arrow] (data) -- (prop) node[midway, above, font=\footnotesize] {$X_i$};
\draw[arrow] (data) -- (mia) node[midway, above, sloped, font=\footnotesize] {$X_i$};

\draw[arrow] (mia) -- (audit) node[midway, above, sloped, font=\footnotesize] {$T_i$};
\draw[arrow] (prop) -- (audit) node[midway, below, sloped, font=\footnotesize] {$\hat{\pi}(X_i)$};

\draw[arrow, dashed, rounded corners=6pt] 
    ([yshift=10pt]data.south east) -- ([yshift=10pt]data.south east -| audit.south) node[midway, above, font=\footnotesize] {$S_i$} 
    -- (audit.south);

\draw[arrow, very thick, mydarkgreen] (audit) -- (bound);

\end{tikzpicture}
    }
    \vspace{-0.2cm}
   \caption{Overview of the Zero-Run Privacy Auditing framework. Observational data ($X_i$) from members and non-members introduce confounding due to distribution shift ($\mathbb{P}_1 \neq \mathbb{P}_0$). We estimate this shift through a propensity scores $\hat{\pi}(X_i)$ and use them to correct raw membership guesses ($T_i$) via global composition or pointwise conditioning, isolating true algorithmic leakage and yielding valid empirical DP lower bounds.}

    \label{fig:zero_run_setup}
\end{figure*}

\paragraph{Zero-Run Auditing.}
As illustrated in \Cref{fig:zero_run_setup}, our \emph{Zero-Run auditing framework} operates as follows:
\begin{enumerate}
    \looseness=-1
    \item \textit{\textbf{(Observational Study)}}
    The auditor observes $\theta = \cA(\cD_{\mathrm{train}})$, the result of the algorithm $\cA$ executed on a private dataset $\cD_{\mathrm{train}}$. This output might be accessible via direct inspection or indirectly through a query interface. Crucially, the auditor lacks access to the full dataset $\cD_{\mathrm{train}}$ and the algorithm's internal mechanics. Instead, the auditor relies on \textit{observational data}: known members $\cD_1 \subset \cD_{\mathrm{train}}$ and known non-members $\cD_0 \subset \cX \setminus \cD_{\mathrm{train}}$. Let $\set{X_1, \ldots, X_m} = \cD_1 \cup \cD_0$ represent the combined data, with true membership indicators defined by $S_i = 1$ if $X_i \in \cD_1$ and $S_i = -1$ if $X_i \in \cD_0$. Unlike traditional randomized experiments, the membership variables $S_i$ are not centered conditionally on $X_i$; rather, $S_i$ inherently depends on the data features $X_i$, as $\cD_0$ and $\cD_1$ may exhibit a distribution shift.
    \looseness=-1
    \item \textit{\textbf{(Membership Inference)}} 
    Leveraging the accessible algorithm output $\theta$ and the features $X_i$ from the observational data, the auditor runs a Membership Inference Attack (MIA). This step produces raw membership guesses $T_i$ for each $X_i$. Formally, we abstract the MIA as a mapping $(\theta, (X_i)_{i \in [m]}) \mapsto T \in \set{-1, 0, 1}^m$, where $T_i=0$ denotes abstention. Further details on the MIA meta-algorithm are provided in \Cref{alg:mia} (\Cref{app:mia}).
    \looseness=-1
    \item \textit{\textbf{(Propensity Estimator)}} To correct bias in raw MIA estimates caused by distribution shifts between members and non-members, we compute an estimated \textit{propensity score} $\hat{\pi}(X_i)$, which approximates the \textit{true propensity} $\pi$ defined as follows.
    \begin{definition}[Propensity score]\label{def:propensity}
        \begin{equation*}
            \pi: x\in\cX \rightarrow \P(S_i = 1 \mid X_i = x)\in[0,1]
        \end{equation*}
    \end{definition}
    By capturing the inherent probability that a data point from the audit data belongs to the training set based solely on its features, the propensity score isolates structural selection likelihood from the signal contained in the algorithm's output.
    \Cref{sec:propensity_estimation} details this estimation and its induced uncertainty propagation.
    \looseness=-1
    \item \textit{\textbf{(Zero-Run Correction)}}
    Finally, we combine the true labels $S_i$, the raw MIA guesses $T_i$, and the propensity scores $\hat{\pi}(X_i)$ to produce a statistically valid audit evaluation. This correction stage effectively disentangles the algorithm's actual privacy leakage from the underlying data distribution shift, and constitutes our main theoretical contribution (\Cref{sec:worstcase_compo,sec:conditional}).  
    The correction can be applied \textit{compositionally} (yielding global bounds) or \textit{conditionally} (yielding pointwise bounds), outputting valid empirical DP lower bounds.
\end{enumerate}

This four-step meta-algorithm forms the backbone of our Zero-Run auditing framework. 

\paragraph{Observational Datasets.}

To ensure a provably valid audit, we require a specific assumption on the observed datasets $\cD_1$ and $\cD_0$.

\begin{assumption}[Data Generation]\label{hyp:data_generation}
The pairs $(X_i,S_i)_{i\in[m]}$ are i.i.d.; the member and non-member datasets $\cD_1=\set{X_i:S_i=1}$ and $\cD_0=\set{X_i:S_i=-1}$, of respective sizes $n_1$ and $n_0$, consist of independent samples drawn from distributions $\P_1$ and $\P_0$. 
\end{assumption}
There are several strategies for sourcing members and non-members, depending on the use case.

\textbf{Member dataset $\cD_1$.}  This dataset can be any subset of the training set of the audited model, such as a random subsample or a curated subset selected to maximize potential privacy leakage. Even when the complete training dataset is unknown to the auditor, it is often possible to identify \emph{some} training examples. For instance, when auditing large-scale models, $\cD_1$ may consist of public records, historical archives, or other corpora known to have been included in the training data. For instance, Wikipedia pages published before an LLM's training date can serve as known members \citep{shi2023detecting,meeus2024did,meeus2024sok}.

\textbf{Non-member dataset $\cD_0$.} 
The most straightforward scenario occurs when a randomized test split is available, yielding a dataset $\cD_0 \subset \cX \setminus \cD_{\mathrm{train}}$ with zero distribution shift ( $\P_1=\P_0$). However, such randomized splits are often unavailable for foundation models, whose training data are typically fixed, massive, and only partially known. In this case, alternative construction strategies are necessary. Following empirical evaluations for LLMs \citep{meeus2024sok}, one approach is to use data generated \textit{after} the algorithm's execution timestamp, ensuring $\cD_0$ is strictly excluded; however, a distribution shift ($\P_1 \neq \P_0$) is naturally expected due to temporal changes. Alternatively, one can synthesize non-members from a continuous approximation of the member distribution by training a generative model on $\cD_1$ and sampling from it, yielding non-members with
a close distributional match ($\P_0 \approx \P_1$), as in \pano \citep{kazmi2024panoramia} (further discussions in related works, \Cref{sec:related}). Our framework seamlessly integrates these varied choices of $\cD_0$.

\begin{remark}\label{remark:not_restricted_to_DP}
We emphasize that, as in \citep{steinke2023privacy, mahloujifar2024auditing}, our meta-algorithm does not restrict the auditor to algorithms that formally satisfy DP. Because our framework treats $\cA$ as a black box and evaluates privacy leakage solely from its outputs and observational data, it can be applied to arbitrary algorithms, regardless of whether they were designed to provide formal privacy guarantees.
\end{remark}

\section{Global Correction via Adaptive Composition}
\label{sec:worstcase_compo}

If $\P_1=\P_0$, membership is independent of the features $X_i$, so existing One-Run audits \citep{steinke2023privacy,mahloujifar2024auditing} apply directly: the observed member/non-member labels behave like randomized inclusion bits. Under distribution shift ($\P_1\neq \P_0$), however, the features themselves carry information about the membership bits $S_i$, allowing membership to be inferred with lower error than the fundamental limit imposed by DP (see Appendix~\ref{app:role_propensity}). An auditor may therefore attribute membership information arising from distribution shift to the algorithm's output, leading to an inflated estimate of privacy leakage.

More concretely, consider a constant, data-independent mechanism, which satisfies $0$-DP and therefore leaks no privacy. Suppose the known members are books published before 2020, while the known non-members are books published after 2023. A classifier can infer membership from the temporal distribution shift (e.g., changes in vocabulary or topics), which formally corresponds to a non-constant propensity score function $\pi$ (\Cref{def:propensity}), even though the model output contains no information about membership. A naive post-hoc audit that ignores this distribution shift would therefore falsely reject the true $0$-DP hypothesis because membership can be predicted better than chance.

\paragraph{Distribution Shift as Adaptive Composition.}
We account for this baseline ``distribution-shift leakage'' by modeling the auditor's observations---the audit dataset $(X_i)_{i\in[m]}$ and the output $\theta = \cA(\dtrain)$---as the composition of two randomized mechanisms: a \textit{distribution-shift mechanism} $\mathcal M_{\mathrm{DS}}$, which generates the audit data $(X_i)_{i\in[m]}$ from the membership bits, and the algorithm $\mathcal A$, which outputs $\theta=\mathcal A(\mathcal D_{\mathrm{train}})$. This composition view lets us use DP composition to separate spurious leakage due to distribution shift from the algorithm's actual privacy leakage.

Formally, let $\cM_\mathrm{Tot}$ be a virtual global randomized mechanism acting on a private membership vector $s \in \set{-1, 1}^m$. The auditor's available observations are then the output of two composed mechanisms applied to the hidden bits: $\cM_\Tot(s) = \big( \cM_\DS(s), \, \cM_\cA(\cM_\DS(s), s) \big)$, where :
\begin{align*}
    \textstyle
    \cM_\DS&(s) \sim \prod_{i=1}^m \big( \one_\set{s_i=1} \cP_1 + \one_\set{s_i=-1} \cP_0 \big)\,,\\
    &\cM_\cA(x,s) = \cA \big( \cD \cup \set{x_i : s_i=1} \big) \,,
\end{align*}
with $\cD = \dtrain \setminus \cD_1$. $\cM_\DS$ samples the audit examples from the member or non-member distribution according to their membership bits, and $\cM_\cA$ runs the algorithm on the examples whose bits indicate membership. We have the following result.
\begin{theorem}\label{thm:composition}
    Assume that \Cref{hyp:data_generation} holds, that $\cA$ is $f$-DP and that $\cM_{\DS}$ is $f_{\DS}$-DP.
    Then $\cM_\Tot$ is $f_{\Tot}$-DP with  $f_{\Tot} = f \otimes f_{\DS}\,,$ where $\otimes$ is the tensor product between trade-off functions \citep[Sec. 3]{Dong22GDP}.
\end{theorem}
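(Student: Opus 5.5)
The plan is to reduce the statement to the adaptive composition theorem for $f$-DP of \citet[Thm.~3.2]{Dong22GDP}, recalled in the background: if $\cM_1$ is $f_1$-DP and, for each fixed first output $y$, the map $s\mapsto \cM_2(y,s)$ is $f_2$-DP, then $s\mapsto(\cM_1(s),\cM_2(\cM_1(s),s))$ is $f_1\otimes f_2$-DP. Here the private input is the membership vector $s\in\set{-1,1}^m$, and two inputs $s,s'$ are adjacent when they differ in a single coordinate $i$. The first mechanism is $\cM_\DS$, which is $f_\DS$-DP by hypothesis with respect to this adjacency. The second is $\cM_\cA(x,s)$. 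Since $\cM_\Tot(s)=(\cM_\DS(s),\cM_\cA(\cM_\DS(s),s))$ has exactly the adaptive-composition form, only the second hypothesis needs to be checked.

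For this check, fix $x=(x_1,\dots,x_m)$ and adjacent $s,s'$ that differ only at coordinate $i$, say $s_i=1$ and $s'_i=-1$. The training sets are $\cD\cup\set{x_j:s_j=1}$ and $\cD\cup\set{x_j:s'_j=1}$, and they differ exactly by the single record $x_i$ (add/remove adjacency). Because $\cA$ is $f$-DP, $T_{\cM_\cA(x,s),\cM_\cA(x,s')}\ge f$, and by symmetry of the definition over ordered adjacent pairs the same bound holds in the other direction. So $s\mapsto\cM_\cA(x,s)$ is $f$-DP uniformly in $x$, which is exactly the adaptive hypothesis. Applying the composition theorem gives that $\cM_\Tot$ is $f_\DS\otimes f$-DP, and commutativity of $\otimes$ (Dong et al., Prop.~3.1) yields $f\otimes f_\DS$.

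Two technical points need care. First, \Cref{hyp:data_generation} is what makes $\cM_\DS$ a well-defined product mechanism on $s$: conditionally on the bits, the $X_i$ are independent with law $\cP_1$ or $\cP_0$, so the real observational process coincides in law with $\cM_\Tot(S)$. This is also where the realized $\cD_1$ is identified with $\set{x_i:s_i=1}$, so that $\cM_\cA(\cM_\DS(S),S)$ matches $\theta=\cA(\dtrain)$ in distribution. Second, there is a subtlety in adjacency. The DP definition of $\cA$ is stated for datasets differing in one sample, which covers the add/remove case above. If the paper intends replacement adjacency, flipping $s_i$ still changes the training set by inclusion or exclusion of one point, so the add/remove guarantee is the relevant one. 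I expect the main obstacle to be this modeling step, namely stating precisely the joint law identification and the conditional independence of $\theta$ from $(X_j)_{j\ne i}$ beyond its dependence through $\cA$. Once that is settled, the result is a direct invocation of adaptive composition.
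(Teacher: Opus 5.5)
Your proposal is correct and follows the paper's proof. Both identify the observations with $\cM_\Tot(S)$ via \Cref{hyp:data_generation}, note that for each fixed $x$ the map $s\mapsto\cM_\cA(x,s)$ is $f$-DP because flipping $s_i$ adds or removes $x_i$, and then invoke the adaptive composition theorem of \citet[Thm.~3.2]{Dong22GDP}; you are in fact more explicit than the paper, which leaves the per-$x$ check and the commutativity of $\otimes$ implicit.
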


\begin{remark}
    In \Cref{thm:composition}, $\cA$ takes data points in $\cX$ as inputs, and its DP guarantee is defined under \emph{Add/Remove} adjacency. In contrast,
    $\cM_{\Tot}$ and $\cM_{\DS}$ take membership bits in $\set{-1,1}$ as input, so their DP guarantees are defined under \emph{Replace} adjacency.
    \emph{Replace} adjacency for membership bits is however equivalent to \emph{Add/Remove} adjacency for data points.
\end{remark}

\paragraph{The Overlap Assumption.}
\Cref{thm:composition} accounts for distribution shift in observed privacy leakage by formalizing how leakage from the algorithm $\cA$ combines with leakage from the audit data. However, the $f_{\DS}$-DP guarantee becomes vacuous under severe distribution shift: if the features $X_i$ completely reveal membership, the underlying trade-off function $T_{\P_0, \P_1}$ approaches zero, effectively masking the algorithmic leakage of $\cA$ and yielding uninformative lower bounds. We therefore require the likelihood ratio between $\P_0$ and $\P_1$ to be bounded almost surely, which motivates our next overlap assumption.
Recall that $\pi(x) = \P(S_i = 1 \mid X_i = x)$ denotes the \textit{propensity score} function (\Cref{def:propensity}).

\begin{assumption}[Approximate overlap]
    \label{hyp:approx_overlap}
    There exists $\eta\in(0,1/2)$ and $\delta_\DS\in(0,1)$ such that $\P_1(\pi(X_i)\in [\eta,1-\eta])\geq 1-\delta_\DS$ and $\P_0(\pi(X_i)\in [\eta,1-\eta])\geq 1-\delta_\DS$.
\end{assumption}
In the i.i.d. setting ($\P_1=\P_0$), the propensity score is a constant function (equal to the proportion of members), so that Assumption~\ref{hyp:approx_overlap} holds (with $\eta=0.5$ in the One-Run regime, where members and non-members are balanced).
Under arbitrary distribution shift however, \Cref{hyp:approx_overlap} may not hold, and \Cref{prop:impossibility} (\Cref{app:impossibility}) shows that privacy auditing is impossible.

Under the overlap assumption, valid post-hoc auditing becomes feasible: we bound the trade-off function of $\cM_\DS$  and apply standard One-Run auditing to $\cM_\Tot$. Any leakage beyond that attributable to $\cM_\DS$ is then attributed to $\cA$, yielding valid empirical lower bounds.

\begin{corollary}
\label{cor:worst-case_audit}
Under \Cref{hyp:data_generation,hyp:approx_overlap}, let $\eps,\delta,\mu>0$, $\bar\eps_\DS=\log(\frac{1-\eta}{\eta})$, and $\bar\mu_\DS=\Phi^{-1}(\frac{e^{\bar \eps_\DS}}{1+e^{\bar \eps_\DS}})-\Phi^{-1}(\frac{1}{1+e^{\bar \eps_\DS}})$.
Then,
\textbf{(1)}
if $\cM_\Tot$ is not $(\eps+\bar\eps_\DS,1-(1-\delta)(1-\delta_\DS))$-DP, then $\cA$ is not $(\eps,\delta)$-DP; \textbf{(2)}
if $\cM_\Tot$ is not $G_{\sqrt{\mu^2+\bar\mu_\DS^2}}\otimes f_{0,\delta_\DS}$-DP, then $\cA$ is not $\mu$-GDP.
\end{corollary}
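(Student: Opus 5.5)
Both parts are contrapositives of \Cref{thm:composition}. If $\cA$ were $(\eps,\delta)$-DP (resp. $\mu$-GDP), then $\cM_\Tot$ would be $f\otimes f_\DS$-DP, and we show this implies the guarantee stated in the corollary. The plan has three steps: bound the trade-off function of $\cM_\DS$ under \Cref{hyp:approx_overlap}, compose it with the assumed guarantee on $\cA$, and then simplify the tensor product.

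\textbf{Step 1: privacy of $\cM_\DS$.} Under Replace adjacency, two neighbouring bit vectors $s,s'$ differ in a single coordinate $i$. Since $\cM_\DS$ samples the coordinates independently, the other coordinates have the same law under $s$ and $s'$ and contribute nothing. The trade-off function is therefore $T_{\P_1,\P_0}$ (or its symmetric counterpart). By Bayes,
\[
\frac{d\P_1}{d\P_0}(x)=\frac{\pi(x)}{1-\pi(x)}\cdot\frac{1-p}{p},\qquad p=\P(S_i=1).
\]
I will treat the balanced case $p=1/2$, which is implicit in the stated $\bar\eps_\DS$. On the event $\{\pi(X)\in[\eta,1-\eta]\}$ the log-likelihood ratio lies in $[-\bar\eps_\DS,\bar\eps_\DS]$, and this event has mass at least $1-\delta_\DS$ under both $\P_0$ and $\P_1$. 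A standard decomposition then writes each $\P_j$ as a mixture $(1-\delta_\DS)\tilde\P_j+\delta_\DS R_j$, where $\tilde\P_0,\tilde\P_1$ are the conditionals on the good event and have a pointwise bounded ratio. This gives $f_\DS\ge f_{\bar\eps_\DS,0}\otimes f_{0,\delta_\DS}$, i.e.\ the mechanism is $\bar\eps_\DS$-DP with an additional $\delta_\DS$ failure.

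This step is the main obstacle. A mixture of this kind yields $(\bar\eps_\DS,\delta_\DS)$-DP via the standard conditioning argument, but expressing it as a tensor product with $f_{0,\delta_\DS}$ needs care. I would first try the lemma of Kairouz et al./Dong et al.: an $(\eps,\delta)$-DP mechanism is the post-processing of $f_{\eps,0}\otimes f_{0,\delta}$, so $f_{\eps,\delta}\ge f_{\eps,0}\otimes f_{0,\delta}$.

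\textbf{Step 2: part (1).} Suppose $\cA$ is $(\eps,\delta)$-DP, i.e.\ $f_{\eps,\delta}$-DP. By \Cref{thm:composition}, $\cM_\Tot$ is $f_{\eps,\delta}\otimes f_{\bar\eps_\DS,\delta_\DS}$-DP. The standard basic composition of $(\eps,\delta)$ guarantees in the $f$-DP calculus gives
\[
f_{\eps_1,\delta_1}\otimes f_{\eps_2,\delta_2}\ \ge\ f_{\eps_1+\eps_2,\,1-(1-\delta_1)(1-\delta_2)}.
\]
With $\eps_1=\eps$, $\delta_1=\delta$, $\eps_2=\bar\eps_\DS$, $\delta_2=\delta_\DS$, this is exactly the parameter pair $(\eps+\bar\eps_\DS,\,1-(1-\delta)(1-\delta_\DS))$ in part (1). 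Taking the contrapositive gives (1).

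\textbf{Step 3: part (2).} Suppose $\cA$ is $G_\mu$-DP. Pure $\bar\eps_\DS$-DP, i.e.\ $f_{\bar\eps_\DS,0}$, is the trade-off function between $\mathrm{Bern}(q)$ and $\mathrm{Bern}(1-q)$ with $q=e^{\bar\eps}/(1+e^{\bar\eps})$. The value $\bar\mu_\DS$ is chosen so that $G_{\bar\mu_\DS}$ passes through the corner point of $f_{\bar\eps_\DS,0}$. Because $G_{\bar\mu_\DS}$ is convex and symmetric and agrees with the piecewise-linear $f_{\bar\eps_\DS,0}$ at the endpoints and at the kink, the two-point randomized response is a post-processing of the Gaussian pair. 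Hence $f_{\bar\eps_\DS,0}\ge G_{\bar\mu_\DS}$. Combining with Step 1 and using that $\otimes$ is commutative, associative, and monotone, we get $f_\DS\ge G_{\bar\mu_\DS}\otimes f_{0,\delta_\DS}$. Then, since $G_\mu\otimes G_{\bar\mu_\DS}=G_{\sqrt{\mu^2+\bar\mu_\DS^2}}$, \Cref{thm:composition} yields
\[
\cM_\Tot\ \text{is}\ G_{\sqrt{\mu^2+\bar\mu_\DS^2}}\otimes f_{0,\delta_\DS}\text{-DP}.
\]
The contrapositive gives (2).
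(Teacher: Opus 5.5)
Your proposal is correct and follows the paper's proof. The paper also bounds $\cM_\DS$ by $(\bar\eps_\DS,\delta_\DS)$-DP via \Cref{prop:DP_of_M_DS}, composes with \Cref{thm:composition}, and simplifies using the Dong et al.\ rules $f_{\eps,\delta}\otimes f_{\eps',\delta'}\ge f_{\eps+\eps',1-(1-\delta)(1-\delta')}$ (the paper writes this as an equality, but only your inequality holds in general) and $G_\mu\otimes G_{\mu'}=G_{\sqrt{\mu^2+{\mu'}^2}}$. Beyond the paper, you also justify $f_{\bar\eps_\DS,0}\ge G_{\bar\mu_\DS}$ and flag the balanced-prior assumption, both of which the paper leaves implicit.

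One small repair is needed in Step 1. The normalized conditionals on the good event only satisfy a likelihood-ratio bound of $e^{\bar\eps_\DS}/(1-\delta_\DS)$, not $e^{\bar\eps_\DS}$. Instead, use the unnormalized restriction to get $\P_1(\cE)\le e^{\bar\eps_\DS}\P_0(\cE)+\delta_\DS$ (and symmetrically), then apply the exact identity $f_{\bar\eps_\DS,\delta_\DS}=f_{\bar\eps_\DS,0}\otimes f_{0,\delta_\DS}$. This is precisely your fallback.
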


\paragraph{Zero-Run Auditing via Adaptive Composition.}
According to \Cref{thm:composition}'s compositional view, a valid Zero-Run audit of hypothesis $\cH$ for $\cA$ can be obtained by reducing it to a One-Run audit of $\cM_\Tot$ under a deflated hypothesis $\cH'$. Specifically, if the One-Run audit rejects $\cH'$ with probability $1-p$, then the resulting Zero-Run audit rejects $\cH$ with some probability $1-p'$.
\Cref{cor:worst-case_audit} gives two practical instances of this reduction. If $\cM_\Tot$ violates the deflated hypothesis $\cH'$ below, then $\cA$ violates the target hypothesis $\cH$:
\begin{itemize}
    \item For \textit{$(\eps,\delta)$-DP Auditing:} $\cH'$ is ``$\cM_\Tot$ is $(\eps+\bar\eps_\DS,1-(1-\delta)(1-\delta_\DS))$-DP'', $p'=p$, and the audit alogorithm is provided by \citet{steinke2023privacy}; 
    \item For \textit{$\mu$-GDP Auditing:} $\cH'$ is ``$\cM_\Tot$ is $(\mu+\mu_\DS)$-GDP'', $p'=p+\delta_\DS$, and the audit algorithm is provided by \citet{mahloujifar2024auditing}. 
\end{itemize}
The resulting audit algorithms (see \Cref{app:worst-case_bounds_algo_eps_delta,app:worst-case_bounds_algo_GDP} for pseudo-code) are however conservative, as they assign worst-case distribution-shift leakage uniformly across all data points, even when propensity scores are highly heterogeneous. Specifically, modeling distribution shift globally through $\mathcal{M}_{\DS}$ imposes a pessimistic, worst-case DP leakage:
\begin{equation*}
    \bar\eps_\DS=\sup_{x \in \mathcal{X}} \eps_\DS(x)\,,\quad\text{where}\quad \eps_\DS(x) = \left| \log \frac{\pi(x)}{1-\pi(x)} \right|\,.
\end{equation*} This supremum reduces the audit's statistical power because the local leakage $\eps_\DS(X_i)$ for most observed data points $(X_i)_{i\in[m]}$ may be substantially smaller than $\bar\eps_\DS$. For example, if we can only guarantee $\pi(x)\in[0.2,0.8]$, the resulting global penalty is $\bar\eps_\DS\approx 1.39$, even if most data points have propensity scores close to $0.5$.
We next address this limitation by \emph{conditioning} on the observed examples and adjusting each membership guess using its corresponding propensity score.

\section{A Pointwise Correction via Conditional Auditing}
\label{sec:conditional}

\looseness=-1 
To overcome the above worst-case penalty, we leverage \emph{pointwise leakage}.
Conditioned on an observation $X_i=x_i$, the total membership leakage, quantified as the log-odds ratio of membership likelihoods given $X_i$ and the model $\theta$, can be decomposed as the sum of a distribution shift-only leakage term and a model-induced privacy leakage term:
\begin{equation*}
    \begin{aligned}
        &\overbrace{\log\left(\frac{\proba{S_i=1|X_i=x_i,\theta}}{\proba{S_i=-1|X_i=x_i,\theta}} \right)}^{\text{Total pointwise membership leakage}}\\
        &=\underbrace{\log\left( \frac{\pi(x_i)}{1-\pi(x_i)} \right)}_{\leq\eps_\DS(x_i)} + \underbrace{\log\left(\frac{p(\theta|X_i=x_i,S_i=1)}{p(\theta|X_i=x_i,S_i=-1)} \right)}_{\text{Model-induced membership leakage}}
    \end{aligned}
\end{equation*}
Thus, the distribution shift's contribution to inferring $S_i$ is exactly the local log-odds $\eps_{\DS}(x_i)$.
For an $\eps$-DP algorithm, the model-induced membership leakage is upper-bounded by $\eps$, so the combined evidence can increase the correct membership odds by at most $\eps+\eps_{\DS}(x_i)$. This in turn bounds the probability of a correct membership guess ($T_i=S_i$) by
\begin{equation*} 
    \textstyle
    p_i(\eps)=\frac{e^{\eps+\eps_\DS(X_i)}}{1+e^{\eps+\eps_\DS(X_i)}}\,.
\end{equation*}
Thus, on data points with high distribution-shift leakage (large $\eps_\DS(X_i)$), MIAs will thus be overconfident even when the model itself leaks little privacy. We next develop our conditional auditing approach based on this observation.

\paragraph{Tampered Membership Guess.}
To leverage pointwise leakage, our conditional audit bounds the success rate of an independently \emph{tampered} MIA. Intuitively, tampering acts as \textit{randomized abstention}: 
a correct guess on a record with a highly imbalanced propensity score (e.g., $\pi(x)=0.99$) provides little evidence of algorithmic leakage, since the record was already highly likely to be identified as a member from $x$  alone. The audit therefore retains this guess only with a small probability. In contrast, a correct guess with $\pi(x)\approx 0.5$ is more informative and is retained with high probability.

Formally, given initial MIA predictions $(T_i)_{i\in[m]} \in \set{-1,0,1}^m$, we define the post-processed, tampered predictions as $T_i' = B_i T_i$, where $(B_i)_{i\in[m]} \in \set{0,1}^m$ are independent Bernoulli variables with parameter $b_i$, conditioned on $X_i$. Since our probabilistic bounds condition on the observed features $(X_i)_{i\in[m]}$, this procedure yields a \emph{conditional} audit rather than a global worst-case one.

\begin{theorem}[Conditional Auditing]
\label{thm:conditional_audit_tampered}
    Assume \Cref{hyp:data_generation} holds and the attacker makes at most $ r\in[m]$ guesses almost surely ($\#\set{i:T_i\ne0} \leq r$). Let $(B_i)_{i\in[m]}$ be independent Bernoulli variables with parameters $(b_i)_{i\in[m]}$, independent of all other variables.
    \begin{enumerate}
        \item \label{thm:conditional_audit_tampered:eps_delta}\textbf{$(\eps,\delta)$-DP.}
        If $\cA$ is $(\eps,\delta)$-DP and $b_i\leq\frac{1+e^{-\eps-\eps_\DS(X_i)}}{1+e^{-\eps}}$, then for any $v>0$:
        \begin{equation}\label{eq:bound_thm_eps_delta}
            \textstyle
            \begin{aligned}
                &\proba{\sum_{i=1}^m B_i \one_\set{T_i=S_i} \geq v \Big| (X_i)_{i\in[m]}} \\
                &\leq \proba{Z \geq v} + \alpha_{r,\eps}(v) m\delta(1+e^{-\eps})\,,
            \end{aligned}
        \end{equation}
        where $Z \sim \mathrm{Binom}\big(r , \frac{e^\eps}{1+e^\eps}\big)$ and $\alpha_{r,\eps}(v) = \sum_{i\in[v]}\frac{1}{i} \proba{Z = v-i}$.

        \item \label{thm:conditional_audit_tampered:fDP}\textbf{$f$-DP.}
        If $\cA$ is $f$-DP and $b_i\leq e^{-\eps_\DS(X_i)}$, then for any subset $T\subset[r]$:
        \begin{equation}\label{eq:bound_thm_fdp}
        \textstyle
            \sum_{k\in T}\frac{kp_k}{m} \leq \bar f\bigg( \sum_{k\in T} \frac{r-k+1}{m}p_{k-1} \bigg)\,,
        \end{equation}
        where $p_k\!=\!\proba{\sum_{j=1}^m \!\!B_j\one_\set{S_j=T_j} \!=\! k \big|(X_i)_{i\in[m]}}$ is the probability of $k$ correct tampered guesses.
    \end{enumerate}
\end{theorem}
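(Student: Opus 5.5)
The plan is to reduce to the One-Run analyses of \citet{steinke2023privacy} (for $(\eps,\delta)$-DP) and \citet{mahloujifar2024auditing} (for $f$-DP), but carried out \emph{conditionally} on $(X_i)_{i\in[m]}$.

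\textbf{Conditional structure.} Under \Cref{hyp:data_generation}, conditionally on $X=(X_i)_{i\in[m]}$ the bits $S_i$ are independent, with $\proba{S_i=1\mid X}=\pi(X_i)$. This is no longer the uniform prior of One-Run auditing; it is a prior with log-odds bounded by $\eps_\DS(X_i)$. The output $\theta$ depends on $S$ only through $\cA(\cD\cup\set{X_i:S_i=1})$. So, conditionally on $X$ and $S_{-i}$, the pair of laws of $\theta$ under $S_i=\pm1$ is a pair of adjacent-dataset output distributions. It is therefore $(\eps,\delta)$-indistinguishable, respectively $f$-indistinguishable.

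\textbf{Key one-step lemma.} For each $i$, and conditionally on $X$, on the history $(S_{<i},T'_{<i})$ and on the event $T_i\neq0$, I would bound $\proba{B_iT_i=S_i\mid\cdots}$.
\begin{itemize}
\item For $(\eps,\delta)$-DP, Bayes' rule with prior odds $e^{\pm\eps_\DS(X_i)}$ and likelihood ratio bounded by $e^{\eps}$ (up to $\delta$) gives $\proba{T_i=S_i\mid\cdots}\le p_i(\eps)+\delta$-terms.
\item Multiplying by $b_i$ and using $b_i\le\frac{1+e^{-\eps-\eps_\DS}}{1+e^{-\eps}}$ gives $b_i\,p_i(\eps)\le\frac{e^\eps}{1+e^\eps}$. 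To verify this, note $p_i(\eps)=\frac{1}{1+e^{-\eps-\eps_\DS}}$, so the product is $\frac{1}{1+e^{-\eps}}$.
\end{itemize}
Hence each tampered guess succeeds with conditional probability at most $\frac{e^\eps}{1+e^\eps}$ plus a $\delta$-slack. This is exactly the per-step bound that drives Steinke et al.'s stochastic-dominance argument.

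\textbf{Part 1.} I would then replay their inductive coupling argument. Process coordinates sequentially, show that $\sum_iB_i\one_{T_i=S_i}$ is stochastically dominated by $\mathrm{Binom}(r,\frac{e^\eps}{1+e^\eps})$ up to the additive error, and track the $\delta$ contributions. This yields the term $\alpha_{r,\eps}(v)m\delta(1+e^{-\eps})$. That term arises verbatim from their proof once the per-step inequality holds with the same constants. The abstentions ($T_i=0$) and the bound $r$ on guesses are handled exactly as in their analysis.

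\textbf{Part 2.} I would follow the reconstruction-game proof of \citet{mahloujifar2024auditing}. Their key step compares, for each $i$, the event ``$k$ correct guesses including $i$'' with the same event under $S_i$ flipped, using $\proba{\cE\mid S_i=1}\le\bar f(\proba{\cE\mid S_i=-1})$. Summing over $i$ and using concavity of $\bar f$ gives \eqref{eq:bound_thm_fdp}.
\begin{itemize}
\item With a non-uniform prior, flipping $S_i$ reweights the probabilities by the odds ratio $\pi(X_i)/(1-\pi(X_i))$, which is at most $e^{\eps_\DS(X_i)}$.
\item The tampering factor $b_i\le e^{-\eps_\DS(X_i)}$ is meant to absorb exactly this reweighting: it scales the ``correct'' side down by enough that the symmetric-prior inequality survives.
\end{itemize}
Concretely, I would write $p_k$ as a sum over configurations and pair each configuration with its $i$-flip. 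I would then verify that $b_i\,\pi(X_i)$-weighted mass on the correct side is dominated by the $(1-\pi(X_i))$-weighted mass on the flipped side, mapped through $\bar f$. Because $\bar f(\lambda t)\ge\lambda\bar f(t)$ for $\lambda\le1$ (concavity with $\bar f(0)\ge0$), the scaling can be pulled outside $\bar f$.

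\textbf{Main obstacle.} I expect this $f$-DP step to be the hardest. The non-uniform prior interacts with $\bar f$ nonlinearly, and the counting that produces the combinatorial coefficients $k$ and $r-k+1$ must be checked to survive the reweighting. The $(\eps,\delta)$ part is essentially a one-line Bayes computation plugged into an existing proof.
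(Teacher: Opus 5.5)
\textbf{Part 1 has a genuine gap in the $\delta$-accounting.} You claim that $\alpha_{r,\eps}(v)\,m\delta(1+e^{-\eps})$ arises verbatim from Steinke et al.'s proof once the per-step inequality holds. It does not. Their approximate-DP theorem gives an error term $2m\delta\max_{i\in[m]}\frac{1}{i}\proba{v-i\le Z<v}$. That term differs from the statement's in two ways: it takes a maximum of averaged tail masses rather than the weighted point masses $\sum_{i\in[v]}\frac1i\proba{Z=v-i}$, and its constant is $2$ rather than $1+e^{-\eps}$. The two bounds are not comparable, and the stated one can be strictly stronger. For $r=v=1$, the statement gives $\frac{e^\eps}{1+e^\eps}+e^{-\eps}m\delta$, while replaying their argument gives $\frac{e^\eps}{1+e^\eps}+\frac{2m\delta}{1+e^\eps}$, which is larger for every $\eps>0$. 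So even with no shift ($b_i\equiv 1$), your route does not prove the statement as written. Separately, the per-step bound $\proba{T_i=S_i\mid\cdots}\le p_i(\eps)+\delta$-terms does not hold as stated. Under $(\eps,\delta)$-DP there is no pointwise likelihood-ratio bound, and once you condition on the history, the additive $\delta$ is divided by the probability of the conditioning event.

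\textbf{What is missing is a decomposition along level sets of the privacy loss random variable.}
\begin{enumerate}
\item On $\set{|\PLRV|\le\eps}$ and on each $\set{|\PLRV|=\gamma'}$ with $\gamma'>\eps$, the likelihood ratio is controlled exactly. Bayes with prior $\pi(X_i)$ then gives success probability at most $\frac{1}{1+e^{-\gamma'-\eps_\DS(X_i)}}$.
\item The map $\gamma'\mapsto\frac{1+e^{-\gamma'-\eps_\DS}}{1+e^{-\gamma'}}$ is nondecreasing. Hence your choice of $b_i$ caps the tampered success at $\frac{1}{1+e^{-\gamma'}}$ on every level, not only at level $\eps$.
\item Couple the Bernoulli$(\frac{1}{1+e^{-\gamma'}})$ variables monotonically through a common uniform. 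Dominate the tampered count by $A_m+B_m'$, where $A_m\sim\mathrm{Binom}(r,\frac{e^\eps}{1+e^\eps})$ and $\esp{B_m'\mid A_m}\le m\sup_{\cD\sim\cD'}\esp{(1-e^{\eps-|\PLRV|})_+}$.
\item Bound this expectation by $(1+e^{-\eps})\delta$ using a change of measure on the side $\PLRV\le-\eps$. This is where the constant $1+e^{-\eps}$ comes from.
\item Finish with a conditional Markov inequality given $A_m$: $\proba{v>A_m\ge v-B_m'}\le m\delta'\,\esp{\one_{\set{A_m<v}}/(v-A_m)}$. This is exactly what produces $\sum_{i\in[v]}\frac1i\proba{Z=v-i}$.
\end{enumerate}

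\textbf{Part 2 is essentially the paper's argument.} The paper first proves a per-coordinate lemma, $\proba{\cA(\cD)\in\cE,\,T_1'=S_1\mid X}\le f'(\proba{\cA(\cD)\in\cE\mid X})$ with $f'(t)=\sup\set{\alpha:\alpha+f(t-\alpha)\le 1}$. Write $q=\proba{\cA(\cD)\in\cE\mid X}$ and $p'=\proba{\cA(\cD)\in\cE,\,T_1'=S_1\mid X}$. The lemma is proved in three steps:
\begin{enumerate}
\item Apply $f$-DP in both directions and merge by convexity of $f$ into $\bar f(e^{\eps_\DS(X_1)}\proba{\cA(\cD)\in\cE,\,T_1\ne S_1\mid X})$.
\item Use $\set{T_1'=S_1}\subset\set{T_1=S_1}$ to bound the incorrect mass by $q-p'$.
\item Absorb the factor with exactly your inequality $e^{-\eps_\DS}\bar f(e^{\eps_\DS}s)\le\bar f(s)$.
\end{enumerate}
After that, Mahloujifar et al.'s permutation and counting argument goes through unchanged. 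Your difficulty assessment is therefore inverted: Part 2 is the direct adaptation, and the new work lies in Part 1's $\delta$-term.
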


\Cref{thm:conditional_audit_tampered} shows that the number of correct guesses retained \textit{after tampering} is stochastically controlled by the standard benchmark for One-Run auditing without distribution shift, for both $(\eps,\delta)$-DP and $f$-DP formulations.
This allows us to pay only \textit{pointwise} propensity penalties via the tampering parameters $b_i$, rather than the global penalty of the adaptive composition approach of \Cref{sec:worstcase_compo}.
The cost of distribution shift can be interpreted as a reduction in the audit's \textit{effective sample size}, from $m$ in the absence of distribution shift to $\sum_i b_i$ after pointwise correction.

Under $(\eps,\delta)$-DP, the pointwise tampering parameters $b_i$ are naturally chosen so that the success probability of the event $\set{T_i'=S_i}$ is $\frac{e^\eps}{1+e^\eps}$, by noticing that the probability of this event is bounded by $p_i(\eps)b_i$. 
For $f$-DP, the corresponding $b_i$ is obtained as the limit of the $(\eps,\delta)$-DP tampering parameters when $\eps\to\infty$.
In both cases, in the absence of distribution shift we have $B_i=1$ almost surely, recovering the standard One-Run results \citep{steinke2023privacy,mahloujifar2024auditing}.

\begin{figure*}[t]
\centering

\begin{minipage}{0.60\textwidth}
\begin{algorithm}[H]
    \caption{Conditional Zero-Run $f$-DP auditing
    \label{alg:audit_f_DP}
    }
    \begin{algorithmic}[1]
      \STATE \textbf{Inputs:} 
      Instance $\theta=\mathcal{A}(\dtrain)$ of the algorithm $\cA$ to audit; MIA $\cM$; $\cD_1\subset \dtrain$; $\cD_0\subset \cX\setminus\dtrain$; error $p$; $\bar f $; number of guesses $r$; tampering parameters $(b_i)$
      \STATE $\set{X_1,\dots,X_m}\gets\cD_1\cup\cD_0$
      \STATE $S_i\gets 2\one_\set{X_i\in\cD_1} -1 \in\set{-1,1}$
      \STATE Run \Cref{alg:mia} on $(\theta, \cD_1, \cD_0, \cM)$ to obtain $(T_i)$
      \STATE $c\gets \sum_{j=1}^m\mathrm{Bernoulli}(b_i) \one_\set{T_i=S_i}$
      \STATE Run \Cref{alg:f_DP_counts} on $(p,\bar f,r,c)$ to obtain $r[0] + h[0]$
      \STATE \textbf{Return} \textsc{False} \textbf{if} $r[0] + h[0] \geq \frac{r}{m}$
    \end{algorithmic}
\end{algorithm}
\end{minipage}
\hfill
\begin{minipage}{0.38\textwidth}
\begin{algorithm}[H]
    \caption{$f$-DP success counts
    \label{alg:f_DP_counts}
    }
    \begin{algorithmic}[1]
      \STATE \textbf{Inputs:} 
      Error $p$; $\bar f $; number of guesses $r$, number of correct guesses $c$
      \STATE Set $h[k]=r[k]=0$ for $k\in[c]$
      \STATE  $r[c]\gets\frac{pc}{m}$, $h[c]\gets\frac{p(r-c)}{m}$
      \FOR{$k\in\set{c-1,\ldots,0}$}
        \STATE $h[k] \gets \bar f^{-1}(r[k\!+\!1]) $
        \STATE $r[k] \gets r[k\!+\!1] + \frac{k(h[k]-h[k+1])}{r-k}$
      \ENDFOR
      \STATE \textbf{Return} $r[0] + h[0]$
    \end{algorithmic}
\end{algorithm}
\end{minipage}

\end{figure*}

\paragraph{Zero-Run Auditing via Pointwise Corrections.}
For $(\eps,\delta)$-DP, \Cref{eq:bound_thm_eps_delta} directly bounds the right tail of the tampered MIA's success rate under the privacy hypothesis.
Thus, an $(\eps,\delta)$-DP privacy hypothesis can be confidently rejected when the tampered MIA achieves unusually high success, as formalized in \Cref{thm:conditional_audit_eps_delta}.\ref{thm:conditional_audit_tampered:eps_delta}. The corresponding privacy auditing algorithm is given in \Cref{alg:audit_approx_DP_tampered} (\Cref{sec:audit_approx_DP_tampered}).
While the proof of \Cref{thm:conditional_audit_tampered}.\ref{thm:conditional_audit_tampered:eps_delta} (\Cref{app:proof_pure_DP_tampered}) draws inspiration from \citet{steinke2023privacy}, our treatment of $\delta$ differs substantially, using conditioning of privacy loss random variables and their relation to Hockey-stick divergences and $(\eps,\delta)$-DP.
The error term $\alpha_{r,\eps}(v)$ differs slightly from that of \citet{steinke2023privacy}, but is expected to have the same order, $\cO(1/r)$, for reasonable parameter values.

For an $f$-DP privacy hypothesis, we can apply the indirect bound in \eqref{eq:bound_thm_fdp} recursively to subsets of the form $\set{r',r'+1,\ldots,r}$ for $r'\leq r$, thereby upper-bounding the right tail of the tampered MIA's success rate. This yields \Cref{alg:audit_f_DP}, which confidently rejects the $f$-DP hypothesis when the tampered MIA achieves unusually high success rates.
The proof of \Cref{thm:conditional_audit_tampered}.\ref{thm:conditional_audit_tampered:fDP} (\Cref{app:proof_f_DP_cond}) is inspired by \cite{mahloujifar2024auditing}.

\begin{remark}[Propensity-aware MIA]
\label{remark:propensity_aware_MIA}
    MIAs (\Cref{alg:mia}) produce membership guesses $T_i$ by thresholding a (non)member confidence score $c_i$ and retaining the top-$r$ most confident guesses, while abstaining on the rest.
    To improve our conditional audit, \Cref{thm:conditional_audit_tampered} suggests a propensity-aware alternative that directly optimizes the \emph{tampered} MIA success rate: rank guesses by their adjusted confidence $c_i\times b_i$, where $b_i$ is the corresponding tampering probability. This prioritizes examples with stronger algorithmic signal while downweighting those whose membership can be readily inferred from distribution shift.
\end{remark}

\begin{remark}[Extracting empirical privacy parameters]
\label{remark:empirical_privacy_parameters}
   Hypothesis testing yields high-confidence empirical privacy lower bounds. Let $b_\mu=\textsc{False}$ denote \Cref{alg:audit_f_DP} rejecting $\cH_\mu$: ``$\cA$ is $\mu$-GDP''. Because rejecting $\mu$-GDP implies rejecting all $\mu'\leq\mu$, defining $\hat{\mu} = \sup\set{\mu\geq0 \mid b_\mu=\textsc{False}}$ guarantees $\mu^\star\geq\hat{\mu}$ with probability at least $1-p$, given an audit failure probability $p$. A similar search applies to pure $\eps$-DP and monotonic $(\eps,\delta)$-DP hypotheses via \Cref{alg:audit_approx_DP_tampered}.
\end{remark}

\section{Propensity Score Estimation and Uncertainty Quantification}
\label{sec:propensity_estimation}

The propensity scores $\pi(X_i)$ defined in \Cref{def:propensity} (\Cref{sec:zero-run-auditing-framework}) are critical for isolating distribution shift from true algorithmic privacy leakage in our Zero-Run auditing algorithms introduced in \Cref{sec:worstcase_compo,sec:conditional}.
This section details our methodology for estimating them and quantifying their associated uncertainty. 

\paragraph{Validity with Oracle or Conservative Propensity Scores.}
First, our algorithms yield valid DP rejections provided that the estimated scores $\hat\pi$ are either exact or pessimistic, 
as formalized in \Cref{cor:validity_oracle}.
Thus, the auditor only needs a one-sided guarantee: overestimating overlap can invalidate the audit by failing to correct enough, whereas underestimating overlap merely attributes more signal to distribution shift, making the audit more conservative while preserving validity.\looseness=-1
\begin{corollary}\label{cor:validity_oracle}
    If $\min(\hat \pi_i,1-\hat\pi_i)\leq \min( \pi(X_i),1-\pi(X_i))$ for all $i\in[m]$ and if the composition or conditional auditing algorithm returns \textsc{False} for $(\eps,\delta)$-DP (respectively, for $f$-DP), then $\cA$ is not $(\eps,\delta)$-DP (respectively, $\cA$ is not $f$-DP) with probability $1-p$.
\end{corollary}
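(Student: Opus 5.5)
The argument is monotonicity: a pessimistic estimate of overlap only strengthens the null bounds that the audits compare against. Write $\hat\eps_\DS(X_i)=|\log(\hat\pi_i/(1-\hat\pi_i))|$. Since $t\mapsto |\log(t/(1-t))|$ is decreasing in $\min(t,1-t)$, the hypothesis $\min(\hat\pi_i,1-\hat\pi_i)\le\min(\pi(X_i),1-\pi(X_i))$ is equivalent to $\hat\eps_\DS(X_i)\ge\eps_\DS(X_i)$ for every $i$. I would then check, for each algorithm, that every quantity computed from $\hat\pi$ is monotone in the right direction in $\eps_\DS$. That way the audit run with $\hat\pi$ is at least as conservative as the oracle audit, and it suffices to invoke validity of the oracle audit.

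\textbf{Conditional audit.} The tampering parameters satisfy $\hat b_i\le\frac{1+e^{-\eps-\hat\eps_\DS(X_i)}}{1+e^{-\eps}}\le\frac{1+e^{-\eps-\eps_\DS(X_i)}}{1+e^{-\eps}}$ in the $(\eps,\delta)$ case, and $\hat b_i\le e^{-\hat\eps_\DS(X_i)}\le e^{-\eps_\DS(X_i)}$ in the $f$-DP case. Hence the constraints of \Cref{thm:conditional_audit_tampered} hold with the true $\eps_\DS$. The theorem only requires upper bounds on $b_i$, so its conclusions apply verbatim: the tail bound \eqref{eq:bound_thm_eps_delta} and the recursive inequality \eqref{eq:bound_thm_fdp}, conditionally on $(X_i)$. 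The rejection rules (\Cref{alg:audit_approx_DP_tampered}, \Cref{alg:audit_f_DP}) depend only on $r,m,p,\eps,\delta,\bar f$ and the observed tampered count $c$. So, exactly as in the oracle case, under the null hypothesis they reject with conditional probability at most $p$. Integrating over $(X_i)$ then gives the unconditional guarantee.

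\textbf{Compositional audit.} Here the empirical overlap level is $\hat\eta=\min_i\min(\hat\pi_i,1-\hat\pi_i)$, or its population analogue. I would argue that $\hat\eta\le\eta$ for an $\eta$ for which \Cref{hyp:approx_overlap} holds, with the same $\delta_\DS$. Then $\hat{\bar\eps}_\DS\ge\bar\eps_\DS$ and $\hat{\bar\mu}_\DS\ge\bar\mu_\DS$; the latter holds because $\bar\mu_\DS$ is increasing in $\bar\eps_\DS$, as $\Phi^{-1}$ is increasing. It follows that $\cM_\DS$ is $f_{\hat{\bar\eps}_\DS,\delta_\DS}$-DP, since larger $\eps$ gives a weaker, still valid guarantee. \Cref{cor:worst-case_audit} then holds with the hatted parameters. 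The deflated hypothesis $\cH'$ built from $\hat\pi$ is implied by the true $\cH$ via \Cref{thm:composition}. So if the One-Run audit of \citet{steinke2023privacy} or \citet{mahloujifar2024auditing} rejects $\cH'$ at level $p$ (respectively $p+\delta_\DS$ for GDP), we reject $\cH$ at the same level.

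\textbf{Expected obstacle.} The main difficulty is the compositional case. The hypothesis is pointwise on the observed $X_i$, whereas \Cref{hyp:approx_overlap} is a population statement under $\P_0,\P_1$. I would first try to read the corollary as requiring the conservative estimate on the support, and take $\eta$ to be the lower bound on overlap that the auditor feeds to the algorithm. With that reading, the monotonicity argument above suffices, and the stated probability $1-p$ is inherited directly from the underlying One-Run audit.
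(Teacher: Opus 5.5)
Your proposal is correct and follows essentially the paper's route: a pessimistic $\hat\pi$ can only shrink the tampering probabilities $b_i$ (respectively enlarge $\bar\eps_\DS$ and $\bar\mu_\DS$), so the hypotheses of \Cref{thm:conditional_audit_tampered} and \Cref{cor:worst-case_audit} still hold and validity is inherited from the One-Run tests; your population-level reading of the overlap bound is the right one for the compositional case, which the paper does not treat separately. The step you leave as ``exactly as in the oracle case'' is where the paper's proof actually does its work, by rerunning the argument of \citet[Theorem 10]{mahloujifar2024auditing} to certify the rejection rule of \Cref{alg:f_DP_counts} (that argument uses only the inequalities \eqref{eq:bound_thm_fdp} for the sets $T=\set{r',\ldots,r}$), so you should invoke it explicitly rather than treat oracle-case validity as already established.
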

\Cref{thm:composition} and Corollary~\ref{cor:validity_oracle} yield valid finite-sample audits when the propensity score function $\pi$ is known exactly, or bounded conservatively pointwise. In practice, as described next, we estimate the propensity scores from data and quantify estimation uncertainty via bootstrapping; under the regularity assumptions stated below, this yields an asymptotically conservative approximation.

\paragraph{Bootstrapped Privacy Lower Bounds.}
We consider a separate dataset $(X_j', S_j')_{j\in[m']}$, distinct from the audit dataset $(X_i, S_i)_{i\in[m]}$, for propensity score estimation.
To quantify the statistical uncertainty on empirical privacy lower bounds due to estimated propensity scores, \emph{we bootstrap the final empirical privacy lower bound by resampling this dataset}. For ease of exposition, we focus on $\mu$-GDP auditing and consider the empirical privacy lower bound $\hat \mu$ (see \Cref{remark:empirical_privacy_parameters}) as a function of propensity scores $(\pi(X_i))_i$, all the rest (model $\theta$, MIA guesses $T_i$, data points $X_i$, and memberships $S_i$) being fixed, and proceed as follows for $K$ bootstrapped propensity scores.
For $k\in[K]$: 
\begin{enumerate}
\item \textit{Bootstrap sample:} we draw $({X_j'}^{(k)}, {S_j'}^{(k)}){j\in[m']}$ by sampling $m'$ pairs with replacement from $(X_j',S_j'){j\in[m']}$;
\item \textit{Propensity score training:} we train a probabilistic binary classifier $\hat{\pi}^{(k)}:\cX\rightarrow[0,1]$ on $({X_j'}^{(k)}, {S_j'}^{(k)})_{j\in[m']}$ using cross-fitting (see \Cref{alg:propensity_learning}, \Cref{app:propensity_learning});
 \item \textit{Privacy lower bound:} we feed $(\hat\pi^{(k)}(X_i),T_i,S_i)_{i\in[m]}$ into the auditing algorithm to obtain an empirical privacy lower bound $\hat{\mu}^{(k)}$.

\end{enumerate}
Finally, we output the $p'$-lower quantile $\hat\mu$ of $(\hat{\mu}^{(k)})_{k\in[K]}$, yielding a robust empirical privacy bound with probability $1-p-p'$, , where $p$ and $p'$ allocate the error budget to the audit and bootstrap, respectively. Formal bootstrap validity requires standard, though practically unverifiable, regularity conditions—such as Hadamard differentiability \citep[Chapter~23]{van2000asymptotic}—on the mapping from propensity scores $(\pi(X_i))_i$ to the privacy lower bound. The full procedure is detailed in \Cref{alg:bootstrap} (\Cref{app:bootstrap}).

We emphasize that bootstrap-based uncertainty quantification is only asymptotically valid under these assumptions. Unlike the exact finite-sample validity obtained with oracle or certified conservative propensity bounds (\Cref{cor:validity_oracle}), our bootstrapped bounds should therefore be viewed as practical, asymptotic empirical privacy lower bounds.

\paragraph{Auditing and Falsifiability of Propensity Score Models.}
Our Zero-Run audits rely on estimated propensity score bounds. Although these bounds cannot generally be formally certified, \textit{they can be empirically falsified}: they can themselves be treated as a hypothesis and subjected to an audit. Analogous to testing a DP claim, a propensity-score model $\hat \pi$ can be rejected when othe bserved evidence contradicts its claimed conservative bounds $\max(\pi(X_i),1-\pi(X_i))\leq \max(\hat\pi(X_i),1-\hat\pi(X_i))$; see \Cref{remark:propensity_rejection} and \Cref{thm:pure_DP_auditing_tampered} (\Cref{app:proof_pure_DP_tampered})
This test is necessarily one-sided: passing it does not establish that the bounds are correct, but only indicates that no violation was detected given the available data and auditing power.

This motivates releasing the propensity score models and bounds, audit dataset, and auditing procedure alongside the final audit result. A skeptic can then challenge the claimed  distribution-shift correction by constructing a stronger propensity score that exploits distribution shift beyond what the released bounds allow. Thus, while propensity models remain formally unverifiable, openness and adversarial testing make the audit empirically falsifiable and thereby more trustworthy in practice.

\section{Experiments}
\label{sec:experiments}

\paragraph{Auditing Setup and Datasets.}
We quantify empirical privacy leakage in terms of $\mu$-GDP. To account for uncertainty in propensity-score estimation, we compute empirical lower bounds $\hat{\mu}$ using $K=600$ bootstrap resamples. We target an overall confidence level of $0.95$ by splitting the total error budget of $0.05$ equally between hypothesis testing ($p=0.025$) and bootstrap uncertainty quantification ($p'=0.025$). In contrast, standard IID/One-Run baselines neither account for distribution shift nor incorporate bootstrap uncertainty. We evaluate Zero-Run auditing in three experimental settings:
\begin{enumerate}
    \item \emph{Controlled Gaussian Mechanism:} We consider a synthetic noisy-sum Gaussian mechanism, with a tunable parameter to control the magnitude of the distribution shift.
    
    \item \emph{Vision Data:} We audit a ResNet architecture using CIFAR-10 data \cite{krizhevsky2009learning}, where we generate synthetic examples to serve as non-members. 
    
    \item \emph{Text Data:} We audit a GPT-2-small model using WikiText data \cite{merity2016pointer}, where we generate synthetic examples to serve as non-members following the generation procedure of \texttt{Panoramia} \cite{kazmi2024panoramia}.
\end{enumerate}

\paragraph{Propensity Score Estimation.}
For each audit, we
estimate \(\pi(x)=\Pr(S_i=1\mid X_i=x)\) from fixed
representations of \(X_i\) produced by encoders independent of the audited
model. We compare six propensity pipelines: \(L_2\)-regularized logistic
regression with either sigmoid or isotonic calibration, a linear SVM with
sigmoid calibration, random forests, extremely randomized tree ensembles, and
histogram gradient boosting, each with sigmoid calibration \cite{scikit-learn}. All propensity
scores are computed out-of-sample using five stratified outer folds and three
inner folds for calibration and classifier fitting. Due to lack of space, we defer detailed propensity score model comparisons (e.g., AUC-ROC, effective sample size and calibration error) to \Cref{app:propensity-estimation-details}.

\looseness=-1 \paragraph{Membership Inference Attacks (MIAs).}
Our instantiation of MIA (\Cref{alg:mia}) leverages the combination of a standard MIA score (cross-entropy loss for images; perplexity for text) and the estimated propensity score $\hat{\pi}(X_i)$ in a two-dimensional histogram. The histogram estimates a cellwise member-to-non-member likelihood ratio, which we use to search for a classification rule targeting a specified false-positive rate on held-out tuning data; the selected rule is then fixed for final evaluation. Providing the MIA with $\hat{\pi}(X_i)$ is consistent with the DP threat model: propensities are estimated from observable covariates using a separate propensity model, independently of the target model and the final audit labels. This allows the MIA to exploit all available information when forming its initial guesses, before the subsequent correction for distribution shift.

We also incorporate propensity-aware confidence ranking as described in \Cref{remark:propensity_aware_MIA} to favor confident guesses in regions where members and non-members have greater distributional overlap. 
In particular, the adversary accounts for the probability that the tampering procedure will retain a guess at each propensity level, allowing it to abstain strategically and prioritize guesses that are more informative about algorithmic leakage.


\paragraph{Controlled Gaussian Mechanism Results.} We first consider a simple sum query on synthetic data privatized with the Gaussian mechanism (see \Cref{appendix:gaussian_syntethic_data}) as a controlled illustration of our distribution-shift corrections. Rather than modeling a realistic image or text distribution, this setup cleanly separates mechanism-induced leakage from features-only membership information. Across all shift levels, we keep the membership assignments, Gaussian release, and mechanism-derived scores fixed, so the true privacy parameter remains $\mu_{\mathrm{true}}=2/3$. Only an auxiliary feature correlated with membership changes, ensuring that any increase in apparent leakage is attributable to confounding rather than the mechanism itself.

\Cref{fig:decomposition_main} summarizes the results.
At the IID baseline (no shift), corrected and uncorrected audits behave similarly. As the auxiliary feature becomes more predictive of membership, the uncorrected audit progressively attributes this features-only signal to the mechanism, eventually exceeding the known theoretical privacy level and thus producing an invalid privacy lower bound. In contrast, our corrected audits maintain validity across all shift levels. The global correction quickly loses power because it applies a uniform worst-case penalty, whereas the propensity-aware pointwise correction remains informative under moderate shifts by adapting to each observation's estimated propensity. Under the strongest shifts, it also loses power as member/non-member overlap deteriorates.

Overall, the experiment demonstrates that propensity correction removes spurious membership signal arising from distribution shift while preserving evidence of mechanism-induced leakage when sufficient overlap remains.

\begin{figure*}[t]
    \centering

    \begin{subfigure}[t]{0.48\textwidth}
        \centering
        \includegraphics[width=\linewidth]
        {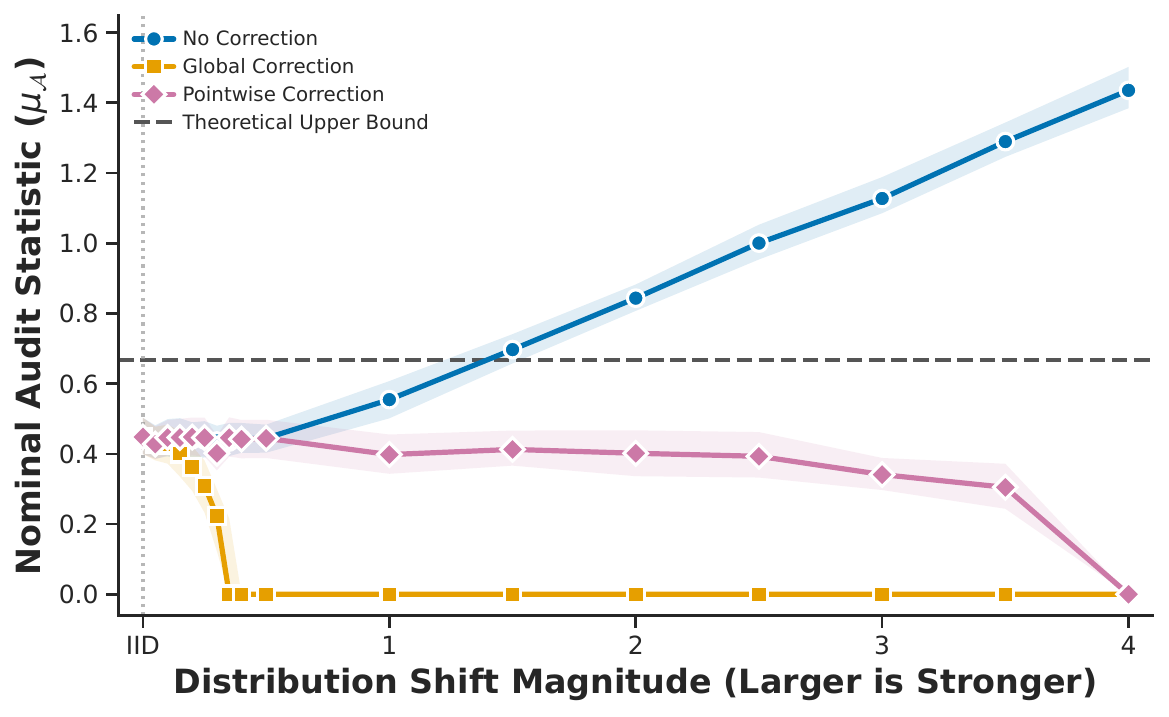}
        \caption{Comparison of different auditing algorithms for the Controlled Gaussian Mechanism experiment.}
        \label{fig:decomposition_main}
    \end{subfigure}
    \hfill
    \begin{subfigure}[t]{0.48\textwidth}
        \centering
        \includegraphics[width=\linewidth]
        {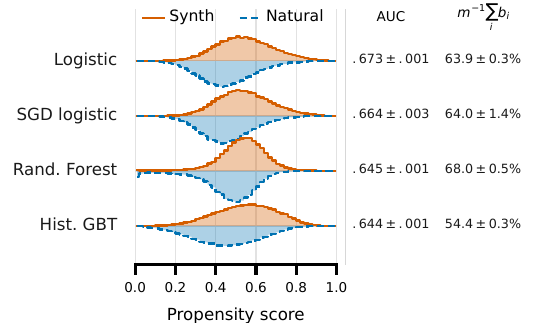}
        \caption{Performance of four propensity-score estimators in
        distinguishing real member CIFAR-10 images from synthetic non-member
        images. Entries report five-fold means $\pm$ standard deviation.}
        \label{fig:cifar10-propensities}
    \end{subfigure}

    \caption{Auditing and propensity-score estimation results on synthetic
    Gaussian and CIFAR-10 data.}
    \label{fig:combined-results}
\end{figure*}

\begin{figure*}
    \centering
    \includegraphics[width=\linewidth]{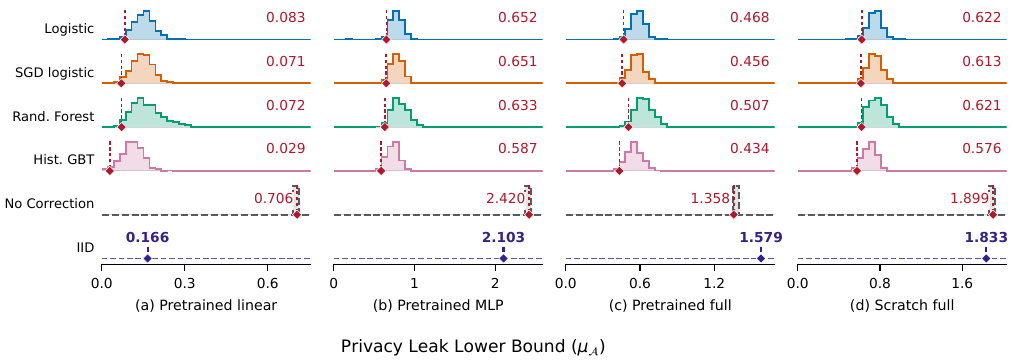}
        \caption{Epoch-100 \(\mu\)-GDP lower-bound distributions for four
    Wide ResNet-50-2 training regimes on CIFAR-10. Distributions labeled by a
    propensity estimator use propensity-corrected synthetic non-members.
    \emph{No Correction} uses the same synthetic non-members with
    \(\hat{\pi}(x)=0.5\), while \emph{IID} uses held-out real CIFAR-10
    non-members. Dotted lines mark the \(0.025\) quantiles.}
\label{fig:cifar10-audit-final} 
\end{figure*}

\paragraph{CIFAR-10 Results.}
We first attempted to reuse \pano's \cite{kazmi2024panoramia} synthetic non-member generator. However, propensity analyses based on independent ViT embeddings revealed a severe generator-induced distribution shift, leaving insufficient overlap for a meaningful privacy lower bound; we discuss this failure mode in \Cref{sec:failure-modes}. We therefore developed a class-conditional diffusion and selection procedure, described in \Cref{app:synth-data-cifar10}, which more closely matches the coverage, diversity, and difficulty of real CIFAR-10 images.

Using these synthetic non-members, we audit the epoch-100 checkpoints of four Wide ResNet-50-2 training regimes: a frozen ImageNet-pretrained backbone with a linear head (20.5K trainable parameters), the same frozen backbone with an MLP head (16.8M), full fine-tuning from pretrained weights (66.9M), and full training from scratch (66.9M). \Cref{fig:cifar10-propensities} quantifies the remaining real--synthetic distribution shift using four propensity-score estimators, while \Cref{sec:propensity_cifar} reports their calibration and overlap diagnostics. We observe reasonable agreement across the different propensity score models.

\Cref{fig:cifar10-audit-final} shows the bootstrap distributions of the resulting \(\mu\)-GDP lower bounds, alongside two reference baselines: \emph{No Correction}, an uncorrected synthetic-data audit that fixes \(\hat{\pi}(x)=0.5\) and therefore ignores confounding, and \emph{IID}, a benchmark based on held-out real CIFAR-10 images as non-members. The substantially larger uncorrected bounds illustrate how ignoring the real--synthetic shift can severely overstate algorithmic privacy leakage.

Across the four propensity estimators, the conservative epoch-100 \(\mu\)-GDP lower bounds range from \(0.587\) to \(0.652\) for the pretrained MLP head, from \(0.434\) to \(0.507\) for full fine-tuning, and from \(0.576\) to \(0.622\) for training from scratch. The frozen pretrained linear head yields substantially smaller bounds, ranging from \(0.029\) to \(0.083\). The close agreement across propensity models indicates that the qualitative ordering of the four training regimes is robust to the model choice: higher-capacity heads and end-to-end training exhibit substantially more certifiable membership leakage than the linear-head regime. \Cref{app:cifar10-experiments} provides full details of the synthetic-data construction, audited model training, and propensity estimation, while \Cref{fig:cifar10-audit-trajectories} traces the corrected leakage over the full 100-epoch training trajectory.

\begin{figure*}[t]
    \centering
    \includegraphics[width=\linewidth]{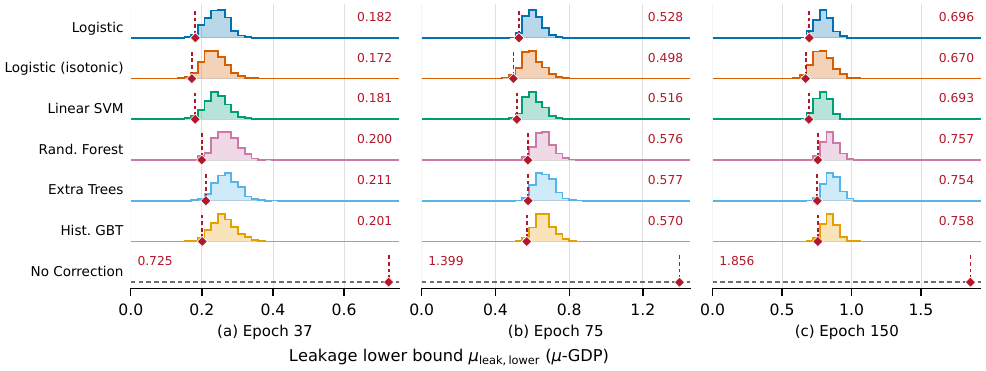}
    \caption{\(\mu\)-GDP lower-bound distributions for three checkpoints of a GPT-2
target model.}
\label{fig:wikitext-audit-final} 
\end{figure*}

\paragraph{WikiText Results.} Following the WikiText-103 protocol in \Cref{app:wikitext}, we audit epochs \(37\), \(75\), and \(150\) from a single \(200\)-epoch fine-tuning trajectory of a \(124.4\)M-parameter GPT-2-small model. The audit contrasts known \(64\)-token training chunks with \(64\)-token synthetic non-members from a separately trained GPT-2-small generator. Although the two models share the same architecture and pretrained initialization, the generated suffixes were never used to train the audited model and have no exact overlap with its training or held-out sequences, reproducing the generation technique of \texttt{Panoramia} \cite{kazmi2024panoramia}. Propensity diagnostics in \Cref{fig:wikitext-propensity-summary} show a consistent but moderate real--synthetic shift across all six estimators (held-out ROC AUC \(0.64\)--\(0.68\)), while the conditional correction retains approximately \(60\%\)--\(70\%\) of the audit sample, indicating sufficient local overlap for a non-vacuous audit.

As shown in \Cref{fig:wikitext-audit-final}, the conservative propensity-corrected \(\mu\)-GDP lower bounds increase from \(0.172\)--\(0.211\) at epoch \(37\), to \(0.498\)--\(0.577\) at epoch \(75\), and \(0.670\)--\(0.758\) at epoch \(150\). Even the most conservative estimator follows \(0.172\rightarrow0.498\rightarrow0.670\), and the widest cross-estimator range (\(0.088\)) is smaller than either adjacent-checkpoint increase. Thus, the temporal trend dominates variation across propensity models. Because the audit populations and their distribution shift are fixed across checkpoints, the persistence of this increase after correction, together with worsening held-out loss and perplexity, is consistent with increasing model-dependent membership leakage as the audited model overfits, rather than with a static generator artifact. \emph{The no-correction controls are much larger (\(0.725\), \(1.399\), and \(1.856\)); these are invalid as privacy bounds under the observed shift} and serve only to illustrate how strongly confounding can inflate apparent leakage, not to provide an additive decomposition of its sources.

\section{Failure Modes}
\label{sec:failure-modes}
We next discuss settings in which post-hoc auditing may have limited statistical power or may be fundamentally impossible.

\paragraph{Overlap failures.}
Observational auditing becomes impossible when members and non-members are too dissimilar to provide sufficient overlap.
A representative example is the setup of \citet{hayes_measuring_2025}, which estimates extraction probabilities for Pythia models \cite{biderman_pythia_2023} on the Enron email subset of the Pile \cite{gao2020pile}, using the TREC spam corpus as a reference non-member dataset. Reusing this setup for privacy auditing requires substantial care: there is no reason to assume that Enron and TREC emails are exchangeable, or even that their feature distributions have sufficient overlap. Consequently, an attack may distinguish members from non-members using corpus-specific features rather than privacy leakage from the model. \Cref{fig:enron-trec-main-separation} illustrates this failure mode. Across five propensity-score estimators, the two corpora are highly distinguishable, with AUCs close to $0.98$; the random forest achieves an AUC of $0.982 \pm 0.002$. This severe separability indicates a fundamental lack of overlap. Ignoring the distribution shift would produce \emph{strong but spurious privacy lower bound}. After accounting for the shift, however, no non-trivial lower bound can be obtained: \emph{the corrected lower bound collapses to 0}, consistent with our impossibility result for non-overlapping distributions (\Cref{prop:impossibility}, \Cref{app:impossibility})

\paragraph{When IID data is not truly IID.}
The iWildCam \cite{marklund2020wilds} benchmark consists of wildlife images collected by different cameras. Its test split is IID only in a
restricted sense: it contains images from camera locations that are
also represented in the training set. While this removes the explicit geographic shift targeted by the OOD split, it does
not imply the training and test examples are drawn independently from the same distribution.  As shown in
Figure~\ref{fig:iwildcamera_train_test_iid}, nonlinear propensity-score
estimators distinguish the training set from the nominally IID test set
with AUCs of up to \(0.847\), while the corresponding effective sample
size falls to \(14.0\%\). The two sets are therefore not exchangeable
despite sharing camera locations, with differences arising, for instance, in label distributions and camera-animal distances. Consequently, the nominally IID test split cannot be used directly as an non-member sample for a valid
One-Run audit: an MIA may exploit these split-specific differences to improve its membership predictions.
The effective lesson is that even held-out test should be treated with care: without evidence that train and test examples are exchangeable, our Zero Run framework should be employed.

\paragraph{When $\P_1=\P_0$ is not enough.} Even when the member and non-member distributions are perfectly matched, i.e., $\P_1=\P_0$, this alone does not rule out strong dependence between individual samples. The iWildCam dataset again provides a concrete example: it contains sequences of near-duplicate images of the same animal, corresponding to consecutive frames and hence exhibiting substantial sample-level correlation. Constructing a held-out non-member set $\cD_0$ by uniformly sampling such data can therefore produce examples that are highly correlated with members in $\cD_1\subset\cD_{\mathrm{train}}$. Although there is no marginal distribution shift, these correlations can substantially weaken a Membership Inference Attack (MIA): non-members may behave statistically like nearby training members, obscuring the distinction between $S_i=1$ and $S_i=-1$ and causing the audit to report little or no empirical privacy leakage even when the model memorizes its training data. 

This illustrates a more general limitation of One-Run auditing. If the audited party knows in advance which examples may be used for evaluation, it can include those examples (or sufficiently correlated examples) in $\cD_{\mathrm{train}}$, thereby weakening the membership signal on which the audit relies. Importantly, this issue need involve malicious behavior: in realistic datasets, latent grouping, temporal structure, repeated observations, and near-duplicates can create substantial sample-level dependence that is difficult to identify and control, even when the member and non-member marginals appear well matched. Using an OOD non-member dataset $\cD_0$ that was unavailable to the audited party at training time mitigates this failure mode.

\paragraph{Challenges with Synthetic Non-Members.}
Synthetic data provide an attractive way to construct a non-member dataset \(\cD_0\) when naturally occurring non-members are unavailable. For privacy auditing, however, perceptual or semantic realism alone is insufficient: the generated distribution must also exhibit substantial overlap with the member distribution, ideally satisfying \(\P_0 \approx \P_1\). Even subtle systematic artifacts introduced by a generator can allow a propensity-score model to distinguish real from synthetic examples using their features alone. An MIA may then exploit the generation process rather than solely the privacy leakage from the audited model.

We observed this failure when reproducing \pano's \cite{kazmi2024panoramia} CIFAR-10 experiments with StyleGAN2-ADA synthetic non-members. Using representations from an independent ViT encoder, every propensity model we evaluated distinguished real from generated images with a ROC AUC of approximately \(0.915\)--\(0.975\) (\Cref{fig:panoramia-synth-cifar10}, indicating negligible overlap between the two populations. Thus, despite their high perceptual quality, these generated images are unsuitable substitutes for IID non-members in a privacy audit: without appropriate correction, the audit primarily measures generator-induced distribution shift rather than model-induced membership leakage.

A second, potentially harder challenge is that synthetic data must reproduce not only the marginal distribution of individual examples but also relevant dependencies \emph{between} examples. Real datasets often contain latent relational structure, including repeated entities, near-duplicates, temporal dependencies, or groups of observations originating from the same underlying source. The Enron email corpus provides a concrete example. Names, email addresses, organizations, and participants recur across messages, while many emails belong to longer reply or forwarding threads and therefore share quotations, headers, vocabulary, and conversational context. Generating each email independently may reproduce the appearance of an individual Enron message while failing to reproduce this dataset-level structure, leaving strong signals that distinguish real from synthetic samples. This matters for privacy auditing because such dependencies can also affect the membership-inference signal itself: matching single-example statistics does not guarantee that synthetic non-members behave like genuine held-out observations relative to $\cD_{\mathrm{train}}$. Constructing synthetic datasets that simultaneously preserve realistic sample-level distributions, sufficient propensity overlap, and relevant cross-sample dependencies therefore remains a important open challenge.

\begin{table*}[t]
    \centering
    \small
    \setlength{\tabcolsep}{3.5pt}
    \renewcommand{\arraystretch}{1.08}
    \caption{
        Practical requirements and guidelines for Zero-Run privacy auditing.
    }
    \label{tab:zerorun_requirements}

    \begin{tabularx}{\textwidth}{@{}
        >{\raggedright\arraybackslash}p{0.47\textwidth}
        >{\raggedright\arraybackslash}X
    @{}}
        \toprule
        \textbf{Requirement and rationale}
        &
        \textbf{Recommended practice}
        \\
        \midrule

        \textbf{Reliable membership labels.}
        The audit requires known members and non-members; uncertainty or
        errors in these labels introduce an additional source of bias not
        covered by the current analysis.
        &
        Use examples whose membership status is established independently
        of the attack, e.g., from dataset manifests, known training/test
        splits, or synthetic generation.
        \\

        \subtlerule

        \textbf{Well-defined audit sampling.}
        The propensity score
        \mbox{$\pi(x)=\proba{S=1\mid X=x}$}
        depends on how members and non-members are sampled into the audit
        population.
        &
        Specify the member and non-member sampling procedures and estimate
        propensities under the same audit design.
        \\

       \subtlerule

        \textbf{Local overlap.}
        When $\pi(x)$ is close to $0$ or $1$, membership can already be
        inferred from $x$ alone; without overlap, algorithmic leakage cannot
         be distinguished from distribution-shift leakage.
        &
        As an \textit{overlap diagnostic}, report
        (i) the propensity-score distribution using a histogram of
        $\hat{\pi}(X_i)$, and
        (ii) the effective audit sample size $\sum_i b_i$ remaining after the Zero-Run
        correction.
        \\

       \subtlerule

        \textbf{Leakage-free propensity features.}
        Features used to estimate $\pi(x)$ should capture distribution shift,
        not information leaked by the audited model itself; otherwise, true
        model leakage may be incorrectly attributed to the shift.
        &
        When estimating propensities, avoid target-model losses, logits, or
        representations. For foundation models, use metadata or
        representations from an independent public encoder.
        \\

        \subtlerule

        \textbf{Conservative propensity uncertainty.}
        Underestimating distribution-shift leakage can lead to an invalid
        audit, whereas overestimating it only makes the resulting privacy
        bound more conservative.
        &
        Account for propensity-estimation uncertainty, e.g., using
        conservative confidence bounds or the bootstrap procedure in
        \Cref{sec:propensity_estimation}. Release the corresponding propensity
        model and bounds to support falsifiable propensity scores and
        trustworthy audits.
        \\

        \subtlerule

        \textbf{Robustness to propensity-model choice.}
        Reported results depend on the propensity-score model used, including
        the text/image embedding and classification models.
        &
        Perform a \textit{sensitivity analysis} by reporting results across
        different propensity-score models and varying both the embedding and
        classification models.
        \\

        \subtlerule

        \textbf{Different sources of membership leakage.}
        Membership information can leak through both distribution shift and
        algorithmic leakage. Zero-Run removes features-only leakage so that
        the remaining privacy leakage is attributed only to the model.
        &
        As a \textit{sanity check}, report attack performance and empirical
        privacy leakage for
        (i) uncorrected results and
        (ii) features $X_i$ alone.
        The former should yield greater leakage than the corrected results;
        the latter quantifies leakage attributable to distribution shift
        alone.
        \\

        \bottomrule
    \end{tabularx}
\end{table*}

\section{Related Work}
\label{sec:related}

\paragraph{Interventional Privacy Auditing.}
\citet{ding2018detecting,jagielski2020auditing} introduced privacy auditing methods that rely on repeatedly running the audited algorithm with and without a specific data point. These \emph{Multi-Run} techniques were later refined to obtain tighter guarantees across various settings \citep{nasr2023tight,auditing_cebere}, see \citet{annamalai2025hitchhiker} for an overview. To reduce the computational cost of rerunning the audited algorithm, \citet{steinke2023privacy} proposed a \emph{One-Run} auditing method for $\eps$-DP and $(\eps,\delta)$-DP, later extended to $f$-DP by \citet{mahloujifar2024auditing}. While our proof techniques draw on these foundational works, their validity inherently depends on randomized membership assignments controlled by the auditor. This interventional requirement motivates the development of \emph{Zero-Run} auditing methods that can estimate post hoc privacy leakage from a deployed model.

\paragraph{Observational Privacy Auditing.} \citet{kalemaj2025observational} also audits without randomized insertion, but targets the weaker label-DP, assumes identical member and non-member feature distributions, and cannot correct for general feature-level shifts. More closely related is \pano \citep{kazmi2024panoramia}, which audits $(\eps,\delta)$-DP using non-members drawn from a generative model. \pano acts as a special case of our composition framework, which we strictly generalize in two ways: we support GDP auditing and employ milder overlap assumptions (\Cref{hyp:approx_overlap}) by permitting an approximation error $\delta_\DS>0$. When $\delta_\DS=0$, this recovers \pano's $c$-closeness assumption where $c=\log((1-\eta)/\eta)$. Furthermore, rather than learning a full generative model, our approach 
reduces the correction problem to estimating the propensity score function, i.e., a probabilistic binary classifier.

Specifically for LLMs, another observational approach \cite{rossi2026natural} leverages unique identifiers naturally present in the training data, such as cryptographic hashes, for which IID non-members can be generated directly, and then applies One-Run auditing.

\paragraph{Causal Inference and Distribution Shift.} 
Recent work \citet{even206membership} studies the evaluation of membership inference attacks under distribution shift using causal principles, primarily by reweighting population-level attack metrics such as AUC to remove distribution-shift effects. In contrast, our focus is on privacy auditing, where the goal is to obtain valid lower bounds on DP leakage.

More generally,
unlike causal reweighting approaches \citep{even2026membership} or generative approximations \citep{kazmi2024panoramia}, we do not attempt to construct non-members that approximately match the member distribution. Instead, we explicitly model distribution-shift leakage conditional on the observed examples and correct the audit accordingly. This requires only the data-generation assumption in \Cref{hyp:data_generation} and relies on local overlap between members and non-members, rather than requiring a global overlap condition.
As shown in \Cref{prop:impossibility}, Zero-Run auditing is impossible when there is no local overlap, i.e., when the two distributions have disjoint support.

\section{Discussion \& Conclusion}

\paragraph{Practical guidelines.}
We summarize practical guidelines and requirements for valid and reliable Zero-Run auditing  in \Cref{tab:zerorun_requirements}.
In particular, given the uncertainty inherent to observational inference, we emphasize the importance of sensitivity analyses, sanity checks, and assumption diagnostics. These practices are central to the \textit{Veridical Data Science} framework of \citet{yu2020veridical}, which turns toolbox methods into end-to-end processes spanning data generation, analysis, and data-derived conclusions.

\paragraph{Limitations.} Our zero-run framework relies on accurately estimating the propensity score. 
Improving the reliability of propensity estimation, although orthogonal to our contribution, is a fundamental challenge in causal inference, and advances in this area would directly benefit the tightness and reliability of our audits.

While our framework successfully enables post-hoc privacy evaluation, it naturally inherits the limitations of existing One-Run methods. First, as observed by the baseline gap in \Cref{fig:decomposition_main}, Zero-Run auditing is limited by the finite-sample inefficiency intrinsic to all One-Run methods, so empirical bounds will generally fall below the theoretical privacy level \cite{steinke2023privacy, mahloujifar2024auditing}. Improvements to One-Run auditing would directly benefit Zero-Run. Second, audit power depends critically on the underlying MIA. In heavily restricted black-box deployments, particularly for DP models \citep{nasr2023tight}, loss-based attacks may provide too little signal to obtain meaningful bounds, independently of our distribution-shift correction.
\\

Overall, the main lesson of our work is that post-hoc privacy auditing is possible, but it is fundamentally an observational inference problem. Distribution shift cannot be ignored: it is itself a source of membership information. Propensity scores provide a natural language for quantifying this source, and our results show how propensity bounds can be converted into valid empirical lower bounds on algorithmic privacy leakage.

\section*{Ethical Considerations}

We identify no ethical concerns specific to this study. All data and model checkpoints used in our experiments are publicly available, and the study involves no human-subject recruitment, private-data collection, or access to proprietary training systems. Zero-Run operates passively on fixed model outputs and datasets with independently established membership labels; it neither modifies nor retrains the audited model. The evaluation therefore does not expose previously inaccessible records or infer the membership of unidentified individuals. 

More broadly, the technique is intended to improve the ethics of model governance. Model providers currently control both their training pipelines and much of the evidence needed to substantiate privacy claims, leaving users and data contributors dependent on provider assurances. Zero-Run enables independent third parties to evaluate those claims after deployment, using public artifacts and without requiring privileged access or voluntary cooperation from the model trainer. This reduces the existing information and power asymmetry, makes privacy claims empirically falsifiable, and strengthens accountability to people whose data may have been used for training. We therefore view the work as introducing minimal additional risk while providing a clear privacy and public-interest benefit.

\section*{Open Science}

For review, we provide an anonymized artifact at
\url{https://anonymous.4open.science/r/zero_run_usenix-DE61/README.md}.
It provides a minimal reproducibility path containing the Zero-Run implementation
and propensity score estimation, including \texttt{zero\_run\_audit} (the audit
core) and \texttt{PropEstimator} (cross-fitted propensity estimation).
Datasets and frozen configurations are documented under \texttt{experiments/},
with additional details available in the repository README. We will also provide
complete tables of hyperparameter values in the repository to facilitate
reproducibility, as well as the code used to generate the synthetic data for
\texttt{Cifar10} and \texttt{WikiText}.

Because the full set of artifacts is large (over 500\,GB), we provide a minimal
subset of results for reproducibility under \texttt{minimal\_artifacts/}, together
with corresponding documentation. The artifact's \texttt{REPRODUCE.md} provides
the exact command sequence for reproducing most of the results. Full reproduction,
however, requires substantial distributed computation due to cross-fitting,
bootstrapping, and model training. We are committed to open-sourcing the complete
implementation and making the full artifacts publicly available.

\bibliography{references}
\bibliographystyle{plainurl}

\newpage
\appendix
\onecolumn

\section{Additional General Content and Proofs }

\subsection{Privacy Loss Random Variable}

\begin{definition}\label{def:PLRV}
    Define the Privacy Loss Random Variable (PLRV for short) as \citep{sommer2018privacy}:
    \begin{equation}
        \PLRV_{\cD,\cD'}^\cA(\theta) = \log\left(\frac{\proba{\cA(\cD) = \theta}}{\proba{\cA(\cD') = \theta}}\right)\,,
    \end{equation}
    for two adjacent datasets $\cD$ and $\cD'$ and any mechanism $\cA$. 
\end{definition}

\subsection{Membership Inference Attack Meta-Algorithm}
\label{app:mia}

\Cref{alg:mia} describes the meta-algorithm for MIAs: a confidence score $Y_i=c(\theta, X_i)\in\R$ is first computed for every data point $X_i$ (e.g., the opposite of the loss of $\theta$ at $X_i$). The larger $Y_i$, the more confident the attacker is about $X_i$ begin a member (i.e., $S_i=1$).
Guesses $g_i$ are then made using these scores for all data points. For instance, a threshold $t$ can be computed, and the guess can be set as $g_i=\one_\set{Y_i\geq t} - \one_\set{Y_i<t}$. In order to keep data points for which we are most confident about their membership or non-membership, an \textit{abstention} step is finally performed, by setting low-confidence guesses to 0, thereby obtaining a vector of guesses and abstentions $T\in\set{-1,0,1}$.

\begin{algorithm}[H]
    \caption{Membership inference attack}
    \label{alg:mia}
    \begin{algorithmic}[1]
        \STATE \textbf{Inputs:} 
        Instance $\theta=\mathcal{A}(\mathcal{D}_{\mathrm{train}})\in\Theta$ of the algorithm $\cA$ to audit;
        $\mathcal{D}_1\subset \mathcal{D}_{\mathrm{train}}$ and
        $\mathcal{D}_0\subset \mathcal{X}\setminus\mathcal{D}_{\mathrm{train}}$ with $|\cD_1\cup\cD_0|=m$; MIA $\mathcal{M}=(c,g,\phi)$ with confidence
        score function $c:\Theta\times \mathcal{X}\to \mathbb{R}$, 
        guess function $g:\mathcal{X}\times \mathbb{R}\to \{-1,1\}$, 
        abstention function $\phi:(\{-1,1\}\times \mathbb{R}\times \mathcal{X})^m \to \{-1,0,1\}^m$
        
        \STATE Set $\{X_i : i\in[m]\} = \mathcal{D}_1 \cup \mathcal{D}_0$
        
        \FOR{$i \in [m]$}
            \STATE $Y_i \gets c(\theta, X_i)$
            \STATE $g_i \gets g(X_i, Y_i)$
        \ENDFOR
        \STATE $(T_i)_{i\in[m]} \gets 
        \phi\!\left((g_i)_{i\in[m]}, (Y_i)_{i\in[m]}, (X_i)_{i\in[m]}\right)\in\set{-1,0,1}^m$
        
        \STATE \textbf{Return} $(T_i)_{i\in[m]}$
    \end{algorithmic}
\end{algorithm}

\subsection{The Central Role of Propensity Scores in Privacy Auditing with Distribution Shift}
\label{app:role_propensity}

Next results motivate the role of $\pi$ in our analysis by showing that when $\P_1\ne\P_0$, the classical inequality relating true positive and true negative rates (TPR and FPR) for a MIA on an $f$-DP algorithm is altered, as shown by the following result.
The proof is particularly instructive due to its simplicity, illustrating how propensity scores naturally arise when auditing under distribution shift.

\begin{proposition}
\label{prop:inequality_f_DP_independent}
    Assume that \Cref{hyp:data_generation} holds and that $(X_i,S_i,T_i)_{i\in[m]}$ are independent.
    Then,
    \begin{equation}
    \begin{aligned}
        &\esp{\frac{A_i}{\pi(X_i)}\proba{T_i=1|S_i=1,X_i}} \\
        &\leq \bar f\left( \esp{\frac{1-A_i}{1-\pi(X_i)}\proba{T_i=1|S_i=-1,X_i}}\right)\,,
    \end{aligned}
    \end{equation}
    where $A_i=\one_\set{S_i=1}=\frac{S_i+1}{2}$.
\end{proposition}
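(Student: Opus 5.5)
The plan is to recognize the two expectations as plain (unconditional) probabilities of the event $\{T_i=1\}$ under the member and non-member distributions. Then the $f$-DP inequality, applied as a hypothesis test between "$X_i$ included" and "$X_i$ excluded", gives the claim directly. For the left-hand side, the tower property together with $\E[A_i\mid X_i]=\pi(X_i)$ gives
\begin{equation*}
\esp{\frac{A_i}{\pi(X_i)}\proba{T_i=1|S_i=1,X_i}}=\esp{\esp{\frac{A_i}{\pi(X_i)}\Big| X_i}\proba{T_i=1|S_i=1,X_i}}=\esp{\proba{T_i=1|S_i=1,X_i}} .
\end{equation*}
The outer expectation is over the marginal law of $X_i$, written $\P_X=\rho\P_1+(1-\rho)\P_0$ with $\rho=\proba{S_i=1}$. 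This is the standard inverse-propensity weighting identity. The right-hand side is handled symmetrically, giving $\esp{\proba{T_i=1|S_i=-1,X_i}}$.

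The first step is therefore to interpret these as the true positive rate and false positive rate of a test under a \emph{reweighted} experiment. Here $X_i$ is drawn from $\P_X$, irrespective of membership, and then either inserted or not inserted into the training set. Formally, for fixed $x$, let $\cD^{+}_x$ and $\cD^{-}_x$ be the two adjacent training sets differing only in whether $x$ is included. The other components of $\dtrain$ are independent of $(X_i,S_i)$ by \Cref{hyp:data_generation} and the assumed independence. Since $T_i$ is a measurable function of $(\theta,X_i)$, possibly with external randomness, we have $\proba{T_i=1|S_i=s,X_i=x}=\proba{\phi_x(\cA(\cD^{s}_x))=1}$ for a test $\phi_x$. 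Here I need that, conditionally on $X_i=x$ and $S_i$, the output $\theta$ is distributed as $\cA(\cD^{S_i}_x)$. This is where the independence hypothesis is used: it ensures that the other audit points do not become correlated with $S_i$ once we condition.

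The second step applies $f$-DP pointwise in $x$:
\begin{equation*}
\proba{T_i=1|S_i=1,X_i=x}\le \bar f\big(\proba{T_i=1|S_i=-1,X_i=x}\big).
\end{equation*}
We then integrate against $\P_X$. Since $f$ is convex and decreasing, $\bar f=1-f$ is concave and increasing, so Jensen's inequality gives $\esp{\bar f(U)}\le \bar f(\esp{U})$. This yields the stated bound once the identities from the first paragraph are substituted back.

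The main obstacle is the conditioning step. I must justify that, given $X_i=x$ and $S_i$, the remaining randomness (the other training points and the mechanism's coins) has the same law under $S_i=1$ and $S_i=-1$, so that the pointwise $f$-DP comparison is legitimate. I would handle this by conditioning additionally on the rest of the dataset, $\cD=\dtrain\setminus\{X_i\}$, and on the other audit points. These are independent of $(X_i,S_i)$, so the $f$-DP inequality holds for each fixed value of them. The concavity of $\bar f$ is then used a second time to average over this extra conditioning.
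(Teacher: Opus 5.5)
Your proposal is correct and follows essentially the same route as the paper. Both arguments apply $f$-DP conditionally on $X_i$, $S_i$ and all other audit points and memberships. Both then average over those other coordinates (using their independence from $(X_i,S_i)$) and over $X_i$, invoking Jensen's inequality for the concave $\bar f$ at each stage. Both conclude with the inverse-propensity identities $\esp{A_i/\pi(X_i)\mid X_i}=\esp{(1-A_i)/(1-\pi(X_i))\mid X_i}=1$; the only difference is that you apply these identities first rather than last.
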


\begin{proof}[Proof of \Cref{prop:inequality_f_DP_independent}]
Denoting $\bar X = (X_i)_{i\in[m]}$, $\bar S = (S_i)_{i\in[m]}$ and $\bar X^{(-i)},\bar S^{(-i)}$ the same vectors with coordinate $i$ removed, we have:
\begin{align*}
    \proba{T_i=1|S_i=1,\bar S^{(-i)},\bar X} \leq \bar f\left(\proba{T_i=1|S_i=-1,\bar S^{(-i)},\bar X}\right)\,,
\end{align*}
by definition of $f-$DP.
Now, let's take the average over $(\bar S^{(-i)},\bar X^{(-i)})$ and use the independence assumption made in \Cref{prop:inequality_f_DP_independent}:
\begin{align*}
    &\E_{\bar S^{(-i)},\bar X^{(-i)}}\proba{T_i=1|S_i=1,\bar S^{(-i)},\bar X}\\
    &\leq \E_{\bar S^{(-i)},\bar X^{(-i)}}\bar f\left(\proba{T_i=1|S_i=-1,\bar S^{(-i)},\bar X}\right)\\
    &\leq\bar f\left( \E_{\bar S^{(-i)},\bar X^{(-i)}}\proba{T_i=1|S_i=-1,\bar S^{(-i)},\bar X}\right)\,,
\end{align*}
by concavity of $\bar f$, leading to:
\begin{align*}
    \proba{T_i=1|S_i=1,X_i}\leq \bar f\left(\proba{T_i=1|S_i=-1,X_i}\right)\,.
\end{align*}
Now, taking the average on both sides for $X_i$ sampled over the whole population:
\begin{align*}
        \E\proba{T_i=1|S_i=1,X_i}&\leq  \E\bar f\left(\proba{T_i=1|S_i=-1,X_i}\right)\\
        &\leq   \bar f\left(\E\proba{T_i=1|S_i=-1,X_i}\right)\,.
\end{align*}
We now use the fact that $\esp{\frac{A_i}{\pi(X_i)}|X_i}=\esp{\frac{1--A_i}{1-\pi(X_i)}|X_i}=1$, to obtain the final result.
\end{proof}

\subsection{Impossibility Result}
\label{app:impossibility}

A $p$-valid audit algorithm for the privacy hypothesis '$f$-DP' is abstracted as a mapping:
\begin{equation*}
    \cB_f: (\cD_0,\cD_1,\theta)\in\cP(\cX)\times\cP(\cX)\times\Theta\mapsto b\in\set{\textsc{True},\textsc{False}}\,,
\end{equation*}
which satisfies, for any $f$-DP algorithm $\cA$, for any $\dtrain\in\cP(\cX)$, any $\cD_1\subset\dtrain$ and $\cD_0\subset\cX\setminus\dtrain$:
\begin{equation*}
    \proba{\cB_f(\cD_0,\cD_1,\cA(\cD)) = \textsc{False}}\leq p\,.
\end{equation*}
In other words, $\cB_f$ confidently rejects the privacy hypothesis with probability $1-p$, if \textsc{False} is outputted.

Next proposition shows that if the overlap assumption is no satisfied, no $p$-valid audit algorithm can make meaningful audits.

\begin{proposition}[Impossibility]
\label{prop:impossibility}
   For any two non-overlapping distributions $\P_1,\P_0$, there exists an algorithm $\cA'$ which does not satisfy $\eps'$-DP for any $\eps'>0$, and such that if $\cD_1$ and $\cD_0$ are respectively samples drawn from $\P_1$ and $\P_0$, any $p$-valid audit algorithm for any $\eps$-DP hypothesis yields:
    \begin{equation*}
    \proba{\cB_{\eps}(\cD_0,\cD_1,\cA'(\cD_1)) = \textsc{False}}\leq p\,.
    \end{equation*}
    In other words, no audit algorithm can reject confidently any privacy level $\eps$, despite $\cA'$ not satisfying any $\eps'$-DP.. 
\end{proposition}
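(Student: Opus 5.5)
The plan is an indistinguishability argument: build a mechanism $\cA'$ whose output, when run on $\cD_1$, has exactly the distribution of the output of some genuinely $\eps$-DP mechanism (in fact a $0$-DP one) run on an admissible training set. The $p$-validity of $\cB_\eps$ on that private mechanism then carries over to $\cA'$.

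\textbf{Step 1: choose the mechanism.} Since $\P_1$ and $\P_0$ do not overlap, there is a measurable set $E\subset\cX$ with $\P_1(E)=1$ and $\P_0(E)=0$. On finite datasets, define
\begin{equation*}
\cA'(\cD) = \one_\set{\cD\cap E\neq\emptyset}\in\set{0,1}\,,
\end{equation*}
so $\cA'$ reports whether the dataset contains a point of $E$. This is deterministic and non-constant: take the adjacent datasets $\set{x}$ with $x\in E$ and $\emptyset$; their outputs are Dirac masses at $1$ and $0$. Hence $\proba{\cA'(\set{x})=1}=1> e^{\eps'}\cdot 0$, and $\cA'$ is not $\eps'$-DP for any $\eps'$. (A point mass on a fixed dataset would work equally well.)

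\textbf{Step 2: compare with a private mechanism.} Let $\cA_0\equiv 1$ be the constant mechanism, which is $0$-DP and therefore $\eps$-DP for every $\eps\ge0$. Almost surely $\cD_1\subset E$ and $\cD_1\neq\emptyset$, so $\cA'(\cD_1)=1=\cA_0(\cD_1)$ almost surely. The triples $(\cD_0,\cD_1,\cA'(\cD_1))$ and $(\cD_0,\cD_1,\cA_0(\cD_1))$ therefore have the same joint law, including the randomness of $(\cD_0,\cD_1)$ and any internal randomness of $\cB_\eps$.

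\textbf{Step 3: conclude.} Take $\dtrain=\cD_1$. Then $\cD_1\subset\dtrain$, and since $\cD_0\cap E=\emptyset$ almost surely while $\cD_1\subset E$, we also have $\cD_0\subset\cX\setminus\dtrain$ almost surely. Applying $p$-validity of $\cB_\eps$ to $\cA_0$ conditionally on $(\cD_0,\cD_1)$ and then integrating gives
\begin{equation*}
\proba{\cB_\eps(\cD_0,\cD_1,\cA_0(\cD_1))=\textsc{False}}\le p\,.
\end{equation*}
By Step 2 the same bound holds with $\cA'$ in place of $\cA_0$.

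\textbf{Main obstacle.} The delicate point is the quantifier structure. Validity is only guaranteed for \emph{fixed} datasets with $\cD_1\subset\dtrain$ and $\cD_0\subset\cX\setminus\dtrain$, so one must check that these containments hold almost surely for the random draws before conditioning and integrating. Non-overlap is exactly what makes this work: it rules out $\cD_0\cap\cD_1\neq\emptyset$ and guarantees that $\cA'$ and $\cA_0$ agree on every admissible $\cD_1$.
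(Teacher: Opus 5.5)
Your proof is correct and follows essentially the same route as the paper: you exhibit a non-private $\cA'$ that coincides with a constant, hence $0$-DP, mechanism on every member set supported in the member region, so the $p$-validity of $\cB_\eps$ transfers directly. The differences are minor. The paper uses the classifier-valued $\cA'(\cD):x\mapsto\one_{\set{x\in\cX_1\text{ or }x\in\cD}}$, which leaks only on points of $\cX_0$ and agrees with $x\mapsto\one_{\set{x\in\cX_1}}$ for every $\cD_1\subset\cX_1$, including $\cD_1=\emptyset$; your bit-valued $\one_{\set{\cD\cap E\neq\emptyset}}$ instead needs $\cD_1\neq\emptyset$, which is harmless for fixed positive sample sizes. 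Your explicit checks that $\cD_0\subset\cX\setminus\dtrain$ almost surely, and of the condition-then-integrate step, supply details the paper leaves implicit.
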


\begin{proof}
    To simplify, assume that $\cX=\R$.
    Let $\cX=\cX_0\cup\cX_1$ with $\cX_0\cap\cX_1=\emptyset$, and $\P_0$ and $\P_1$ be distributions having respective supports $\cX_0$ and $\cX_1$.
    Let $\cA(\cD)$ be the algorithm defined as, for any dataset $\cD\subset\cX$:
    \begin{equation*}
        \cA(\cD):x\in\cX\mapsto\one_\set{x\in\cX_1}\,.
    \end{equation*}
    $\cA$ is $0$-DP since it is deterministic and does not depend on the dataset $\cD$.
    Let then $\cA'(\cD)$ be the algorithm defined as, for any datset $\cD\subset\cX$:
    \begin{equation*}
        \cA'(\cD):x\in\cX\mapsto\one_\set{x\in\cX_1\text{ or }x\in\cD}\,.
    \end{equation*}
    $\cA'$ is not $\eps$-DP for any $\eps'>0$ (take $\cD=\set{x}$ with $x\in\cX_1$ and $x\in\cX_0$). 
    However, for $\cD_1\subset\cX_1$, we always have that $\cA(\cD_1)=\cA(\cD'_1)$.
    Thus,
    \begin{equation*}
    \proba{\cB_{\eps}(\cD_0,\cD_1,\cA'(\cD_1)) = \textsc{False}}=\proba{\cB_{\eps}(\cD_0,\cD_1,\cA(\cD_1)) = \textsc{False}}\leq p\,,
    \end{equation*}
    for any $p$-valid audit algorithm $\cB_\eps$ for the $\eps$-DP privacy hypothesis.

\end{proof}

\section{Composition-based Privacy Auditing}
\label{app:composition-audit}

\subsection{Propensity Scores and Privacy Parameters of $\cM_\DS$}

We now show the following result, linking the propensity score with privacy parameters of the distribution shift mechanism $\cM_\DS$.

\begin{proposition}
    \label{prop:DP_of_M_DS}
    $\cM_\DS$ is $\eps_\DS$-DP with $\eps_\DS = \sup_{x\in\cX} \left| \log\left(\frac{\pi(x)}{1-\pi(x)}\right)\right|$ and $(\eps, \delta_\DS(\eps))$-DP with
    $\delta_\DS(\eps) = \max_{a\in\set{0,1}}\P_a\left(\pi(X_i)^{2a-1}(1-\pi(X_i))^{1-2a}>e^\eps\right)$, for all $\eps>0$.
\end{proposition}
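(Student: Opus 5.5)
The plan is to reduce everything to a single coordinate and then compare $\P_1$ and $\P_0$ through their likelihood ratio, expressed in terms of $\pi$. Let $s,s'\in\set{-1,1}^m$ be adjacent under Replace adjacency, differing only in coordinate $i$, with, say, $s_i=1$ and $s'_i=-1$. By definition $\cM_\DS(s)=\bigotimes_j \cP_{s_j}$ and $\cM_\DS(s')$ share every factor except the $i$-th, which is $\P_1$ in one and $\P_0$ in the other. For a measurable $\cE\subset\cX^m$, I would apply Fubini and condition on $x_{-i}=(x_j)_{j\neq i}$, drawn from the common product $\bigotimes_{j\ne i}\cP_{s_j}$. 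Writing $\cE_{x_{-i}}=\set{x_i:(x_i,x_{-i})\in\cE}$ for the section, it then suffices to prove the single-coordinate inequality
\begin{equation*}
\P_a(\cE_{x_{-i}})\le e^{\eps}\,\P_{1-a}(\cE_{x_{-i}})+\delta
\end{equation*}
for $a\in\set{0,1}$, uniformly in the section. Integrating against the common law of $x_{-i}$ keeps both the factor $e^\eps$ and the additive $\delta$ unchanged, because the outer measure is a probability.

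The single-coordinate step relies on Bayes' rule under \Cref{hyp:data_generation}. Let $q=\P(S_i=1)$ and let $p_a$ be the density of $\P_a$ with respect to $\P_0+\P_1$. Then $\pi(x)=q p_1(x)/(qp_1(x)+(1-q)p_0(x))$, so
\begin{equation*}
\frac{p_1(x)}{p_0(x)}=\frac{1-q}{q}\cdot\frac{\pi(x)}{1-\pi(x)}.
\end{equation*}
In the balanced regime $q=1/2$ that the statement implicitly uses, this gives $p_a/p_{1-a}=\pi^{2a-1}(1-\pi)^{1-2a}$ for both values of $a$. For general $q$, an additive $|\log\frac{1-q}{q}|$ must be absorbed, for instance by defining $\pi$ relative to the balanced reweighting. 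I expect this prior-ratio bookkeeping to be the main obstacle. I would state the balanced normalization explicitly, or carry the constant along and note that it vanishes when $n_0=n_1$ in expectation.

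For the pure-DP claim, the likelihood ratio $L_a=p_a/p_{1-a}$ is bounded by $e^{\eps_\DS}$ everywhere, since $|\log L_a|=|\log\frac{\pi}{1-\pi}|\le\eps_\DS$. Integrating over the section gives $\P_a(\cE_{x_{-i}})\le e^{\eps_\DS}\P_{1-a}(\cE_{x_{-i}})$. For the approximate claim, I would split the section into $G=\set{L_a\le e^\eps}$ and $G^c$. On $G$, $\P_a(\cE_{x_{-i}}\cap G)\le e^\eps \P_{1-a}(\cE_{x_{-i}}\cap G)\le e^\eps \P_{1-a}(\cE_{x_{-i}})$. On the complement, $\P_a(G^c)=\P_a(\pi^{2a-1}(1-\pi)^{1-2a}>e^\eps)\le\delta_\DS(\eps)$. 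Adding the two bounds, and taking the max over $a$ so that both directions ($s_i=1\to -1$ and the reverse) are covered, yields $(\eps,\delta_\DS(\eps))$-DP. If $\P_0$ and $\P_1$ are not mutually absolutely continuous, the singular parts are handled by the convention that $\pi\in\set{0,1}$ there, so $L_a=\infty$ and they fall in $G^c$.
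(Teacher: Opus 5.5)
Your proof is correct and follows essentially the same route as the paper: both identify the privacy loss of $\cM_\DS$ between bit vectors adjacent on coordinate $i$ with the single-coordinate log-ratio $\log\frac{\pi(X_i)}{1-\pi(X_i)}$, and both bound $\delta$ by the probability that this log-ratio exceeds $\eps$. The paper does this through the hockey-stick identity $\delta=\sup_{s\sim s'}\E[(1-e^{\eps-\PLRV_{s,s'}})_+]$, and you do it through a direct good-set/bad-set split on sections. Your version is in fact more careful on two points the paper's proof passes over. First, working with the signed ratio $L_a$ gives exactly the one-sided $\delta_\DS(\eps)$ of the statement, whereas the paper bounds the loss by the two-sided $\eps_\DS(X_i)=|\log\frac{\pi(X_i)}{1-\pi(X_i)}|$, which by itself only yields $\P_a(\eps_\DS(X_i)>\eps)$. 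Second, you make explicit the balanced prior $\P(S_i=1)=1/2$ needed for $p_1/p_0=\pi/(1-\pi)$; this assumption is genuinely required, since for unbalanced priors the pure $\eps_\DS$-DP claim can fail.
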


\begin{proof}[Proof of \Cref{prop:DP_of_M_DS}]
    $\cM_\DS$ is $(\eps,\delta)$-DP with:
    \begin{equation*}
        \delta = \sup_{s\sim s'}\esp{(1-e^{\eps-\PLRV_{s,s'}^{\cM_\DS}(X)})_+}\,,
    \end{equation*}
    for any two adjacent bit datasets $s\sim s'$. Now, for $s\sim s'$ adjacent on bit $i\in[m]$, we have that:
    \begin{equation*}
        \esp{(1-e^{\eps-\PLRV_{s,s'}^{\cM_\DS}(X)})_+} \leq \max(\E_1[(1-e^{\eps-\eps_\DS(X_i)})_+],\E_0[(1-e^{\eps-\eps_\DS(X_i)})_+] )\,,
    \end{equation*}
    for $X_i\sim \cP_1$ if $s_i=1$, $X_i\sim\cP_0$ otherwise, since $\PLRV_{s,s'}^{\cM_\DS}(X)\leq \eps_\DS(X_i)$. Finally, $\max(\E_1[(1-e^{\eps-\eps_\DS(X_i)})_+],\E_0[(1-e^{\eps-\eps_\DS(X_i)})_+] )\leq \delta_\DS(\eps)$.
\end{proof}

\subsection{Proof of \Cref{thm:composition}}
\label{app:proof_composition_thm}

\begin{proof}[Proof of \Cref{thm:composition}]
    Under \Cref{hyp:data_generation}, we have that $X_i\sim \cD_1$ if $S_i=1$ and $X_i\sim\cD_0$ if $S_i=-1$, and $\dtrain = \cD\cup\set{X_i:S_i=1}$.
    Thus, we have that $(X_i)_{i\in[m]}\sim \cM_\DS(S)$ and $\theta\eqdef \cA(\dtrain)\sim\cM_\cA(S)$, so that we indeed have $(X,\theta)\sim\cM_\Tot(S)$.
    The randomness in the mechanisms $\cM_\DS$ and $\cM_\cA$ are independent from eachother, and thus $f_\Tot\geq f_{\cM_\DS}\otimes f_{\cA}$  using \citet[Theorem 3.2]{Dong22GDP} (adaptive composition theorem).
\end{proof}

\subsection{Proof of \Cref{cor:worst-case_audit}}

\begin{proof}[Proof of \Cref{cor:worst-case_audit}]
    The proof follows from \citep[Section 3]{Dong22GDP}:
    \begin{equation*}
        f_{\eps,\delta} \otimes f_{\eps',\delta'} = f_{\eps+\eps',1-(1-\delta)(1-\delta')}\,,
    \end{equation*}
    together with
    \begin{equation*}
        G_{\mu}\otimes G_{\mu'} = G_{\sqrt{\mu^2+{\mu'}^2}}\,,
    \end{equation*}
    using \Cref{prop:DP_of_M_DS} and \Cref{thm:composition}.
\end{proof}

\subsection{$(\eps,\delta)$-DP Auditing Algorithm}
\label{app:worst-case_bounds_algo_eps_delta}

We use \Cref{cor:worst-case_audit} and \citet{steinke2023privacy}'s audit algorithm for $(\eps,\delta)$-DP. To simplify the algorithm, we assume that the MIA outputs guesses $(T_i)_{i\in[m]}$ which satisfy $\NRM{T}_0=r$ almost surely, for some fixed $r\in[m]$.
This can be ensured by simply maximizing $\sum_i Y_i T_i$ over $T\in\set{-1,0,1}^m$ verifying $\set{T_i=\pm1}=k_\pm$, for $k_++k_-=r$.
\Cref{alg:audit_approx_DP_worst_case} instantiates \citet{steinke2023privacy}'s audit algorithm under this assumption.
\Cref{prop:audit_approx_DP_worst_case} then follows from \citet[Corollary 5.4]{steinke2023privacy}.

\begin{algorithm}[H]
    \caption{Composition-based Zero-Run $\eps$-DP auditing
    \label{alg:audit_approx_DP_worst_case}
    }
    \begin{algorithmic}[1]
      \STATE \textbf{Inputs:} Instance $\theta=\mathcal{A}(\dtrain)$ of the algorithm $\cA$ to audit; MIA $\cM$; $\cD_1\subset \dtrain$; $\cD_0\subset \cX\setminus\dtrain$; probability of error $p$; number of guesses $r\in[m]$; privacy parameter $\eps,\delta>0$, $\bar\eps_\DS,\delta_\DS>0$
      \STATE $\set{X_1,\dots,X_m}\gets\cD_1\cup\cD_0$
      \STATE $S_i\gets 2\one_\set{X_i\in\cD_1} -1 \in\set{-1,1}$
      \STATE Run \Cref{alg:mia} on $(\theta, \cD_1, \cD_0, \cM)$ to obtain $(T_i)_{i\in[m]}$
      \STATE Compute $W=\sum_{i=1}^m \max(0,T_iS_i)$
      \STATE Estimate $\hat v\in\R$ such that:
      $$ g(\hat v)+  2m(\delta+\delta_\DS-\delta\delta_\DS)\max_{i\in[m]}\set{\frac{g(\hat v)-g(\hat v-i)}{i}} \leq p \,,$$
      where
      $$g(u)\eqdef \proba{\mathrm{Binomial}\left(r,\frac{e^{\eps+\bar\eps_\DS}}{1+e^{\eps+\bar\eps_\DS}} \right) \geq u}$$
      \IF{ $W\geq \hat v$}
      \STATE \textbf{Return} \textsc{False}
      \ENDIF
    \end{algorithmic}
\end{algorithm}

\begin{proposition}\label{prop:audit_approx_DP_worst_case}
    Let $\eps,\delta>0$.
    Assume that \Cref{hyp:data_generation,hyp:approx_overlap} hold and that $\NRM{T}_0=r$ almost surely. Then, \Cref{alg:audit_approx_DP_worst_case} is $p$-valid:
    if \Cref{alg:audit_approx_DP_worst_case} outputs \textsc{False}, $\cA$ is not $(\eps,\delta)$-DP with probability $1-p$. 
\end{proposition}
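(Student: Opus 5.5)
The plan is to reduce validity of \Cref{alg:audit_approx_DP_worst_case} to the One-Run $(\eps,\delta)$-DP auditing guarantee of \citet[Corollary 5.4]{steinke2023privacy}, applied to the virtual mechanism $\cM_\Tot$ rather than to $\cA$. The argument is by contraposition: we assume $\cA$ is $(\eps,\delta)$-DP and show that the probability that the algorithm outputs \textsc{False} is at most $p$.

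\textbf{Step 1: privacy of $\cM_\Tot$.} By \Cref{prop:DP_of_M_DS} together with \Cref{hyp:approx_overlap}, $\cM_\DS$ is $(\bar\eps_\DS,\delta_\DS)$-DP with $\bar\eps_\DS=\log\frac{1-\eta}{\eta}$. On the event $\pi(X_i)\in[\eta,1-\eta]$, the pointwise log-likelihood ratio is at most $\bar\eps_\DS$, so the hockey-stick divergence is bounded by the $\P_a$-mass of the complement, which is at most $\delta_\DS$. Under \Cref{hyp:data_generation}, \Cref{thm:composition} then gives that $\cM_\Tot$ is $f_{\eps,\delta}\otimes f_{\bar\eps_\DS,\delta_\DS}$-DP. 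The composition identity used in the proof of \Cref{cor:worst-case_audit} shows this equals $(\eps+\bar\eps_\DS,\delta+\delta_\DS-\delta\delta_\DS)$-DP, with respect to replace adjacency on the bit vector $S\in\set{-1,1}^m$.

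\textbf{Step 2: cast the observations as a One-Run experiment on $\cM_\Tot$.} The input is the bit vector $S$ and the output is $(X,\theta)\sim\cM_\Tot(S)$. The guesses $T$ are a post-processing of $(X,\theta)$ through \Cref{alg:mia}, with $\NRM{T}_0=r$ almost surely. The steinke2023privacy analysis, however, requires $S$ to have i.i.d.\ uniform $\pm1$ entries, which fails under distribution shift since $\P(S_i=1)=n_1/m$ may be unbalanced. This is the main obstacle. Two approaches are available:
\begin{itemize}
\item[(i)] Their proof only uses, conditionally on $S_{<i}$ and the past guesses, that $\P(T_iS_i=1)\le \frac{e^{\eps'}}{1+e^{\eps'}}$ up to a $\delta'$ correction, with $\eps'=\eps+\bar\eps_\DS$. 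This pointwise bound can be obtained directly: the posterior odds of $S_i$ given $(X,\theta)$ split into the prior odds times the likelihood ratio of $\cM_\Tot$. Any residual prior imbalance is exactly what is absorbed into the propensity score, since $\log\frac{\pi}{1-\pi}$ already includes the prior odds.
\item[(ii)] Alternatively, condition on the multiset of $X$'s and use exchangeability from \Cref{hyp:data_generation}.
\end{itemize}
I would attempt (i) first. It amounts to rederiving their Lemma on the per-coordinate success probability, with the uniform prior replaced by the propensity bound $\pi(X_i)\in[\eta,1-\eta]$, and with the $\delta$-mass handled through the hockey-stick/PLRV decomposition of \Cref{def:PLRV}.

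\textbf{Step 3: conclude.} With the per-coordinate bound in hand, the martingale/stochastic-dominance argument of \citet[Thm.~5.2, Cor.~5.4]{steinke2023privacy} gives
\[
\proba{W\ge v}\le g(v)+2m\delta_\Tot\max_{i\in[m]}\frac{g(v)-g(v-i)}{i},
\qquad \delta_\Tot=\delta+\delta_\DS-\delta\delta_\DS,
\]
where $g$ is the Binomial$(r,\frac{e^{\eps+\bar\eps_\DS}}{1+e^{\eps+\bar\eps_\DS}})$ tail. The threshold $\hat v$ is chosen so that the right-hand side is at most $p$. Hence $\proba{\text{output }\textsc{False}}=\proba{W\ge\hat v}\le p$ whenever $\cA$ is $(\eps,\delta)$-DP, which is the claim.
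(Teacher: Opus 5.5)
Your Steps 1 and 3 are exactly the paper's argument. The paper obtains $(\eps+\bar\eps_\DS,1-(1-\delta)(1-\delta_\DS))$-DP for $\cM_\Tot$ from \Cref{thm:composition}, \Cref{prop:DP_of_M_DS} and \Cref{cor:worst-case_audit}, then applies \citet[Corollary 5.4]{steinke2023privacy} to $\cM_\Tot$ as a black box. (Strictly, only $f_{\eps,\delta}\otimes f_{\bar\eps_\DS,\delta_\DS}\ge f_{\eps+\bar\eps_\DS,\delta_\Tot}$ holds, not equality, but that is all you need.) The trouble is Step 2.

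The obstacle you raise is not caused by distribution shift. In the composition view, $X$ is part of the \emph{output} of $\cM_\Tot$, so the One-Run theorem only needs the input bits $S$ to be i.i.d.\ uniform. Under \Cref{hyp:data_generation} the $S_i$ are i.i.d.\ with $\P(S_i=1)=\rho$, whatever $\P_1,\P_0$ are. More importantly, your Step 1 already presupposes $\rho=1/2$. With $p_a$ the density of $\P_a$, the pointwise log-likelihood ratio of $\cM_\DS$ is $\log\frac{p_1(x_i)}{p_0(x_i)}=\log\frac{\pi(x_i)}{1-\pi(x_i)}-\log\frac{\rho}{1-\rho}$. Bounding it by $\bar\eps_\DS$ on $\{\pi(X_i)\in[\eta,1-\eta]\}$, which is what \Cref{prop:DP_of_M_DS} does, is therefore only valid in the balanced design. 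This leaves two cases:
\begin{itemize}
\item If $\rho=1/2$, the bits are uniform, Step 2 is unnecessary, and Steps 1 and 3 form a complete proof. This is the setting the paper tacitly uses; cf.\ the balancing $n_1=n_0$ in \Cref{app:propensity-estimation-details}.
\item If $\rho\neq1/2$, Step 1 is itself false: the $\pi$-based constant can understate the privacy parameter of $\cM_\DS$. The reduction to $\cM_\Tot$ then collapses, and route (i) has to replace the whole argument rather than patch it.
\end{itemize}

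That replacement is not carried out, and Step 3 does not follow from it as written. Redo the per-coordinate step with the propensity prior: condition on $(X_{-i},S_{-i})$, split on the overlap event, and apply $(\eps,\delta)$-DP of $\cA$ for fixed $X$. For every event $E$ of $(X,\theta)$ this gives
\[
\P(S_i=1,E)\le e^{\eps+\bar\eps_\DS}\,\P(S_i=-1,E)+\rho\,(\delta+\delta_\DS),
\]
and symmetrically with $1-\rho$. By contrast, the uniform-bit $(\eps+\bar\eps_\DS,\delta_\Tot)$-DP guarantee that Steinke et al.\ feed into their induction gives slack $\delta_\Tot/2$. Because the additive terms are now prior-weighted, the final bound need not keep the exact form $g(v)+2m\delta_\Tot\max_{i}\frac{g(v)-g(v-i)}{i}$ hard-coded in \Cref{alg:audit_approx_DP_worst_case}. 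You would have to rerun their approximate-DP induction with the modified inequality.

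Route (ii) does not rescue the black-box reduction either. Conditionally on the $X$'s, the $S_i$ are independent Bernoulli$(\pi(X_i))$, not uniform, which leads to the conditional analysis of \Cref{sec:conditional} rather than to Steinke et al.

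The clean fix is to add $\P(S_i=1)=1/2$ to the hypotheses, as the paper implicitly does. Then drop Step 2 and apply \citet[Corollary 5.4]{steinke2023privacy} verbatim to $\cM_\Tot$.
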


\subsection{$\mu$-GDP Auditing Algorithm}
\label{app:worst-case_bounds_algo_GDP}

We use \Cref{cor:worst-case_audit} and \citet{mahloujifar2024auditing}'s audit algorithm for general $f$-DP. We assume that the MIA outputs guesses $(T_i)_{i\in[m]}$ which satisfy $\NRM{T}_0=r$ almost surely, for some fixed $r\in[m]$.
This can be ensured by simply maximizing $\sum_i Y_i T_i$ over $T\in\set{-1,0,1}^m$ verifying $\set{T_i=\pm1}=k_\pm$, for $k_++k_-=r$.
\Cref{alg:audit_approx_DP_worst_case} instantiates \citet{mahloujifar2024auditing}'s audit algorithm under this assumption, for the tradeoff function $G'=G_{\sqrt{\mu^2+\mu_\DS^2}} \otimes f_{\delta_\DS}$, which writes as:
\begin{equation}\label{eq:G_prime}
    G'(p)=\left\{
    \begin{aligned}
        &(1-\delta_\DS)G_{\sqrt{\mu^2+\mu_\DS^2}}(\tfrac{p}{1-\delta_\DS})\quad \text{if} \quad 0\leq p \leq 1-\delta_\DS\,,\\
        &0\qquad \text{else.}
    \end{aligned}
    \right.
\end{equation}
\Cref{prop:audit_GDP_worst_case} then follows from \citet[Theorem 10]{mahloujifar2024auditing}.

\begin{algorithm}
    \caption{Composition-based Zero-Run $\mu$-GDP auditing I
    \label{alg:audit_GDP_worst_case}
    }
    \begin{algorithmic}[1]
      \STATE \textbf{Inputs:} 
      Instance $\theta=\mathcal{A}(\dtrain)$ of the algorithm $\cA$ to audit; MIA $\cM$; $\cD_1\subset \dtrain$; $\cD_0\subset \cX\setminus\dtrain$; probability of errors $p$ and $p'$; number of guesses $r$; number of correct guesses $c$
      \STATE $\set{X_1,\dots,X_m}\gets\cD_1\cup\cD_0$
      \STATE $S_i\gets 2\one_\set{X_i\in\cD_1} -1 \in\set{-1,1}$
      \STATE Run \Cref{alg:mia} on $(\theta, \cD_1, \cD_0, \cM)$ to obtain $(T_i)_{i\in[m]}$
      \STATE Set $h[k]=r[k]=0$ for $k\in[c]$
      \STATE Set $r[c]=\frac{pc}{m}$ and $h[c]=\frac{p(r-c)}{m}$
      \FOR{$k\in\set{c-1,\ldots,0}$}
        \STATE $h[k] = \bar {G'}^{-1}(r[k+1]) $ with $\bar {G'}^{-1}$ the inverse of 
        $\bar {G'}=1-G'$ in the tradeoff function sense ($G'$ defined in \Cref{eq:G_prime})
        \STATE $r[k] = r[k+1] + \frac{k}{r-k}(h[k]-h[k+1])$
      \ENDFOR
      \IF{ $r[0] + h[0] \geq \frac{r}{m}$}
      \STATE Return \textsc{False} if the number of correct guesses is equal to or larger than $c$
      \ENDIF
    \end{algorithmic}
  \end{algorithm}

\begin{proposition}\label{prop:audit_GDP_worst_case}
    Let $\mu>0$.
    Assume that \Cref{hyp:data_generation,hyp:approx_overlap} hold and that $\NRM{T}_0=r$ almost surely. Then, \Cref{alg:audit_GDP_worst_case} is $p$-valid:
    if \Cref{alg:audit_GDP_worst_case} outputs \textsc{False}, $\cA$ is not $\mu$-GDP with probability $1-p$. 
\end{proposition}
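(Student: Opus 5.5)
The plan is a reduction to the One-Run $f$-DP audit of \citet[Theorem 10]{mahloujifar2024auditing}, applied to the virtual mechanism $\cM_\Tot$ rather than to $\cA$. Two facts are needed:

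\begin{itemize}
\item $\cM_\Tot$ satisfies $G'$-DP whenever $\cA$ is $\mu$-GDP.
\item $\cM_\Tot$ meets the structural requirements of that theorem: independent, centered membership bits, and guesses produced by a post-processing of its output.
\end{itemize}

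Validity of \Cref{alg:audit_GDP_worst_case} then follows because the algorithm is exactly Mahloujifar et al.'s recursion instantiated with $\bar{G'}$.

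\textbf{Step 1: privacy of $\cM_\Tot$.} Assume $\cA$ is $\mu$-GDP. By \Cref{prop:DP_of_M_DS} together with \Cref{hyp:approx_overlap}, $\cM_\DS$ is, up to the event of probability $\delta_\DS$ where $\pi(X_i)\notin[\eta,1-\eta]$, an $\bar\eps_\DS$-DP mechanism on a single bit. Its trade-off function is therefore at least $f_{\bar\eps_\DS,0}\otimes f_{0,\delta_\DS}$.

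Next, $f_{\bar\eps_\DS,0}$ is the trade-off of a binary randomized response. I would show that it dominates $G_{\bar\mu_\DS}$ with $\bar\mu_\DS$ chosen as in \Cref{cor:worst-case_audit}. The route is to note that randomized response with flip probability $1/(1+e^{\bar\eps_\DS})$ is a post-processing of a Gaussian shift of size $\bar\mu_\DS$: thresholding the two Gaussians so that both error rates equal $1/(1+e^{\bar\eps_\DS})$ gives exactly the stated formula for $\bar\mu_\DS$. Post-processing then yields $f_{\bar\eps_\DS,0}\geq G_{\bar\mu_\DS}$.

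With this in hand, \Cref{thm:composition} and the identity $G_\mu\otimes G_{\mu'}=G_{\sqrt{\mu^2+\mu'^2}}$ give that $\cM_\Tot$ is $G_{\sqrt{\mu^2+\bar\mu_\DS^2}}\otimes f_{0,\delta_\DS}$-DP. Finally, I would verify the explicit form \eqref{eq:G_prime} of this tensor product: composing with $f_{0,\delta}$ amounts to mixing in a fully revealing event of mass $\delta$, which rescales $G$ on $[0,1-\delta_\DS]$.

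\textbf{Step 2: check the hypotheses of the One-Run theorem for $\cM_\Tot$.} The input of $\cM_\Tot$ is the bit vector $S$. Under \Cref{hyp:data_generation} the bits are i.i.d. The theorem, however, needs the bits to be uniform on $\{-1,1\}$, which holds only if $\P(S_i=1)=1/2$. This is the main obstacle.

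\begin{itemize}
\item First attempt: condition on the counts, or assume a balanced design $n_0=n_1$, so that $S$ is a uniformly random balanced assignment. The analysis of Mahloujifar et al. only uses exchangeability and per-coordinate $f$-DP through a conditional argument, which should go through.
\item Fallback if that fails: absorb the prior imbalance into $\bar\eps_\DS$. The propensity already includes the marginal odds, so $\log\frac{1-\eta}{\eta}$ dominates the prior log-odds, and the base rate $1/2$ can be used after reweighting.
\end{itemize}

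The guesses $T=\mathrm{MIA}(\theta,X)$ are a function of the output $(X,\theta)$ of $\cM_\Tot$. By \Cref{alg:mia} and the assumption $\NRM{T}_0=r$, they are a valid adversary with exactly $r$ guesses.

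\textbf{Step 3: conclude.} Apply \citet[Theorem 10]{mahloujifar2024auditing} to $\cM_\Tot$ with trade-off function $G'$. Under the hypothesis that $\cM_\Tot$ is $G'$-DP, the probability that the number of correct guesses is at least $c$ while $r[0]+h[0]\geq r/m$ is at most $p$.

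Taking the contrapositive of Step 1: if the algorithm returns \textsc{False}, then $\cM_\Tot$ is not $G'$-DP except with probability $p$, and hence $\cA$ is not $\mu$-GDP. This is exactly the $p$-validity claimed in \Cref{prop:audit_GDP_worst_case}.
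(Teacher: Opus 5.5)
Your route matches the paper's. \Cref{cor:worst-case_audit} (from \Cref{prop:DP_of_M_DS}, \Cref{thm:composition} and $G_\mu\otimes G_{\mu'}=G_{\sqrt{\mu^2+\mu'^2}}$) makes $\cM_\Tot$ $G'$-DP whenever $\cA$ is $\mu$-GDP, and validity is inherited from \citet[Theorem 10]{mahloujifar2024auditing} applied to $\cM_\Tot$. Step 1 is correct and even supplies steps the paper skips. Thresholding $\mathcal N(0,1)$ versus $\mathcal N(\bar\mu_\DS,1)$ at $\bar\mu_\DS/2$ produces randomized response with flip probability $1/(1+e^{\bar\eps_\DS})$, so $f_{\bar\eps_\DS,0}\geq G_{\bar\mu_\DS}$ by post-processing. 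Then $f_{\eps,\delta}=f_{\eps,0}\otimes f_{0,\delta}$, together with $(g\otimes f_{0,\delta})(\alpha)=(1-\delta)\,g(\alpha/(1-\delta))$ on $[0,1-\delta]$, gives \eqref{eq:G_prime}.

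The gap is in Step 2: the obstacle you flag is real, but neither repair works.

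\emph{Conditioning on the counts.} Conditioning on the counts (or any design with exactly $n_1$ members) destroys the independence that the per-coordinate step of the one-run proof uses. As in \Cref{lemma:f_DP}, one compares $\cM_\Tot(1,S_{-1})$ with $\cM_\Tot(-1,S_{-1})$ averaged over the \emph{same} law of $S_{-1}$, and then uses concavity of $\bar f$. With a fixed count, the laws of $S_{-1}$ given $S_1=\pm1$ differ by one member. Exchangeability is not enough. Take $m=2$ with exactly one member, and the bit mechanism $s\mapsto\one_{\set{s_1=1}}-\one_{\set{s_2=1}}+\mathrm{Lap}(1/\eps)$, which is $\eps$-DP in each bit. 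Guessing $T_1=\mathrm{sign}(\theta)$, $T_2=-T_1$ gets both right with probability $1-e^{-\eps}/2>(e^\eps/(1+e^\eps))^2$.

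\emph{The fallback.} This is not a reduction either. If $\rho=\P(S_i=1)\neq1/2$, the privacy loss of $\cM_\DS$ on bit $i$ is $\log\frac{\dd\P_1}{\dd\P_0}(x)=\log\frac{\pi(x)}{1-\pi(x)}-\log\frac{\rho}{1-\rho}$, which can exceed $\bar\eps_\DS$. So \Cref{prop:DP_of_M_DS}, on which your Step 1 already relies, fails. Moreover, Theorem 10 is a tail bound under the fair product prior, and no reweighting transfers it to $\mathrm{Bernoulli}(\rho)^{\otimes m}$, since the density ratio is exponential in $m$. A biased prior needs a posterior-based argument like the paper's conditional analysis, not Theorem 10.

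\emph{The fix.} Make balance part of the hypothesis, $\P(S_i=1)=1/2$, which the paper assumes tacitly; it is exactly what \Cref{prop:DP_of_M_DS} needs. Then \Cref{hyp:data_generation} makes $S$ i.i.d.\ uniform, and $\cM_\Tot$ is literally a One-Run experiment on $m$ independent fair bits with a post-processed attacker making $r$ guesses. Theorem 10 then applies without any conditioning, which is all the paper's one-line proof uses.
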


The error probability $\delta_\DS$ can also be put in the error probability of the audit, allowing us to use \citet{mahloujifar2024auditing}'s auditing algorithm for the GDP tradeoff function, yielding a $(p+\delta_\DS)$-valid audit (instead pf $p$-valid). This is what we describe in the discussion after \Cref{cor:worst-case_audit}, and what next algorithm and corollary formalize.

\begin{algorithm}
    \caption{Composition-based Zero-Run $\mu$-GDP auditing II 
    \label{alg:audit_GDP_worst_case_error_proba}
    }
    \begin{algorithmic}[1]
      \STATE \textbf{Inputs:} 
      Instance $\theta=\mathcal{A}(\dtrain)$ of the algorithm $\cA$ to audit; MIA $\cM$; $\cD_1\subset \dtrain$; $\cD_0\subset \cX\setminus\dtrain$; probability of errors $p$ and $p'$; parameter $\mu$; number of guesses $r$; number of correct guesses $c$
      \STATE $\set{X_1,\dots,X_m}\gets\cD_1\cup\cD_0$
      \STATE $S_i\gets 2\one_\set{X_i\in\cD_1} -1 \in\set{-1,1}$
      \STATE Run \Cref{alg:mia} on $(\theta, \cD_1, \cD_0, \cM)$ to obtain $(T_i)_{i\in[m]}$
      \STATE Set $h[k]=r[k]=0$ for $k\in[c]$
      \STATE Set $r[c]=\frac{pc}{m}$ and $h[c]=\frac{p(r-c)}{m}$
      \FOR{$k\in\set{c-1,\ldots,0}$}
        \STATE $h[k] = \bar {G}^{-1}_\mu(r[k+1]) $ with $\bar {G}_\mu^{-1}$ the inverse of 
        $\bar {G}_\mu=1-G_\mu$ in the tradeoff function sense ($G_\mu$ is the $\mu$-GDP tradeoff function)
        \STATE $r[k] = r[k+1] + \frac{k}{r-k}(h[k]-h[k+1])$
      \ENDFOR
      \IF{ $r[0] + h[0] \geq \frac{r}{m}$}
      \STATE Return \textsc{False} if the number of correct guesses is equal to or larger than $c$
      \ENDIF
    \end{algorithmic}
  \end{algorithm}

\begin{proposition}\label{prop:audit_GDP_worst_case_error_proba}
    Let $\mu>0$.
    Assume that \Cref{hyp:data_generation,hyp:approx_overlap} hold and that $\NRM{T}_0=r$ almost surely. Then, \Cref{alg:audit_GDP_worst_case} is $(p+\delta_\DS)$-valid:
    if \Cref{alg:audit_GDP_worst_case_error_proba} outputs \textsc{False}, $\cA$ is not $\mu$-GDP with probability $1-p-\delta_\DS$. 
\end{proposition}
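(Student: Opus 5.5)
The plan is to reduce \Cref{prop:audit_GDP_worst_case_error_proba} to the $(p+\delta_\DS)$-validity of the unmodified GDP audit of \citet{mahloujifar2024auditing}, applied to $\cM_\Tot$ after conditioning on a good event. Let $E$ be the event that every audit point lies in the overlap region, or more precisely that the realized distribution-shift privacy loss is bounded by $\bar\eps_\DS$.

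\textbf{Step 1 (exact trade-off).} First I will show that, by \Cref{prop:DP_of_M_DS} and \Cref{hyp:approx_overlap}, $\cM_\DS$ is $f_{\bar\eps_\DS,\delta_\DS}$-DP. This is the $G'$-type trade-off $f_{0,\delta_\DS}$ composed with a pure $\bar\eps_\DS$ part. A pure $\bar\eps_\DS$-DP binary-membership mechanism is dominated by randomized response, whose trade-off function dominates $G_{\bar\mu_\DS}$; this follows from the definition of $\bar\mu_\DS$ as the Gaussian separation matching the randomized-response accuracy $e^{\bar\eps_\DS}/(1+e^{\bar\eps_\DS})$. I expect this domination to be the delicate point. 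Randomized response is not exactly $\bar\mu_\DS$-GDP at every $\alpha$, only at its kink. So I will either verify the inequality $f_{\bar\eps,0}\geq G_{\bar\mu}$ directly by checking the tangent-line condition at the kink, or weaken it to what \Cref{cor:worst-case_audit} already asserts. Granting \Cref{cor:worst-case_audit}(2), $\cM_\Tot$ is $G_{\sqrt{\mu^2+\bar\mu_\DS^2}}\otimes f_{0,\delta_\DS}$-DP whenever $\cA$ is $\mu$-GDP.

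\textbf{Step 2 (absorbing $\delta_\DS$ into the error).} I will use that $f_{0,\delta_\DS}$ is the trade-off of a mechanism that, with probability $\delta_\DS$, reveals the bit and otherwise is perfectly private. By coupling, the distribution of $(X,\theta)$ under $\cM_\Tot(s)$ can be written as a mixture: with probability at least $1-\delta_\DS$, an output of a $G_{\sqrt{\mu^2+\bar\mu_\DS^2}}$-DP mechanism, and with the remaining probability an arbitrary output. This decomposition follows from the characterization of $f_{0,\delta}$-composition, and concretely from restricting $X_i$ to the overlap region, where the likelihood ratio is bounded by $e^{\bar\eps_\DS}$ under \Cref{hyp:approx_overlap}. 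One subtlety needs care: the bad event has probability $\delta_\DS$ per point, so a union over $m$ points could inflate it. I will handle this by following the paper's convention that $\delta_\DS$ controls the single-coordinate failure used in the $f$-DP hypothesis-test argument of \citet{mahloujifar2024auditing}, which is applied coordinatewise.

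\textbf{Step 3 (applying the One-Run theorem).} On the good component, \citet[Theorem 10]{mahloujifar2024auditing} applied with trade-off $G_{\mu'}$ guarantees that \Cref{alg:audit_GDP_worst_case_error_proba} outputs \textsc{False} with probability at most $p$. Here $\mu'=\sqrt{\mu^2+\bar\mu_\DS^2}$, which is the parameter the algorithm should use for validity. Since the membership bits play the role of the randomized input and the MIA output is post-processing of $(X,\theta)$ with $\NRM{T}_0=r$, the theorem applies. A union bound with the bad component gives a total false-rejection probability of at most $p+\delta_\DS$ when $\cA$ is $\mu$-GDP. Taking the contrapositive yields the statement.
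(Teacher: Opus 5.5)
Your Step 2, and with it the union bound in Step 3, is where the argument breaks, and it cannot be repaired for the statement as written. The mixture reading of $G_{\mu'}\otimes f_{0,\delta_\DS}$, with $\mu'=\sqrt{\mu^2+\bar\mu_\DS^2}$, holds for one pair of adjacent bit vectors at a time, i.e.\ for one coordinate $X_i$. To move $\delta_\DS$ out of the trade-off function and into the error probability, you need a single event, independent of $S$, on which $\cM_\Tot$ is $G_{\mu'}$-DP in all coordinates simultaneously.

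The natural construction writes $\P_a=(1-\delta_\DS)\P_a'+\delta_\DS\P_a''$ with $\P_1',\P_0'$ being $\bar\eps_\DS$-indistinguishable. Such a decomposition exists because a pair with trade-off at least $f_{\bar\eps_\DS,0}\otimes f_{0,\delta_\DS}$ is a post-processing of randomized response combined with a bit revealed with probability $\delta_\DS$. One then draws independent flags selecting the component of each $X_i$. The good event ``no $X_i$ uses its bad component'' has probability $(1-\delta_\DS)^m$, not $1-\delta_\DS$.

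Restricting to the $r$ guessed points does not help, because the MIA chooses them after seeing $(X,\theta)$. The coordinatewise argument of \citet{mahloujifar2024auditing} does not help either: it can absorb a per-coordinate $\delta_\DS$ only inside the trade-off function, which is exactly \Cref{alg:audit_GDP_worst_case} with $G'$ and $p$-validity. Appealing to ``the paper's convention'' therefore does not close the gap. The paper itself only asserts, in one sentence, that $\delta_\DS$ ``can be put in the error probability''; that is your Step 2, with the same flaw.

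With the per-point $\delta_\DS$ of \Cref{hyp:approx_overlap}, the claim is in fact false. Let $\cA$ be constant (so $\mu$-GDP for every $\mu$), let memberships be i.i.d.\ balanced, and let $\P_1=(1-\delta_\DS)Q+\delta_\DS\,\mathrm{Dirac}(a)$ and $\P_0=(1-\delta_\DS)Q+\delta_\DS\,\mathrm{Dirac}(b)$, with $a\neq b$ outside the support of $Q$. Then $\pi=1/2$ on the support of $Q$, $\pi(a)=1$ and $\pi(b)=0$, so \Cref{hyp:approx_overlap} holds for every $\eta$.

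Consider the MIA that guesses $T_i=1$ on $X_i=a$ and $T_i=-1$ on $X_i=b$, padded to exactly $r$ guesses. It is right on all $r$ guesses whenever $N\ge r$, where $N\sim\mathrm{Binomial}(m,\delta_\DS)$. Take $m=r^2$ and $\delta_\DS=2/r$; then $\proba{N\ge r}\to 1$. On the other hand, for any $G_{\mu'}$-DP mechanism, the one-run counting constraint with index set $\set{r}$ (the analogue of \eqref{eq:bound_thm_fdp}) gives $\proba{\text{all } r \text{ guesses correct}}\le \frac{m}{r}\bar G_{\mu'}(1/m)=r\,\bar G_{\mu'}(r^{-2})\to 0$. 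So for large $r$ the test at level $\mu'$ rejects upon seeing $r$ correct guesses. The false-rejection probability therefore tends to $1$, whereas $p+\delta_\DS=p+2/r$.

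What your coupling does establish is validity at level $p+1-(1-\delta_\DS)^m\le p+m\delta_\DS$. The stated $p+\delta_\DS$ is correct only if $\delta_\DS$ is reinterpreted as the probability that \emph{some} audit point leaves the overlap region.

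Two smaller remarks. The domination you flag as delicate in Step 1 follows from a chord argument, not a tangent one. $G_{\bar\mu_\DS}$ is convex and passes through $(0,1)$, $(\alpha^*,\alpha^*)$ with $\alpha^*=1/(1+e^{\bar\eps_\DS})$, and $(1,0)$, while $f_{\bar\eps_\DS,0}$ is the piecewise-linear interpolation of these three points; hence $G_{\bar\mu_\DS}\le f_{\bar\eps_\DS,0}$. That step also silently needs $\proba{S_i=1}=1/2$, since otherwise $\dd\P_1/\dd\P_0$ equals $\pi/(1-\pi)$ only up to the prior odds. Finally, your observation that \Cref{alg:audit_GDP_worst_case_error_proba} must be run at $\mu'$ rather than at $\mu$, as its pseudo-code states, is correct.
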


\section{Conditional $(\eps,\delta)$-Auditing}
\label{app:proof_eps_delta_cond_tampered}

\subsection{Conditional $\eps$-DP Auditing}
\label{app:proof_pure_DP_tampered}

We first present a relaxed version of \Cref{thm:conditional_audit_tampered}.\ref{thm:conditional_audit_tampered:eps_delta} with $\delta=0$. We include this simplified result for its proof simplicity and clarity of exposure, as well as for its use for rejecting propensity score models, as explained in next paragraph.

\begin{remark}[Rejecting propensity score models]
\label{remark:propensity_rejection}
    \Cref{thm:pure_DP_auditing_tampered} below can additionally be used to audit a propensity score model $\hat\pi$, to reject the following hypothesis:
    \begin{equation*}
        \cH_{\hat\pi}\,:\qquad \max(\pi(x),1-\pi(x)) \leq \max(\hat\pi(x),1-\hat\pi(x))\,, \qquad \forall x\,.
    \end{equation*}
    Suppose a third-party auditor who wishes to reject an audit performed with $\hat\pi$ manages to build a stronger propensity score model $\tilde \pi$.
    Let the $\tilde\pi$-MIA be defined as thresholding on scores $\tilde\pi(X_i)$ and attributing $T_i=\one_\set{\tilde\pi(X_i)<p_-}+\one_\set{\tilde\pi(X_i)>p_+}$.
    
    Setting $\eps=0$ and $b_i$ obtained using $\hat\pi$ in \Cref{thm:pure_DP_auditing_tampered} below, having a large success rate for the tampered $\tilde\pi$-MIA (which only hinges on distribution shift) means that $\hat\pi$ does not not penalize enough distribution shifts, therefore rejecting $\cH_{\hat\pi}$.

\end{remark}

\begin{theorem}[Tampered $\eps$-DP Auditing]\label{thm:pure_DP_auditing_tampered}
    Assume that \Cref{hyp:data_generation} holds.
    and let $(B_i)_{i\in[m]}$ be independent Bernoulli random variables of parameters $(\frac{1+e^{-\eps-\eps_\DS(X_i)}}{1+e^{-\eps}})_{i\in[m]}$, independent of all the rest.
    Then, for any $v>0$, if $\cA$ is $\eps$-DP:
    \begin{align*}
        \proba{\sum_{i=1}^m B_i \one_\set{T_i=S_i} \geq v \Big| (X_i,T_i)_{i\in[m]}} &\leq   \proba{ \mathrm{Binomial}\left(\NRM{T}_1 , \frac{e^\eps}{1+e^\eps} \right) \geq v\Big| (X_i,T_i)_{i\in[m]}}\,.
    \end{align*}
\end{theorem}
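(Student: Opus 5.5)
The plan is to adapt the sequential-conditioning argument of \citet{steinke2023privacy} and to insert the propensity odds at the Bayes step. Write $\bar X=(X_i)_{i\in[m]}$, $\bar T=(T_i)_{i\in[m]}$, and $W_i=B_i\one_\set{T_i=S_i}$. It suffices to prove the following per-coordinate claim, for every $i$: conditionally on $(\bar X,\bar T)$ and on $(S_j,B_j)_{j<i}$,
\begin{equation*}
\proba{W_i=1 \,\big|\, \bar X,\bar T,(S_j,B_j)_{j<i}}\leq \frac{e^\eps}{1+e^\eps}\one_\set{T_i\neq 0}\,.
\end{equation*}
Given this claim, $\sum_i W_i$ is a sum of $\set{0,1}$ variables, each with conditional success probability at most $q=\frac{e^\eps}{1+e^\eps}$ on the $\NRM{T}_1$ coordinates where $T_i\ne0$, and equal to $0$ elsewhere. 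The standard stochastic-domination lemma (the coupling lemma used by \citet{steinke2023privacy}) then gives domination by $\mathrm{Binomial}(\NRM{T}_1,q)$ conditionally on $(\bar X,\bar T)$, which is the statement.

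\textbf{Step 1 (Bayes decomposition).} Fix $i$ with $T_i\neq0$, and condition first on the finer information $(\bar X,S_{-i})$, where $S_{-i}$ collects all other bits. Under \Cref{hyp:data_generation} the pairs $(X_j,S_j)$ are i.i.d., so $\proba{S_i=1\mid \bar X,S_{-i}}=\pi(X_i)$. Given $(\bar X,S_{-i})$, the output $\theta$ is $\cA(\cD\cup\set{X_j:S_j=1})$, and the two values of $S_i$ correspond to neighbouring datasets that differ only by $X_i$. The guesses $\bar T$ are a post-processing of $(\theta,\bar X)$, using MIA randomness independent of everything else. Hence, by $\eps$-DP and post-processing, the likelihood ratio of the observed $\bar T$ under $S_i=1$ versus $S_i=-1$ lies in $[e^{-\eps},e^{\eps}]$. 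Bayes' rule then bounds the posterior odds that $S_i=T_i$ by $e^{\eps}\cdot e^{\eps_\DS(X_i)}$, so
\begin{equation*}
\proba{S_i=T_i\mid \bar X,S_{-i},\bar T}\leq p_i(\eps)=\frac{e^{\eps+\eps_\DS(X_i)}}{1+e^{\eps+\eps_\DS(X_i)}}\,.
\end{equation*}

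\textbf{Step 2 (coarsening).} The bound in Step 1 holds pointwise in $S_{-i}$. I then integrate out $(S_j)_{j>i}$ under their conditional law given $(\bar X,\bar T,S_{<i})$, which preserves the inequality. The variables $B_{<i}$ are independent of everything else, so conditioning on them changes nothing.

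\textbf{Step 3 (tampering).} Since $B_i$ is independent with parameter $b_i=\frac{1+e^{-\eps-\eps_\DS(X_i)}}{1+e^{-\eps}}$, the product simplifies:
\begin{equation*}
b_i\,p_i(\eps)=\frac{1+e^{-\eps-\eps_\DS(X_i)}}{1+e^{-\eps}}\cdot\frac{1}{1+e^{-\eps-\eps_\DS(X_i)}}=\frac{e^\eps}{1+e^\eps}\,,
\end{equation*}
which proves the per-coordinate claim. The domination lemma then finishes the proof as described above.

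\textbf{Main obstacle.} The delicate point is Step 1: making the Bayes argument rigorous when $\theta$ is continuous and the conditioning event includes the full guess vector $\bar T$, which depends jointly on all points. I would handle it by working with densities of $\bar T$, which takes finitely many values, conditionally on $(\bar X,S_{-i})$. I would use the DP inequality $\proba{\bar T=t\mid S_i=1,\cdot}\leq e^\eps\proba{\bar T=t\mid S_i=-1,\cdot}$, together with its reverse, obtained from $\cA$ applied to the neighbouring datasets. This must be combined with the observation that $\pi(X_i)$ is the correct prior given $(\bar X,S_{-i})$, which relies on independence across $i$. Once this is in place, the remaining steps are routine.
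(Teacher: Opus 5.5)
Your proposal is correct and follows essentially the same route as the paper's proof. Both arguments use a Bayes step with prior $\pi(X_i)$ given $(\bar X,S_{-i})$ together with the $\eps$-DP likelihood-ratio bound to get $\P(S_i=T_i\mid\cdot)\le p_i(\eps)$, multiply by the independent $b_i$ to reach $\frac{e^\eps}{1+e^\eps}$, and conclude with the sequential stochastic-domination lemma of \citet[Lemma 4.9]{steinke2023privacy}. The only difference is cosmetic: you condition directly on the finitely-valued guess vector $\bar T$ (and on $(S_j,B_j)_{j<i}$), whereas the paper conditions on events $\{\theta\in\cE\}$; your version is slightly cleaner and avoids the paper's notational slips between $\max(0,T_jS_j)$ and $B_j\one_{\set{T_j=S_j}}$.
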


\begin{proof}[Proof of \Cref{thm:pure_DP_auditing}]
	The proof is largely inspired by that of \citet[Proposition 5.1]{steinke2023privacy}.
	Let ${\theta\in\cE}$ be an event of non-null probability.
	Fix some $i\in[m]$ and $s^{(-i)}\in\set{-1,1}^m$.
	We first start by proving an upper-bound on:
	\begin{equation*}
		\proba{S_i=1 | \theta\in\cE, S^{(-i)} = s^{(-i)},\bar X }\,.
	\end{equation*} 
	Using Bayes' theorem,
	\begin{align*}
		&\proba{S_i=1 | \theta\in\cE, S^{(-i)} s^{(-i)},\bar X } \\
        &= \frac{ \proba{ \theta\in\cE | S_i=1 , S^{(-i)} = s^{(-i)},\bar X } \times \proba{S_i=1 | S^{(-i)} = s^{(-i)},\bar X }}{ \proba{ \theta\in\cE | S^{(-i)} = s^{(-i)},\bar X } } \\
		\quad& =  \frac{ \proba{ \theta\in\cE | S_i=1 , S^{(-i)} = s^{(-i)},\bar X } \times \proba{S_i=1 | S^{(-i)} = s^{(-i)},\bar X }}{ \sum_{s\in\set{-1,1}}\proba{ \theta\in\cE | S_i=s, S^{(-i)} = s^{(-i)},\bar X } \proba{ S_i=s | S^{(-i)} = s^{(-i)},\bar X } } \\
		\quad& =  \frac{ \pi(X_i) \proba{ \theta\in\cE | S_i=1 , S^{(-i)} = s^{(-i)},\bar X } }{ \pi(X_i)\proba{ \theta\in\cE| S_i=1, S^{(-i)} = s^{(-i)},\bar X }  + (1-\pi(X_i)\proba{ \theta\in\cE| S_i=-1, S^{(-i)} = s^{(-i)},\bar X }  } \,,
	\end{align*}
	since $\proba{ S_i=1 | S^{(-i)} = s^{(-i)},\bar X } = \pi(X_i)$   and  $\proba{ S_i=-1 | S^{(-i)} = s^{(-i)},\bar X } = 1-\pi(X_i)$.
	Then, using the fact that due to $\eps$-DP we have that:
	\begin{equation*}
		R \eqdef  \frac{\proba{ \theta\in\cE | S_i=1 , S^{(-i)} = s^{(-i)},\bar X } }{ \proba{ \theta\in\cE | S_i=-1 , S^{(-i)} = s^{(-i)},\bar X }} \in \left[ \frac{1}{e^\eps} , e^\eps\right]\,,
	\end{equation*}
	we have:
	\begin{align*}
		\proba{S_i=1 | \theta\in\cE, S^{(-i)} = s^{(-i)},\bar X } & \leq \frac{ \pi(X_i) }{ \pi(X_i)  + (1-\pi(X_i))R^{-1} } \\
        & \leq \frac{ \pi(X_i) }{ \pi(X_i)  + (1-\pi(X_i))e^{-\eps} } \\
		&\leq \frac{1}{ 1  + \frac{1-\pi(X_i)}{\pi(X_i)}e^{-\eps} }\\
		&\leq \frac{1}{1+e^{-\eps-\eps_i}}\,,
	\end{align*}
    where $e^{\eps_i}=\max(\frac{\pi(X_i)}{1-\pi(X_i)},\frac{1-\pi(X_i)}{\pi(X_i)})$.
	Similarly, we have that:
	\begin{align*}
		\proba{S_i=-1 | \theta\in\cE, S^{(-i)} = s^{(-i)},\bar X } & \leq \frac{1}{1+e^{-\eps-\eps_i}}\,.
	\end{align*}
	We can thus prove by induction that, conditionally on $(X_j,T_j)_{j\in[m]}$, for all $i\in[m]$, $W_i = \sum_{j\leq i} \max(0,T_jS_j)$ is stochastically dominated by $\tilde W_j = \sum_{j\leq i} |T_j|\xi_j$, where $\xi_j$ are independent Bernoulli random variables of parameters $\frac{1}{1+e^{-\eps-\eps_j}}$.
	For $i=1$ this is of course true.
	Then, if we assume that the result holds at some $i\leq n$, we have that $W_{i+1}=W_i +  \one_\set{S_{i+1} = \mathrm{sign}(T_{i+1})}$ and $\tilde W_{i+1}=\tilde W_i +  \xi_{i+1}$.
    Conditionally on the event $\set{W_i=w_i,\tilde W_i=\tilde w_i,\bar T=\bar t}$ for $w_i,\tilde w_i,\bar t$ such that this event is not of probability 0:
    \begin{align*}
        &\esp{\one_\set{S_{i+1} = \mathrm{sign}(T_{i+1})} |W_i=w_i,\tilde W_i=\tilde w_i,\bar T=\bar t}\\
        &= \proba{S_{i+1} = \mathrm{sign}(T_{i+1}) |W_i=w_i,\tilde W_i=\tilde w_i,\bar T=\bar t}\\
        &=  \proba{S_{i+1} = T_{i+1}|W_i=w_i,\tilde W_i=\tilde w_i,\bar T=\bar t}\\
        &=  \proba{S_{i+1} = 1|W_i=w_i,\tilde W_i=\tilde w_i,\bar T=\bar t,T_{i+1}=1}\proba{T_{i+1}=1|W_i=w_i,\tilde W_i=\tilde w_i,\bar T=\bar t}\\
        &\quad+ \proba{S_{i+1} = -1|W_i=w_i,\tilde W_i=\tilde w_i,\bar T=\bar t,T_{i+1}=-1}\proba{T_{i+1}=-1|W_i=w_i,\tilde W_i=\tilde w_i,\bar T=\bar t}\\
        &\leq \frac{1}{1+e^{-\eps-\eps_i}}\proba{T_{i+1}=1|W_i=w_i,\tilde W_i=\tilde w_i,\bar T=\bar t}\\
        &\quad+ \frac{1}{1+e^{-\eps-\eps_i}}\proba{T_{i+1}=-1|W_i=w_i,\tilde W_i=\tilde w_i,\bar T=\bar t}\,,
    \end{align*}
    using the above computations with the adequate event $\cE$, which leads to:
        \begin{align*}
        \esp{\one_\set{S_{i+1} = T_{i+1}} |W_i=w_i,\tilde W_i=\tilde w_i,\bar T=\bar t}&\leq \frac{|t_{i+1}|}{1+e^{-\eps-\eps_i}}\,.
    \end{align*}
    Then, noticing that:
    \begin{align*}
        \esp{\xi_{i+1}|W_i=w_i,\tilde W_i=\tilde w_i,\bar T=\bar t} &=\frac{|t_{i+1}|}{1+e^{-\eps}}\\
        &=\frac{|t_{i+1}|}{1+e^{-\eps}}\times \frac{1+e^{-\eps}}{1+e^{-\eps-\eps_i}}\\
        &=\frac{|t_{i+1}|}{1+e^{-\eps}}\times \frac{1}{\esp{B_{i+1}|W_i=w_i,\tilde W_i=\tilde w_i,\bar T=\bar t}}\\
        &=\frac{\esp{|T_{i+1}|\xi_{i+1}|W_i=w_i,\tilde W_i=\tilde w_i,\bar T=\bar t}}{\esp{B_{i+1}|W_i=w_i,\tilde W_i=\tilde w_i,\bar T=\bar t}}\,,
    \end{align*}
    we have:
    \begin{align*}
    &\esp{B_{i+1}\one_\set{S_{i+1} = T_{i+1}} |W_i=w_i,\tilde W_i=\tilde w_i,\bar T=\bar t}\\
        &\esp{B_{i+1}|W_i=w_i,\tilde W_i=\tilde w_i,\bar T=\bar t}\esp{\one_\set{S_{i+1} = T_{i+1}} |W_i=w_i,\tilde W_i=\tilde w_i,\bar T=\bar t}\\
        &\leq \esp{|T_{i+1}|\xi_{i+1}|W_i=w_i,\tilde W_i=\tilde w_i,\bar T=\bar t}\,.
    \end{align*}
    Thus, $B_{i+1}\one_\set{S_{i+1} = T_{i+1}}$ is stochastically dominated by $|T_{i+1}|\zeta_{i+1}$ conditionally on $W_i$, $\tilde W_i$ and $T_{\leq i+1}$. Using \citet[Lemma 4.9]{steinke2023privacy}: $W_{i+1}$ is stochastically dominated by $\tilde W_{i+1}$, conditionally on $W_i$, $\tilde W_i$ and $T_{\leq i+1}$.
    We conclude by unrolling the induction.
\end{proof}

\subsection{Proof of \Cref{thm:conditional_audit_tampered}.\ref{thm:conditional_audit_tampered:eps_delta}}

We now proceed to the proof of \Cref{thm:conditional_audit_tampered}.\ref{thm:conditional_audit_tampered:eps_delta}, starting with the following intermediate result.

\subsubsection{Intermediate Result}

We first start with the following result, from which \Cref{thm:conditional_audit_tampered}.\ref{thm:conditional_audit_tampered:eps_delta} will directly follow, byt plugging in $\gamma=\eps$ and $\delta'=(1+e^{-1})\delta$, as we then prove in \Cref{prop:hockey}.

\begin{theorem}[Intermediate Result]
\label{thm:unified_auditing_better_tampered}
    Let $\gamma>0$ and $\delta'>0$ such that for all adjacent datasets $\cD\sim\cD'$,
    \begin{equation*}
        \esp{(1-e^{\gamma-|\PLRV_{\cD,\cD'}^\cA(\cA(\cD))|})_+} \leq \delta'\,.
    \end{equation*}
    Assume that \Cref{hyp:data_generation} holds.
    Let $(B_i)_{i\in[m]}$ be independent Bernoulli random variables of parameters $(\frac{1+e^{-\eps-\eps_\DS(X_i)}}{1+e^{-\eps}} )_{i\in[m]}$.
    Assume that the attacker makes at most $r$ guesses.
    Then, for all $v> 0$,
    \begin{align*}
        \proba{\sum_{i=1}^m B_i \one_\set{T_i=S_i} \geq v \Big| (X_i)_{i\in[m]}} &\leq   \proba{ \mathrm{Binomial}\left(r , \frac{e^\eps}{1+e^\eps} \right) \geq v}+\alpha m\delta'\,,
    \end{align*}
    where
    \begin{equation*}
    \alpha = \sum_{i\in[m]}\frac{1}{i}  \proba{ v\geq \mathrm{Binomial}\left(r , \frac{e^\eps}{1+e^\gamma} \right) = v-i}\,.
    \end{equation*}
\end{theorem}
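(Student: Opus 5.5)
We plan to follow the template of the proof of \Cref{thm:pure_DP_auditing_tampered} and of \citet[Theorem 5.2]{steinke2023privacy}, but isolate the $\delta'$-mass through a ``good event'' decomposition of the privacy loss random variable. The PLRV hypothesis says that, for each adjacent pair $\cD\sim\cD'$, the output can be split into a part where $|\PLRV^\cA_{\cD,\cD'}|\leq\gamma$ and an excess part of total mass at most $\delta'$. Concretely, we will write the law of $\theta$ given $(S_i=s,S^{(-i)},\bar X)$ as a mixture $(1-\delta_i)\,\P^{\mathrm{good}}_{s}+\delta_i\,\P^{\mathrm{bad}}_{s}$ with $\delta_i\leq\delta'$ and with the two good components having likelihood ratio in $[e^{-\gamma},e^{\gamma}]$. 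This is the standard characterization of approximate DP through the hockey-stick divergence $\E[(1-e^{\gamma-\PLRV})_+]$, and the absolute value in the hypothesis makes it two-sided. We then introduce, for each $i$, a latent indicator $F_i$ marking whether the output was drawn from the bad component.

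On the good event the Bayes computation from \Cref{thm:pure_DP_auditing_tampered} applies verbatim with $\gamma$ in place of $\eps$. It yields $\proba{S_i=T_i\mid\cdot}\leq (1+e^{-\gamma-\eps_\DS(X_i)})^{-1}$. Multiplying by the tampering probability $b_i$ then gives a per-guess success probability of at most a Binomial parameter: at most $\frac{e^\eps}{1+e^\eps}$ when $\gamma=\eps$, or $\frac{e^\eps}{1+e^\gamma}$-type terms otherwise, which is where the parameter in $\alpha$ comes from.

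The next step is the sequential coupling. We order the (at most $r$) guesses and define $W_k=\sum_{j\leq k}B_j\one_{\set{T_j=S_j}}$. Then we follow the induction of \citet[Lemma 4.9]{steinke2023privacy}: at each step, either the increment is dominated by an independent Bernoulli of the target parameter, or a bad event of conditional probability at most $\delta'$ occurs. The $\delta'$ contribution cannot simply be union bounded, since that would give $m\delta'$ with no $\alpha$. Instead we follow Steinke et al.'s refined accounting. A bad event at step $i$ matters only if it pushes the count across $v$. Summing over the position at which the tail event is ``caused'' gives a factor $\frac{1}{i}$ times the probability that the dominating Binomial sits exactly at $v-i$. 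This is the term $\alpha=\sum_i\frac1i\proba{\mathrm{Binom}=v-i}$. We would reproduce their argument that the worst case for the bad mass is to award $i$ extra successes with probability $\delta'/i$, which is the exchange/convexity step in their proof of Theorem~5.2.

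The main obstacle is this last step: turning a per-coordinate $\delta'$ bound, valid conditionally on $S^{(-i)}$ and $\bar X$, into the global additive $\alpha m\delta'$. The difficulty is that the bad events are not independent across coordinates, and conditioning on $\bar X$ changes the prior from $1/2$ to $\pi(X_i)$. We would first check that the decomposition into good and bad components can be made conditionally on $(S^{(-i)},\bar X)$ with $\delta_i\leq\delta'$. This holds because, under \Cref{hyp:data_generation}, $\theta$ depends on $\bar X$ only through $\dtrain$, so adjacency in $S_i$ is dataset adjacency. We would then run Steinke et al.'s proof of their Theorem~5.2 essentially unchanged, with tampered indicators replacing raw success indicators.
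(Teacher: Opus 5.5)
Your good-event analysis is sound and mirrors the paper. The equal-mass mixture exists: both hockey-stick divergences at level $e^\gamma$ are at most $\delta'$, because $|\PLRV|\ge\PLRV$. Bayes with prior $\pi(X_i)$ plus tampering then gives per-guess success at most $\frac{e^\eps}{1+e^\eps}$; this needs $\gamma\le\eps$, and the result is applied with $\gamma=\eps$. The gap is the step you yourself flag as the main obstacle: it is asserted rather than proved, and the sketch is wrong in two ways.

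First, your premise that at each step ``a bad event of conditional probability at most $\delta'$ occurs'' fails in the filtration where the induction must run. Domination is established conditionally on $(\bar T,\bar X,W_{<i})$, where $\bar T$ is a function of $\theta$ and $F_i$ is a function of $(\theta,S)$. So only $\proba{F_i=1\mid S,\bar X}\le\delta'$ is controlled. Given the guesses, this probability can be close to $1$, and the $F_i$ can be perfectly correlated with each other and with the running count. For example, take a mechanism that with probability $\delta$ releases its training set in the clear. There is therefore no per-step bound, and a bound on the expected excess alone does not control where that excess lands relative to the Binomial.

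Second, deferring to \citet{steinke2023privacy}'s Theorem~5.2 ``essentially unchanged'' does not produce this statement. Their accounting yields $2m\delta\max_i\frac1i\proba{v-i\le\mathrm{Binom}<v}$, which is the ``$i$ extra successes with probability $\propto 1/i$'' picture you describe. That is a different term with a different constant, and it does not imply $\alpha m\delta'$ in general. Your sketch conflates it with the point-mass form $\sum_i\frac1i\proba{\mathrm{Binom}=v-i}$.

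The idea you are missing is the paper's decoupling device.
\begin{itemize}
\item Draw $U_j$ i.i.d.\ uniform on $[0,1]$, independent of $(\theta,S,\bar X)$, and set $\xi_j(\gamma')=\one\set{U_j\le(1+e^{-\gamma'})^{-1}}$, which is monotone in $\gamma'$.
\item Dominate $\sum_i B_i\one\set{T_i=S_i}$ by $\sum_j|T_j|\,\xi_j\big(\max(\gamma,|\PLRV_j|)\big)$. This works because on $\set{|\PLRV_j|=\gamma'}$ the Bayes-plus-tampering bound gives success probability at most $(1+e^{-\gamma'})^{-1}$.
\item Split the dominating sum as $A_m=\sum_j|T_j|\xi_j(\gamma)$ plus a nonnegative excess. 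Here $A_m$ is a Binomial (after padding to $r$ guesses) independent of $\theta$.
\item Given $\xi_j(\gamma)=0$, the $j$-th excess is Bernoulli with parameter at most $1-e^{\gamma-|\PLRV_j|}$. Hence the total excess has conditional mean at most $m\delta'$ given $A_m$. This comes directly from the $|\PLRV|$ hypothesis, with no mixture decomposition.
\item Apply conditional Markov, $\proba{\text{excess}\ge v-A_m\mid A_m}\le m\delta'/(v-A_m)$, and sum over $A_m=v-i$. This gives exactly $\alpha m\delta'$, which is why $\frac1i$ multiplies a point mass.
\end{itemize}
In particular, the Binomial in $\alpha$ is the same $A_m$ as in the main term, with parameter $\frac{e^\gamma}{1+e^\gamma}=\frac{e^\eps}{1+e^\eps}$ at $\gamma=\eps$. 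The $\frac{e^\eps}{1+e^\gamma}$ in the statement is not a trace of a $\gamma\neq\eps$ mechanism, contrary to your reading.
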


\begin{proof}[Proof of \Cref{thm:unified_auditing_better_tampered}]
    	Fix some $i\in[m]$, let $\cD_i=\cD\cup\set{X_j,S_j=1,j\ne i}\cup\set{X_i}$ and $\cD'_i=\cD\cup\set{X_j,X_j=1,j\ne i}$, respectively corresponding to the (random) datasets obtained by setting $S_i$ to 1 and $-1$.
     If $S_i=1$, we have $\dtrain=\cD_i$, while if $S_i=-1$ we have $\dtrain = \cD'_i$.
    Let us define the following (random, since $S_j,j\ne i$ and $S_i$ are random) set:
\begin{equation*}
\cE_{\gamma,i} = \set{\theta\in\Theta:|\PLRV^\cA_{\cD_i,\cD'_i}(\theta)|\leq \gamma}\,.
\end{equation*}	
We first start by proving an upper-bound on:
	\begin{equation*}
		\proba{S_i=1 | \theta\in\cE\cap\cE_{\gamma,i}, S^{(-i)} = s^{(-i)},\bar X }\,,
	\end{equation*} 
 for $\cE$ such that $\theta\in\cE\cap\cE_{\gamma,i}$ is of non-null probability.
	Using Bayes' theorem,
	\begin{align*}
		&\proba{S_i=1 | \theta\in\cE\cap\cE_{\gamma,i}, S^{(-i)} = s^{(-i)},\bar X }\\
  &= \frac{ \proba{ \theta\in\cE\cap\cE_{\gamma,i} | S_i=1 , S^{(-i)} = s^{(-i)},\bar X } \times \proba{S_i=1 | S^{(-i)} = s^{(-i)},\bar X }}{ \proba{ \theta\in\cE\cap\cE_{\gamma,i} | S^{(-i)} = s^{(-i)},\bar X } } \\
		\quad& =  \frac{ \proba{ \theta\in\cE\cap\cE_{\gamma,i} | S_i=1 , S^{(-i)} = s^{(-i)},\bar X } \times \proba{S_i=1 | S^{(-i)} = s^{(-i)},\bar X }}{ \sum_{s\in\set{-1,1}}\proba{ \theta\in\cE\cap\cE_{\gamma,i} | S_i=s, S^{(-i)} = s^{(-i)},\bar X } \proba{ S_i=s | S^{(-i)} = s^{(-i)},\bar X } } \\
		\quad& =  \frac{ \pi(X_i) \proba{ \theta\in\cE\cap\cE_{\gamma,i} | S_i=1 , S^{(-i)} = s^{(-i)},\bar X } }{ \pi(X_i)\proba{ \theta\in\cE\cap\cE_{\gamma,i}| S_i=1, S^{(-i)} = s^{(-i)},\bar X }  + (1-\pi(X_i)\proba{ \theta\in\cE| S_i=-1, S^{(-i)} = s^{(-i)},\bar X }  } \,,
	\end{align*}
	since $\proba{ S_i=1 | S^{(-i)} = s^{(-i)},\bar X } = \pi(X_i)$ and  $\proba{ S_i=-1 | S^{(-i)} = s^{(-i)},\bar X } = 1-\pi(X_i)$.
    Then, let	
	\begin{equation*}
		R_i \eqdef  \frac{\proba{ \theta\in\cE\cap\cE_{\gamma,i} | S_i=1 , S^{(-i)} = s^{(-i)},\bar X } }{ \proba{ \theta\in\cE\cap\cE_{\gamma,i} | S_i=-1 , S^{(-i)} = s^{(-i)},\bar X }}\,.
	\end{equation*}
    We have that:
    \begin{align*}
        \proba{ \theta\in\cE\cap\cE_{\gamma,i} | S_i=1 , S^{(-i)} = s^{(-i)},\bar X } &= \E_{S_{>i}}\int \one_\set{\theta\in\cE}\one_\set{|\PLRV_{\cD_i,\cD_i'}(\theta)|\leq \gamma} \dd\P_{\cA(\cD_i)}(\theta)\\
        &\leq e^\gamma\E_{S_{>i}}\int \one_\set{\theta\in\cE}\one_\set{|\PLRV_{\cD_i,\cD_i'}(\theta)|\leq \gamma} \dd\P_{\cA(\cD_i')}(\theta)\\
        & = e^\gamma \proba{ \theta\in\cE\cap\cE_{\gamma,i} | S_i=-1 , S^{(-i)} = s^{(-i)},\bar X }\,,
    \end{align*}
    since $\PLRV_{\cD_i,\cD_i'}(\theta)\leq \gamma$ implies that $\dd\P_{\cA(\cD_i)}(\theta)\leq e^\gamma \dd\P_{\cA(\cD_i')}(\theta)$.
    Thus, we have that $R_i\leq e^\gamma$, and:
	\begin{align*}
		\proba{S_i=1 | \theta\in\cE\cap\cE_{\gamma,i}, S^{(-i)} = s^{(-i)},\bar X } & \leq \frac{ \pi(X_i) }{ \pi(X_i)  + (1-\pi(X_i))R_i^{-1} } \\
        & \leq \frac{ \pi(X_i) }{ \pi(X_i)  + (1-\pi(X_i))e^{-\gamma} } \\
		&\leq \frac{1}{ 1  + \frac{1-\pi(X_i)}{\pi(X_i)}e^{-\gamma} }\\
		&\leq \frac{1}{1+e^{-\gamma-\eps_i}}\,,
	\end{align*}
    where $e^{\eps_i}=\max(\frac{\pi(X_i)}{1-\pi(X_i)},\frac{1-\pi(X_i)}{\pi(X_i)})$.
	Similarly, since $|\PLRV_{\cD_i,\cD_i'}(\theta)|=|\PLRV_{\cD_i',\cD_i}(\theta)|$ we have that:
	\begin{align*}
		\proba{S_i=-1 | \theta\in\cE\cap\cE_{\gamma,i}, S^{(-i)} = s^{(-i)},\bar X } & \leq \frac{1}{1+e^{-\gamma-\eps_i}}\,.
	\end{align*}
 As a consequence, for all $i$ and event $\theta\in\cE$:
 \begin{align*}
   		&\proba{S_i=T_i | \theta\in\cE\cap\cE_{\gamma,i}, S^{(-i)} = s^{(-i)},\bar X }\\
     & = \proba{S_i=1 | \theta\in\cE\cap\cE_{\gamma,i}, T_i=1, S^{(-i)} = s^{(-i)},\bar X }\proba{T_i=1 | \theta\in\cE\cap\cE_{\gamma,i}, S^{(-i)} = s^{(-i)},\bar X }\\
     &\quad+ \proba{S_i=-1 | \theta\in\cE\cap\cE_{\gamma,i}, T_i=-1, S^{(-i)} = s^{(-i)},\bar X }\proba{T_i=-1 | \theta\in\cE\cap\cE_{\gamma,i}, S^{(-i)} = s^{(-i)},\bar X }\\
     &\leq \frac{1}{1+e^{-\gamma-\eps_i}}\,.  
 \end{align*}
Thus,
\begin{align*}
    &\esp{B_i\one_\set{S_i=T_i} | \theta\in\cE\cap\cE_{\gamma,i}, S^{(-i)} = s^{(-i)},\bar X }\\
    &=\proba{S_i=T_i ,B_i=1| \theta\in\cE\cap\cE_{\gamma,i}, S^{(-i)} = s^{(-i)},\bar X }\\
    &=\proba{B_i=1}\proba{S_i=T_i | \theta\in\cE\cap\cE_{\gamma,i}, S^{(-i)} = s^{(-i)},\bar X }\\
    &=\frac{1+e^{-\gamma-\eps_i}}{1+e^{-\gamma}}\times\frac{1}{1+e^{-\gamma-\eps_i}}\\
    &=\frac{1}{1+e^{-\gamma}}\,.
\end{align*}
Now, for any $\gamma'>\gamma$, let:
\begin{equation*}
\dd\cE_{\gamma,i} = \set{\theta\in\Theta:|\PLRV_{\cD_i,\cD'_i}(\theta)|= \gamma'}\,.
\end{equation*}	
Although this event can be of null probability if the PLRV is continuous, we can still condition on $\dd\cE_{\gamma,i}$, as it amounts to conditioning on a real valued random variable.
Using the same reasoning as above,
	\begin{align*}
		\proba{S_i=T_i| \theta\in\cE\cap\dd\cE_{\gamma',i}, S^{(-i)} = s^{(-i)},\bar X } & \leq \frac{1}{1+e^{-\gamma'-\eps_i}}\,,
	\end{align*}
and
\begin{align*}
		&\esp{B_i\one_\set{S_i=T_i}| \theta\in\cE\cap\dd\cE_{\gamma',i}, S^{(-i)} = s^{(-i)},\bar X } \\
        & \leq \frac{1}{1+e^{-\gamma'-\eps_i}}\times \frac{1+e^{-\gamma-\eps_i}}{1+e^{-\gamma}}\\
  & \leq \frac{1}{1+e^{-\gamma'}}\times \frac{1+e^{-\gamma'}}{1+e^{-\gamma'-\eps_i}}\times \frac{1+e^{-\gamma-\eps_i}}{1+e^{-\gamma}}\\
  & \leq \frac{1}{1+e^{-\gamma'}}\,,
\end{align*}
since $\frac{1+e^{-\gamma'}}{1+e^{-\gamma'-\eps_i}}\leq \frac{1+e^{-\gamma}}{1+e^{-\gamma-\eps_i}}$ for $\gamma'\geq \gamma$.

Let $\xi_j(\gamma')$ be independent Bernoulli random variables of parameters $\frac{1}{1+e^{-\gamma'}}$, independent of all the rest and coupled such that $\xi_j(\gamma')\leq\xi_j(\gamma'')$ for all $\gamma''\geq \gamma'$. Such a coupling can be obtained by sampling $(U_i)_{i\in[m]}$ i.i.d. random variables uniformly in $[0,1]$ and setting:
\begin{equation*}
    \xi_j = \one_\set{U_j\leq \frac{1}{1+e^{-\gamma'}}}\,.
\end{equation*}
Let $\xi_j(\gamma)$ be independent Bernoulli random variables of parameters $\frac{1}{1+e^{-\gamma}}$, independent of all the rest and of the $\xi_j(\gamma')$.
Let then:
\begin{equation*}
    \xi_j' = \one_\set{\cA(\cD)\in\cE_{\gamma,i}} \xi_j(\gamma) + \int_{\gamma'>\gamma} \one_\set{\cA(\cD) \in\dd\cE_{\gamma',i}}\xi_j(\gamma')\,,
\end{equation*}
which is a Bernoulli random variable, as at most only one of the $\zeta_j$ is ‘‘activated''.
	Let us prove by induction that, conditionally on $(X_j,T_j)_{j\in[m]}$, for all $i\in[m]$, $W_i \eqdef \sum_{j\leq i} B_j\one_\set{T_j=S_j}$ is stochastically dominated by $\tilde W_i \eqdef \sum_{j\leq i} |T_j|\xi_j'$.
	For $i=0$ this is of course true since both quantities are null..
	Then, if we assume that the result holds at some $ i\leq n$, we have that $W_{i+1}=W_i + B_{i+1} \one_\set{S_{i+1} = T_{i+1}}$ and $\tilde W_{i+1}=\tilde W_i +  \xi_{i+1}'$.
    Conditionally on the event $\set{W_i=w_i,\tilde W_i=\tilde w_i,\bar T=\bar t,\bar X=\bar x}$ for $w_i,\tilde w_i,\bar t,\bar x$ such that this event is not of probability 0, if $t_{i+1}\ne 0$, using the above computations and writing $\theta=\cA(\cD)$ as the output of the training algorithm:
    \begin{align*}
        &\esp{B_{i+1}\one_\set{S_{i+1} = T_{i+1}} |W_i=w_i,\tilde W_i=\tilde w_i,\bar T=\bar t,\bar X=\bar x}\\
        &=  \esp{B_{i+1}\one_\set{S_{i+1} = T_{i+1}} ,\theta\in\cE_{\gamma,i+1}|W_i=w_i,\tilde W_i=\tilde w_i,\bar T=\bar t,\bar X=\bar x}\\
        &\quad +  \int_{\gamma'>\gamma} \esp{B_{i+1}\one_\set{S_{i+1} = T_{i+1}} ,\theta\in\dd\cE_{\gamma',i+1}|W_i=w_i,\tilde W_i=\tilde w_i,\bar T=\bar t,\bar X=\bar x}\dd\gamma'\\
        &\leq  \esp{B_{i+1}\one_\set{S_{i+1} = T_{i+1}} |\theta\in\cE_{\gamma,i+1},W_i=w_i,\tilde W_i=\tilde w_i,\bar T=\bar t,\bar X=\bar x}\\
        &\qquad\times\proba{\theta\in\cE_{\gamma,i+1}|W_i=w_i,\tilde W_i=\tilde w_i,\bar T=\bar t,\bar X=\bar x}\\
        &\quad + \int_{\gamma'>\gamma}\esp{B_{i+1}\one_\set{S_{i+1} = T_{i+1}} |\theta\in\dd\cE_{\gamma',i+1},W_i=w_i,\tilde W_i=\tilde w_i,\bar T=\bar t,\bar X=\bar x}\\
        &\qquad\times\proba{\theta\in\dd\cE_{\gamma',i+1}|W_i=w_i,\tilde W_i=\tilde w_i,\bar T=\bar t,\bar X=\bar x}\dd\gamma'\\
        &\leq  \frac{e^{\gamma}}{1+e^{\gamma}}\times\proba{\theta\in\cE_{\gamma,i+1}|W_i=w_i,\tilde W_i=\tilde w_i,\bar T=\bar t,\bar X=\bar x}\\
        &\quad + \int_{\gamma'>\gamma} \frac{e^{\gamma'}}{1+e^{\gamma'}}\times\proba{\theta\in\dd\cE_{\gamma',i+1}|W_i=w_i,\tilde W_i=\tilde w_i,\bar T=\bar t,\bar X=\bar x}\dd\gamma'\\
        &= \esp{\xi'_{i+1}|W_i=w_i,\tilde W_i=\tilde w_i,\bar T=\bar t,\bar X=\bar x}\,,
    \end{align*}
    using the above computations with the adequate event $\cE$.
    Thus,
    \begin{align*}
        &\esp{B_{i+1}\one_\set{S_{i+1}=T_{i+1}} |W_i=w_i,\tilde W_i=\tilde w_i,\bar T=\bar t,\bar X=\bar x}\\
        &\leq \esp{|T_{i+1}|\xi_{i+1}'|W_i=w_i,\tilde W_i=\tilde w_i,\bar T=\bar t,\bar X=\bar x}\,,
    \end{align*}
    leading to $\max(0,T_{i+1}S_{i+1})$ stochastically dominated by $|T_{i+1}|\xi_{i+1}'$ since stochastic domination for Bernoulli random variables is equivalent to domination of means, conditionally on $W_i$, $\tilde W_i$, $X$ and $T_{\leq i+1}$.
    Using \citet[Lemma 4.9]{steinke2023privacy}: $W_{i+1}$ is stochastically dominated by $\tilde W_{i+1}$, conditionally on $W_i$, $\tilde W_i$ and $T_{\leq i+1}$.
    Thus, $W_n$ is stochastichally dominated by:
    \begin{equation*}
        W_n'=\sum_{i=1}^m|T_i|\xi_i'\,.
    \end{equation*}
    Recall that:
    \begin{equation*}
    \xi_j' = \one_\set{\cA(\cD)\in\cE_{\gamma,i}} \xi_j(\gamma) + \int_{\gamma'>\gamma} \one_\set{\cA(\cD) \in\dd\cE_{\gamma',i}}\xi_j(\gamma')\,,
\end{equation*}
    so that:
    \begin{equation*}
        W_m' = A_m + B_m\,,
    \end{equation*}
    where
    \begin{equation*}
        A_m = \sum_{i=1}^m|T_i|\xi_i(\gamma)\,,
        \end{equation*}    
and
    \begin{align*}
        B_m &\eqdef  \sum_{j=1}^m|T_j|\left(\int_{\gamma'>\gamma} \one_\set{\cA(\cD) \in\dd\cE_{\gamma',i}}\xi_j(\gamma') - \one_\set{\cA(\cD)\notin\cE_{\gamma,i}} \xi_j(\gamma)\right)\\
        & = \sum_{j=1}^m|T_j|\left(\int_{\gamma'>\gamma} \one_\set{\cA(\cD) \in\dd\cE_{\gamma',i}}\big(\xi_j(\gamma') -  \xi_j(\gamma)\big)\right)\\
    \end{align*}
    and we are left with upper bounding this second sum. Note that under our coupling of $\xi_j(\gamma),\xi_j(\gamma')$, each element of the sum in $B_m$ is non-negative, enabling us to use Markov inequality later in the proof.
Let
\begin{align*}
            B_m' = \sum_{j=1}^m\left(\int_{\gamma'>\gamma} \one_\set{\cA(\cD) \in\dd\cE_{\gamma',i}}\big(\xi_j(\gamma') -  \xi_j(\gamma)\big)\right)\,,
\end{align*}
so that $B_m'\geq B_m$ almost surely, leading to 
\begin{align*}
    \proba{W_m' \geq v}&=\proba{A_m+B_m\geq v}\\
    &\leq \proba{A_m+B_m'\geq v}\\
    &\leq \proba{A_m\geq v}+\proba{v>A_m\geq v-B_m'}\,.
\end{align*}
Our problem now is that $A_m,B_m'$ are apriori not independent. However, due to our coupling of the random variables $(\xi_i(\gamma'))_{\gamma'\geq\gamma}$, we can easily characterize the distribution of $B_m'$ conditionally on the random variables $\xi_i(\gamma)$.
Indeed, for $U_i$ uniform in $[0,1]$, we have $\xi_i(\gamma)=\one_\set{U_i\leq \frac{1}{1+e^{-\gamma}}}$ and $\xi_i(\gamma')-\xi_i(\gamma)=\one_\set{\frac{1}{1+e^{-\gamma'}}\geq U_i> \frac{1}{1+e^{-\gamma}}}$. 
We are thus going to upper bound $\esp{B_m'|A_m}$.
First,
\begin{align*}
    \esp{\int_{\gamma'>\gamma} \one_\set{\cA(\cD) \in\dd\cE_{\gamma',j}}\xi_j(\gamma') - \one_\set{\cA(\cD)\notin\cE_{\gamma,j}} \xi_j(\gamma) \,\Big|\, \xi_j(\gamma) = 1 } =0\,,
\end{align*}
since $\xi_j(\gamma) = 1 \implies \xi_j(\gamma') = 1$. Then, under $\xi_j(\gamma)=0$, we have that $\xi_j(\gamma')-\xi_j(\gamma)$ is a Bernoulli random variable of parameter:
\begin{align*}
    \frac{\frac{1}{1+e^{-\gamma'}}-\frac{1}{1+e^{-\gamma}}}{1- \frac{1}{1+e^{-\gamma}}} &= \frac{\frac{1+e^{-\gamma}}{1+e^{-\gamma'}}}{e^{-\gamma}}\\
    & = e^\gamma \frac{e^{-\gamma}-e^{-\gamma'}}{1+e^{-\gamma'}}\\
    & \leq e^\gamma (e^{-\gamma}-e^{-\gamma'})\\
    & = 1-e^{\gamma-\gamma'}\,,
\end{align*}
leading to:
\begin{align*}
        &\esp{\int_{\gamma'>\gamma} \one_\set{\cA(\cD) \in\dd\cE_{\gamma',j}}\xi_j(\gamma') - \one_\set{\cA(\cD)\notin\cE_{\gamma,j}} \xi_j(\gamma)\,\Big|\, \xi_j(\gamma) = 0 }\\
        &= \esp{ 1-e^{\gamma-|\PLRV^\cA_{\cD_j,\cD_j'}(\cA(\cD))|} \Big| |\PLRV^\cA_{\cD_j,\cD_j'}(\cA(\cD))| >\gamma }\proba{\cA(\cD)\notin\cE_{\gamma,j}}\\
        &= \esp{ \left(1-e^{\gamma-|\PLRV^\cA_{\cD_j,\cD_j'}(\cA(\cD))|}\right)_+ }\qquad (\text{where}\qquad \forall z\in\R\,,\quad z_+ = \max(0,z))\,.
    \end{align*}
    Now, under $S_j=1$ we have $\cD=\cD_j$, while under $S_j=-1$ we have $\cD=\cD_j'$. Either way, we have, using our assumption on the mechanism $\cA$:
    \begin{align*}
                &\esp{\int_{\gamma'>\gamma} \one_\set{\cA(\cD) \in\dd\cE_{\gamma',j}}\xi_j(\gamma') - \one_\set{\cA(\cD)\notin\cE_{\gamma,j}} \xi_j(\gamma)\,\Big|\, \xi_j(\gamma) = 0 }\\
                &\leq \max\left(\esp{ \left(1-e^{\gamma-|\PLRV^\cA_{\cD_j,\cD_j'}(\cA(\cD_j))|}\right)_+ },\esp{ \left(1-e^{\gamma-|\PLRV^\cA_{\cD_j,\cD_j'}(\cA(\cD_j'))|}\right)_+ } \right)\\
                &\leq \delta'\,.
    \end{align*}
    Thus, almost surely:
    \begin{align*}
        \esp{\int_{\gamma'>\gamma} \one_\set{\cA(\cD) \in\dd\cE_{\gamma',j}}\xi_j(\gamma') - \one_\set{\cA(\cD)\notin\cE_{\gamma,j}} \xi_j(\gamma)\,\Big|\, \xi_j(\gamma) } \leq \delta'\,,
    \end{align*}
    and:
    \begin{align*}
        \esp{B_m'|A_m}\leq m\delta'\,.
    \end{align*}
    Using conditional Markov inequality, for any $t>0$, $\proba{B_m'\geq t|A_m}\leq \frac{m\delta'}{t}$. Using this,
    \begin{align*}
        \proba{v>A_m\geq v-B_m'} & = \esp{\one_\set{A_m<v} \proba{B_m\geq v-A_m|A_m}}\\
        & \leq \esp{\one_\set{A_m<v} \times \frac{m\delta'}{v-A_m}}\\
        & = m\delta'\times \sum_{k<v}\proba{A_m = v-k}k^{-1}\,.
    \end{align*}
    We then conclude using the fact that under our assumption on the attacker, $A_m$ is a Binomial random variable.
\end{proof}

\subsubsection{Proof of \Cref{thm:conditional_audit_tampered}.\ref{thm:conditional_audit_tampered:eps_delta}}

\begin{proof}[Proof of \Cref{thm:conditional_audit_eps_delta}]
    The proof follows by combining the following proposition with \Cref{thm:unified_auditing_better}.
\begin{proposition}\label{prop:hockey}
    If $\cA$ is $(\eps,\delta)$-DP, then for any two adjacent datasets $\cD\sim\cD'$, we have:
    \begin{equation*}
        \esp{\left( 1-e^{\eps - |\PLRV_{\cD,\cD'}(\cA(\cD))|} \right)_+}\leq (1+e^{-\eps})\delta\,.
    \end{equation*}
\end{proposition}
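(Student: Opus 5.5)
The plan is to split the expectation according to the sign of the privacy loss and bound each piece by a hockey-stick divergence, using the standard fact that $(\eps,\delta)$-DP is equivalent to $H_{e^\eps}(\P\|\Q)\leq\delta$ and $H_{e^\eps}(\Q\|\P)\leq\delta$. Here $\P=\P_{\cA(\cD)}$, $\Q=\P_{\cA(\cD')}$, and $H_{e^\eps}(\P\|\Q)=\int(\dd\P-e^\eps\dd\Q)_+=\sup_{\cE}\big(\P(\cE)-e^\eps\Q(\cE)\big)$, with the supremum attained at $\cE=\set{L>\eps}$. Write $L=\PLRV_{\cD,\cD'}(\theta)=\log\frac{\dd\P}{\dd\Q}(\theta)$. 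The integrand $(1-e^{\eps-|L|})_+$ is nonzero only on $\set{L>\eps}$ or $\set{L<-\eps}$, so I will write the expectation as the sum of the contributions of these two regions.

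\emph{Region $L>\eps$.} Here $e^{\eps-|L|}=e^\eps\frac{\dd\Q}{\dd\P}$. The contribution is therefore
\begin{equation*}
\int_{\set{L>\eps}}\Big(1-e^\eps\tfrac{\dd\Q}{\dd\P}\Big)\dd\P=\int(\dd\P-e^\eps\dd\Q)_+=H_{e^\eps}(\P\|\Q).
\end{equation*}
Applying the DP inequality to the event $\set{L>\eps}$ shows this is at most $\delta$.

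\emph{Region $L<-\eps$.} Here $e^{\eps-|L|}=e^\eps\frac{\dd\P}{\dd\Q}$, and the contribution is $\int_{\set{L<-\eps}}\big(\dd\P-e^\eps\frac{(\dd\P)^2}{\dd\Q}\big)$. This is the one step that needs a small trick, since the integrand is quadratic rather than of hockey-stick form. I will compare it pointwise with $e^{-\eps}(\dd\Q-e^\eps\dd\P)$, which is the integrand of $e^{-\eps}H_{e^\eps}(\Q\|\P)$ on this region. The difference is
\begin{equation*}
e^{-\eps}\dd\Q-2\dd\P+e^\eps\tfrac{(\dd\P)^2}{\dd\Q}=e^{-\eps}\dd\Q\Big(1-e^\eps\tfrac{\dd\P}{\dd\Q}\Big)^2\geq0.
\end{equation*}
Hence the contribution is at most $e^{-\eps}\int(\dd\Q-e^\eps\dd\P)_+=e^{-\eps}H_{e^\eps}(\Q\|\P)$. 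Applying the DP inequality in the reverse direction, with the roles of $\cD$ and $\cD'$ swapped and the event $\set{L<-\eps}$, bounds this by $e^{-\eps}\delta$.

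Summing the two regions gives $(1+e^{-\eps})\delta$, as claimed. The main technical caveat is measure-theoretic. I will work with densities with respect to a dominating measure such as $\P+\Q$, and handle the points where $\dd\Q=0$ (so $L=+\infty$) by the convention $e^{-\infty}=0$. On that set the integrand equals $1$, it lies inside $\set{L>\eps}$, and it is already counted by $\int(\dd\P-e^\eps\dd\Q)_+$. Points where $\dd\P=0$ carry no $\P$-mass and contribute nothing.
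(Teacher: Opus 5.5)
Your proof is correct and follows essentially the same route as the paper. You split the expectation over $\set{L>\eps}$ and $\set{L<-\eps}$, bound the first piece by the forward hockey-stick divergence, and for the second piece change measure to $\cA(\cD')$ and use $\dd\P/\dd\Q\le e^{-\eps}$ to obtain $e^{-\eps}$ times the reverse divergence; your perfect-square identity is just an equivalent way of writing that pointwise bound, giving the same $(1+e^{-\eps})\delta$.
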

\begin{proof}
    First, note that $(\eps,\delta)$-DP directly implies that:
    \begin{equation*}
        \esp{\left( 1-e^{\eps - \PLRV_{\cD,\cD'}(\cA(\cD))} \right)_+}\leq \delta\,.
    \end{equation*}
    Then,
    \begin{align*}
        &\esp{\left( 1-e^{\eps - \PLRV_{\cD,\cD'}(\cA(\cD))} \right)_+}\\
        &\leq \esp{\left( 1-e^{\eps - \PLRV_{\cD,\cD'}(\cA(\cD))} \right)\one_\set{\PLRV_{\cD,\cD'}(\cA(\cD))\geq \eps} }\\
        &\quad + \esp{\left( 1-e^{\eps + \PLRV_{\cD,\cD'}(\cA(\cD))} \right)\one_\set{\PLRV_{\cD,\cD'}(\cA(\cD))\leq -\eps} }\\
        &\leq \delta  + \esp{\left( 1-e^{\eps + \PLRV_{\cD,\cD'}(\cA(\cD))} \right)\one_\set{\PLRV_{\cD,\cD'}(\cA(\cD))\leq -\eps} }\\
        &= \delta  + \esp{e^{\PLRV_{\cD,\cD'}(\cA(\cD'))}\left( 1-e^{\eps + \PLRV_{\cD,\cD'}(\cA(\cD'))} \right)\one_\set{\PLRV_{\cD,\cD'}(\cA(\cD'))\leq -\eps} } \\
        &\qquad\qquad (\text{change of measure})\\
        &= \delta  + \esp{e^{\PLRV_{\cD,\cD'}(\cA(\cD'))}\left( 1-e^{\eps - \PLRV_{\cD',\cD}(\cA(\cD'))} \right)\one_\set{\PLRV_{\cD,\cD'}(\cA(\cD'))\leq -\eps} } \\
        &\qquad\qquad (\text{since}\quad \PLRV_{\cD',\cD}(\cA(\cD'))=-\PLRV_{\cD,\cD'}(\cA(\cD'))\quad)\\
        &\leq \delta  + \esp{e^{-\eps}\left( 1-e^{\eps - \PLRV_{\cD',\cD}(\cA(\cD'))} \right)\one_\set{\PLRV_{\cD,\cD'}(\cA(\cD'))\leq -\eps} } \\
        &\leq \delta + e^{-\eps}\delta\,,
        \end{align*}
    concluding the proof.
\end{proof}
\end{proof}

\subsection{Conditional Zero-Run $(\eps,\delta)$-DP Auditing Algorithm}
\label{sec:audit_approx_DP_tampered}

\begin{algorithm}[t]
    \caption{Conditional Zero-Run $(\eps,\delta)$-DP auditing}
    \label{alg:audit_approx_DP_tampered}
    \begin{algorithmic}[1]
        \STATE \textbf{Inputs:}
        Instance $\theta=\cA(\dtrain)$ of the algorithm $\cA$ to audit;
        MIA $\cM$;
        $\cD_1\subseteq\dtrain$;
        $\cD_0\subseteq\cX\setminus\dtrain$;
        probability of error $p$;
        privacy parameters $\eps,\delta>0$;
        guess budget $r\in[m]$ such that $\NRM{T}_0\leq r$
        almost surely

        \STATE $\set{X_1,\ldots,X_m}\gets\cD_1\cup\cD_0$
        \STATE
        $S_i\gets
        2\one_\set{X_i\in\cD_1}-1
        \in\set{-1,1}$
        for all $i\in[m]$

        \STATE Run \Cref{alg:mia} on
        $(\theta,\cD_1,\cD_0,\cM)$
        to obtain $(T_i)_{i\in[m]}$

        \STATE Run \Cref{alg:propensity_learning} on
        $(X_i,S_i)_{i\in[m]}$
        to obtain $(\hat\pi_i)_{i\in[m]}$

        \FOR{$i\in[m]$}
            \STATE
            $\displaystyle
            b_i\gets
            \frac{
                \min\{\hat\pi_i,1-\hat\pi_i\}
            }{
                \max\{\hat\pi_i,1-\hat\pi_i\}
            }$

            \STATE Sample
            $B_i\sim\mathrm{Bernoulli}(b_i)$
            independently
        \ENDFOR

        \STATE
        $\displaystyle
        c\gets
        \sum_{i=1}^{m}
        B_i\one_\set{T_i=S_i}$

        \STATE Set $\bar f\gets 1-f_{\eps,\delta}$

        \STATE Run \Cref{alg:f_DP_counts} on
        $(p,\bar f,r,c)$
        to obtain $a=r[0]+h[0]$

        \IF{$a\geq r/m$}
            \STATE \textbf{Return} \textsc{False}
        \ELSE
            \STATE \textbf{Return} \textsc{True}
        \ENDIF
    \end{algorithmic}
\end{algorithm}

\section{Conditional Approximate DP Auditing \textit{Without Tampering}}

In this section, we show that approximate DP can be audited in Zero-Run, without tampering.
In order to do so, instead of multiplying MIA guesses by independent Bernoulli variables in order to tamper with their confidence, we adapt the success rate, which need to take into account MIA overconfidence.
While in the pure DP setting, both results would yield equivalent results, in the approximate DP case the tampering results are stronger, as they yield milder $\delta$ dependence.
However, the obtained MIA success rate bounds here are conditioned also on $T_i$.

\subsection{$\eps$-DP Conditional Auditing without Tampering}
\label{app:proof_pure_DP}

\begin{theorem}[$\eps$-DP Auditing]\label{thm:pure_DP_auditing}
  Let $\eps>0$ and assume that $\cA$  satisfies $\eps$-DP.
	For any $v\in\R$, the success rate of the guesses of the attack is upper bounded as:
	\begin{equation}
        \begin{aligned}
		      &\proba{\sum_{i=1}^m \max(0,T_iS_i) \geq v\,\Big|\, (X_i,T_i)_{i\in[m]}} \\
        &\leq \proba{\sum_{i=1}^m  \xi_i |T_i|
        \geq v\,\Big|\, (X_i,T_i)_{i\in[m]}}\,,
        \end{aligned}
	\end{equation}
 where $\xi_i$ are independent Bernoulli random variables of respective parameters given by:
 \begin{equation*}
     p_i(\eps) = \frac{e^{\eps_i}}{1+e^{\eps_i}}\,,\qquad \text{where}\qquad \eps_i=\eps+\left|\log\left(\frac{\pi(X_i)}{1-\pi(X_i)}\right)\right|\,.
 \end{equation*}
\end{theorem}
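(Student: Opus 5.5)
The argument follows the untampered version of the proof of \Cref{thm:pure_DP_auditing_tampered}, which is itself modelled on \citet[Proposition 5.1]{steinke2023privacy}. There are two steps: a pointwise posterior bound on each membership bit, and then a coupling/induction that turns these pointwise bounds into stochastic domination of the sum of correct guesses.

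\textbf{Step 1 (pointwise posterior bound).} Fix $i\in[m]$, a vector $s^{(-i)}$ for the other bits, the features $\bar X=(X_j)_{j\in[m]}$, and an event $\set{\theta\in\cE}$ of positive probability. By Bayes' rule and \Cref{hyp:data_generation}, which gives $\proba{S_i=1\mid S^{(-i)}=s^{(-i)},\bar X}=\pi(X_i)$,
\begin{equation*}
\proba{S_i=1\mid \theta\in\cE,S^{(-i)}=s^{(-i)},\bar X}=\frac{\pi(X_i)}{\pi(X_i)+(1-\pi(X_i))R^{-1}}\,,
\end{equation*}
where $R$ is the ratio of the probabilities of $\set{\theta\in\cE}$ under $S_i=1$ and $S_i=-1$.

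Conditionally on $\bar X$ and $S^{(-i)}$, the training sets obtained for $S_i=\pm1$ are adjacent, so $\eps$-DP gives $R\in[e^{-\eps},e^{\eps}]$. Bounding the odds $\pi(X_i)/(1-\pi(X_i))$ by $e^{|\log(\pi(X_i)/(1-\pi(X_i)))|}$, the posterior is at most $p_i(\eps)$. The symmetric computation gives the same bound for $S_i=-1$. Since the guess $T_i$ is a function of $(\theta,\bar X)$, we can take $\cE=\set{T_i=t}$ intersected with any further $\theta$-measurable event. This yields
\begin{equation*}
\proba{S_i=T_i\mid T_i\neq0,\ \text{past}}\le p_i(\eps)\,.
\end{equation*}

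\textbf{Step 2 (induction).} Set $W_k=\sum_{j\le k}\max(0,T_jS_j)$ and $\tilde W_k=\sum_{j\le k}|T_j|\xi_j$, where the $\xi_j$ are independent of everything else. We show by induction on $k$ that, conditionally on $(X_j,T_j)_{j\in[m]}$, $W_k$ is stochastically dominated by $\tilde W_k$. The case $k=0$ is trivial.

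For the inductive step, condition on $\set{W_k=w,\tilde W_k=\tilde w,\bar T=\bar t,\bar X=\bar x}$. This event is $(\theta,\bar X,S^{(-(k+1))})$-measurable, once we average over the other bits, which is harmless because the Step 1 bound is uniform in $s^{(-i)}$. By Step 1, the increment $\max(0,T_{k+1}S_{k+1})$ is a Bernoulli variable with mean at most $|t_{k+1}|p_{k+1}(\eps)$, which is the mean of $|T_{k+1}|\xi_{k+1}$. For Bernoulli variables, domination of means is equivalent to stochastic domination. \citet[Lemma 4.9]{steinke2023privacy} then propagates the domination to the sums, and unrolling the induction up to $k=m$ gives the claim.

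\textbf{Main obstacle.} The delicate part is the conditioning in Step 1. We condition on all of $\bar X$ and on events defined through $\theta$ (such as past correctness counts, which involve the other $S_j$), while applying DP only with respect to bit $i$. The argument needs $\pi(X_i)$ to remain the prior for $S_i$ even after conditioning on $S^{(-i)}$ and $X^{(-i)}$, and this is exactly where the i.i.d. pairs of \Cref{hyp:data_generation} are used. The event must also be written as a union of $\theta$-events under fixed $s^{(-i)}$, so that the ratio bound $R\in[e^{-\eps},e^{\eps}]$ applies before averaging over $s^{(-i)}$. I would handle this the same way as in the proof of \Cref{thm:pure_DP_auditing_tampered}, which is this argument with $B_i\equiv1$ and the $p_i(\eps)$ bound kept instead of being renormalised to $\frac{e^\eps}{1+e^\eps}$.
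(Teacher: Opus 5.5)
Your proposal is correct and takes essentially the same route as the paper. The paper also uses Bayes' rule with prior $\pi(X_i)$ and the $\eps$-DP likelihood-ratio bound to get the posterior bound $\frac{1}{1+e^{-\eps-\eps_\DS(X_i)}}=p_i(\eps)$, then runs an induction on the partial sums that is closed with \citet[Lemma 4.9]{steinke2023privacy}. Your explicit decomposition of the history event over $s^{(-i)}$, so that DP is applied only to $\theta$-events with the other bits fixed, is a welcome clarification of a step the paper leaves implicit.
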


\begin{proof}[Proof of \Cref{thm:pure_DP_auditing}]
	The proof is largely inspired by that of \citet[Proposition 5.1]{steinke2023privacy}.
	Let ${\theta\in\cE}$ be an event of non-null probability.
	Fix some $i\in[m]$ and $s^{(-i)}\in\set{-1,1}^m$.
	We first start by proving an upper-bound on:
	\begin{equation*}
		\proba{S_i=1 | \theta\in\cE, S^{(-i)} = s^{(-i)},\bar X }\,.
	\end{equation*} 
	Using Bayes' theorem,
	\begin{align*}
		&\proba{S_i=1 | \theta\in\cE, S^{(-i)} = s^{(-i)},\bar X } = \frac{ \proba{ \theta\in\cE | S_i=1 , S^{(-i)} = s^{(-i)},\bar X } \times \proba{S_i=1 | S^{(-i)} = s^{(-i)},\bar X }}{ \proba{ \theta\in\cE | S^{(-i)} = s^{(-i)},\bar X } } \\
		\quad& =  \frac{ \proba{ \theta\in\cE | S_i=1 , S^{(-i)} = s^{(-i)},\bar X } \times \proba{S_i=1 | S^{(-i)} = s^{(-i)},\bar X }}{ \sum_{s\in\set{-1,1}}\proba{ \theta\in\cE | S_i=s, S^{(-i)} = s^{(-i)},\bar X } \proba{ S_i=s | S^{(-i)} = s^{(-i)},\bar X } } \\
		\quad& =  \frac{ \pi(X_i) \proba{ \theta\in\cE | S_i=1 , S^{(-i)} = s^{(-i)},\bar X } }{ \pi(X_i)\proba{ \theta\in\cE| S_i=1, S^{(-i)} = s^{(-i)},\bar X }  + (1-\pi(X_i)\proba{ \theta\in\cE| S_i=-1, S^{(-i)} = s^{(-i)},\bar X }  } \,,
	\end{align*}
	since $\proba{ S_i=1 | S^{(-i)} = s^{(-i)},\bar X } = \pi(X_i)$   and  $\proba{ S_i=-1 | S^{(-i)} = s^{(-i)},\bar X } = 1-\pi(X_i)$.
	Then, using the fact that due to $\eps$-DP we have that:
	\begin{equation*}
		R \eqdef  \frac{\proba{ \theta\in\cE | S_i=1 , S^{(-i)} = s^{(-i)},\bar X } }{ \proba{ \theta\in\cE | S_i=-1 , S^{(-i)} = s^{(-i)},\bar X }} \in \left[ \frac{1}{e^\eps} , e^\eps\right]\,,
	\end{equation*}
	we have:
	\begin{align*}
		\proba{S_i=1 | \theta\in\cE, S^{(-i)} = s^{(-i)},\bar X } & \leq \frac{ \pi(X_i) }{ \pi(X_i)  + (1-\pi(X_i))R^{-1} } \\
        & \leq \frac{ \pi(X_i) }{ \pi(X_i)  + (1-\pi(X_i))e^{-\eps} } \\
		&\leq \frac{1}{ 1  + \frac{1-\pi(X_i)}{\pi(X_i)}e^{-\eps} }\\
		&\leq \frac{1}{1+e^{-\eps-\eps_i}}\,,
	\end{align*}
    where $e^{\eps_i}=\max(\frac{\pi(X_i)}{1-\pi(X_i)},\frac{1-\pi(X_i)}{\pi(X_i)})$.
	Similarly, we have that:
	\begin{align*}
		\proba{S_i=-1 | \theta\in\cE, S^{(-i)} = s^{(-i)},\bar X } & \leq \frac{1}{1+e^{-\eps-\eps_i}}\,.
	\end{align*}
	We can thus prove by induction that, conditionally on $(X_j,T_j)_{j\in[m]}$, for all $i\in[m]$, $W_i = \sum_{j\leq i} \max(0,T_jS_j)$ is stochastically dominated by $\tilde W_j = \sum_{j\leq i} |T_j|\xi_j$, where $\xi_j$ are independent Bernoulli random variables of parameters $\frac{1}{1+e^{-\eps-\eps_j}}$.
	For $i=1$ this is of course true.
	Then, if we assume that the result holds at some $i\leq n$, we have that $W_{i+1}=W_i +  \one_\set{S_{i+1} = \mathrm{sign}(T_{i+1})}$ and $\tilde W_{i+1}=\tilde W_i +  \xi_{i+1}$.
    Conditionally on the event $\set{W_i=w_i,\tilde W_i=\tilde w_i,\bar T=\bar t}$ for $w_i,\tilde w_i,\bar t$ such that this event is not of probability 0:
    \begin{align*}
        &\esp{\one_\set{S_{i+1} = \mathrm{sign}(T_{i+1})} |W_i=w_i,\tilde W_i=\tilde w_i,\bar T=\bar t}\\
        &= \proba{S_{i+1} = \mathrm{sign}(T_{i+1}) |W_i=w_i,\tilde W_i=\tilde w_i,\bar T=\bar t}\\
        &=  \proba{S_{i+1} = T_{i+1}|W_i=w_i,\tilde W_i=\tilde w_i,\bar T=\bar t}\\
        &=  \proba{S_{i+1} = 1|W_i=w_i,\tilde W_i=\tilde w_i,\bar T=\bar t,T_{i+1}=1}\proba{T_{i+1}=1|W_i=w_i,\tilde W_i=\tilde w_i,\bar T=\bar t}\\
        &\quad+ \proba{S_{i+1} = -1|W_i=w_i,\tilde W_i=\tilde w_i,\bar T=\bar t,T_{i+1}=-1}\proba{T_{i+1}=-1|W_i=w_i,\tilde W_i=\tilde w_i,\bar T=\bar t}\\
        &\leq \frac{1}{1+e^{-\eps-\eps_i}}\proba{T_{i+1}=1|W_i=w_i,\tilde W_i=\tilde w_i,\bar T=\bar t}\\
        &\quad+ \frac{1}{1+e^{-\eps-\eps_i}}\proba{T_{i+1}=-1|W_i=w_i,\tilde W_i=\tilde w_i,\bar T=\bar t}\,,
    \end{align*}
    using the above computations with the adequate event $\cE$, which leads to:
    \begin{align*}
        \esp{\one_\set{S_{i+1} = \mathrm{sign}(T_{i+1})} |W_i=w_i,\tilde W_i=\tilde w_i,\bar T=\bar t}&\leq \esp{\zeta_{i+1}|W_i=w_i,\tilde W_i=\tilde w_i,\bar T=\bar t}\,.
    \end{align*}
    Thus,
    \begin{align*}
        \esp{\max(0,T_{i+1}S_{i+1}) |W_i=w_i,\tilde W_i=\tilde w_i,\bar T=\bar t}&\leq \esp{|T_{i+1}|\zeta_{i+1}|W_i=w_i,\tilde W_i=\tilde w_i,\bar T=\bar t}\,.
    \end{align*}
    leading to $\max(0,T_{i+1}S_{i+1})$ stochastically dominated by $|T_{i+1}|\zeta_{i+1}$ conditionally on $W_i$, $\tilde W_i$ and $T_{\leq i+1}$. We conclude using \citet[Lemma 4.9]{steinke2023privacy}: $W_{i+1}$ is stochastically dominated by $\tilde W_{i+1}$, conditionally on $W_i$, $\tilde W_i$ and $T_{\leq i+1}$.
\end{proof}

\subsection{Conditional $(\eps,\delta)$-DP Auditing without Tampering}
\label{app:proof_eps_delta_cond}

\begin{proposition}
    For any  $S\sim_i S'$ two bits datasets adjacent on $i\in[m]$ and any $x\in\cX^m$, we have:
    \begin{equation*}
        \left|\log\left( \frac{\proba{\cM_\DS(S)=x}}{\proba{\cM_\DS(S')=x}}\right)\right| = \eps_\DS(x_i)\,.
    \end{equation*}
\end{proposition}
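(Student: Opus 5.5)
The plan is to write the density of $\cM_\DS(S)$ explicitly as a product over coordinates and observe that every factor except the $i$-th cancels in the ratio. By definition of $\cM_\DS$, for $s\in\set{-1,1}^m$ and $x\in\cX^m$, writing $p_1,p_0$ for the densities of $\P_1,\P_0$ with respect to a common dominating measure (e.g.\ $\P_1+\P_0$), we have
\begin{equation*}
\proba{\cM_\DS(s)=x}=\prod_{j=1}^m \big(\one_\set{s_j=1}p_1(x_j)+\one_\set{s_j=-1}p_0(x_j)\big)\,.
\end{equation*}
If $S\sim_i S'$, then $S_j=S'_j$ for $j\neq i$, so those factors cancel in the ratio. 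Assuming without loss of generality that $S_i=1$ and $S'_i=-1$, the log-ratio reduces to $\log\big(p_1(x_i)/p_0(x_i)\big)$.

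The second step links this likelihood ratio to the propensity score using Bayes' rule under \Cref{hyp:data_generation}. The pairs $(X_i,S_i)$ are i.i.d., with $\proba{S_i=1}=\rho$ (the member proportion) and $X_i\mid S_i=\pm1\sim\P_{1},\P_0$. Hence
\begin{equation*}
\frac{\pi(x)}{1-\pi(x)}=\frac{\rho\,p_1(x)}{(1-\rho)\,p_0(x)}\,,
\end{equation*}
so that $\log\frac{p_1(x_i)}{p_0(x_i)}=\log\frac{\pi(x_i)}{1-\pi(x_i)}-\log\frac{\rho}{1-\rho}$. When members and non-members are balanced ($\rho=1/2$, the convention implicit in identifying $\eps_\DS(x)$ with the local log-odds throughout \Cref{sec:worstcase_compo,sec:conditional}), the prior term vanishes. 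Taking absolute values then gives exactly $\eps_\DS(x_i)=\left|\log\frac{\pi(x_i)}{1-\pi(x_i)}\right|$. The case $S_i=-1$ flips the sign inside the log, which the absolute value absorbs.

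The main obstacle is this prior-odds term. In general the identity holds only with $\pi$ replaced by the balanced propensity $\pi$ reweighted by $\rho$, i.e.\ $\eps_\DS$ should be read as the log-likelihood ratio $|\log(p_1/p_0)|$. My plan is therefore to state explicitly that $\eps_\DS$ is understood with balanced prior odds, or equivalently defined via $\P_1,\P_0$. A minor technical point is points where $p_0(x_i)=0$ or $p_1(x_i)=0$: there both sides equal $+\infty$ with the usual conventions, and the density ratio is only defined almost everywhere. I would handle this by stating the identity $\P_1+\P_0$-almost surely.
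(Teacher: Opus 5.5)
Your argument is the reasoning the paper relies on: cancel every coordinate except $i$ in the product density of $\cM_\DS$, then use Bayes' rule to turn $p_1(x_i)/p_0(x_i)$ into the propensity odds (the paper states this proposition without proof and uses it implicitly in the proof of \Cref{prop:DP_of_M_DS}). Your caveat is also correct: the equality holds exactly only under balanced prior odds $\proba{S_i=1}=1/2$, which is the design the paper adopts in its experiments, so stating that assumption explicitly is the right repair (for instance, $\P_1=\P_0$ with $\rho\neq 1/2$ gives a left-hand side of $0$ while $\eps_\DS\equiv|\log\frac{\rho}{1-\rho}|$).
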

Given the prior that members and non-members respectively follow distributions $\P_1$ and $\P_0$, the virtual conditional mechanism $\cM_{\DS|x_i}^{(i)}$ for data point $i\in[m]$ is an exact $\eps_\DS(x_i)$-DP mechanism.
By the chain rule of the Privacy Loss Random Variable, the total instance-specific leakage $\cM_{\Tot|x_i}^{(i)}$ for data point $i$, having observed $x_i$, is bounded by the sequential composition of the two conditionally independent mechanisms $\cM_{\DS|x_i}^{(i)}$ and $\cA$.
Thus, if $\cA$ is $\eps$-DP, $\cM_{\Tot|x_i}^{(i)}$ is $(\eps+\eps_\DS(x_i))$-DP, and the probability of making a good guess (i.e., of the event $\set{T_i=S_i}$) on the observed data point $X_i$ is upper-bounded by:
\begin{equation}\label{eq:p_i_eps}
    p_i(\eps)=\frac{e^{\eps+\eps_\DS(X_i)}}{1+e^{\eps+\eps_\DS(X_i)}}\,.
\end{equation}
Next result uses and formalizes this heuristics to upper bound the success rate of any MIA, capturing instance-specific privacy leakage. 

\begin{theorem}[$(\eps,\delta)$-Conditional Auditing]
\label{thm:conditional_audit_eps_delta}
    Assume that \Cref{hyp:data_generation} holds.
    Let $(\zeta_i)_{i\in[m]}$ be independent Bernoulli random variables of parameters $(p_i(\eps))_{i\in[m]}$, as defined in \Cref{eq:p_i_eps}.
    Then, for any $u,v>0$, with probability $1-\sqrt{\frac{m(1+e^{-\eps})\delta}{v}}$ over the randomness of $(T_i)_{i\in[m]}$, we have:
    \begin{align*}
        \proba{\sum_{i=1}^m \one_\set{T_i=S_i} \geq v \Big| (X_i,T_i)_{i\in[m]}} &\leq   \proba{\sum_{i=1}^m  |T_i|\xi_i
        \geq v-u\,\Big|\, (X_i,T_i)_{i\in[m]}}+\sqrt{\frac{m(1+e^{-\eps})\delta}{v}}\\
        & \eqdef \Phi_{\eps,\delta,u}\big(v\,\big|\,(\pi(X_i),T_i)_{i\in[m]}\big)
        \,.
    \end{align*}
\end{theorem}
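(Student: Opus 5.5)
We follow the template of \Cref{thm:unified_auditing_better_tampered}, but without tampering, and we split the success count according to whether the privacy loss is small. Fix $\gamma=\eps$. For each $i$, let $\cD_i,\cD_i'$ be the two training sets obtained by setting $S_i=1$ and $S_i=-1$, and let $\cE_{\eps,i}=\set{\theta:|\PLRV^\cA_{\cD_i,\cD_i'}(\theta)|\le\eps}$. We decompose
\begin{equation*}
\sum_i\one_\set{T_i=S_i}=\sum_i\one_\set{T_i=S_i,\theta\in\cE_{\eps,i}}+\sum_i\one_\set{T_i=S_i,\theta\notin\cE_{\eps,i}}\eqdef W+R\,.
\end{equation*}

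\textbf{Step 1: the good part $W$.} On $\cE_{\eps,i}$ the likelihood ratio is bounded by $e^{\eps}$. The Bayes computation from the proof of \Cref{thm:unified_auditing_better_tampered}, with prior $\pi(X_i)$, then gives
\begin{equation*}
\proba{S_i=T_i\mid \theta\in\cE\cap\cE_{\eps,i},S^{(-i)},\bar X}\le \frac{1}{1+e^{-\eps-\eps_\DS(X_i)}}=p_i(\eps)\,.
\end{equation*}
We then run the induction and coupling argument of \Cref{thm:pure_DP_auditing}, invoking \citet[Lemma 4.9]{steinke2023privacy} at each step. This shows that, conditionally on $(X_i,T_i)$, $W$ is stochastically dominated by $\sum_i|T_i|\xi_i$ with $\xi_i\sim\mathrm{Bernoulli}(p_i(\eps))$. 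On the bad event we simply bound the indicator by $1$; that contribution is exactly $R$, so no control of conditional success probabilities is needed there.

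\textbf{Step 2: controlling $R$.} We have $R\le\sum_i\one_\set{\theta\notin\cE_{\eps,i}}$. We want $\esp{R}\le m(1+e^{-\eps})\delta$. By \Cref{prop:hockey}, $(\eps,\delta)$-DP gives $\esp{(1-e^{\eps-|\PLRV|})_+}\le(1+e^{-\eps})\delta$. That quantity controls a hockey-stick mass rather than the raw probability $\proba{|\PLRV|>\eps}$. To get around this, we replace $\eps$ by $\eps'$ slightly larger, or more cleanly we weight $R$ as in the $\xi_j(\gamma')$ coupling of \Cref{thm:unified_auditing_better_tampered}. There, the excess success probability on the bad set is at most $1-e^{\eps-|\PLRV|}$. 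Hence the \emph{excess} count $R'$ over the Bernoulli$(p_i(\eps))$ benchmark satisfies
\begin{equation*}
\esp{R'}\le m(1+e^{-\eps})\delta\,.
\end{equation*}
I expect this to be the main obstacle: making the coupling with $\xi_i(\gamma')$ valid conditionally on $T$ without tampering, since $T_i$ now depends on $\theta$.

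\textbf{Step 3: two-stage Markov.} Set $a=\sqrt{m(1+e^{-\eps})\delta/v}$. Markov's inequality on the unconditional expectation gives
\begin{equation*}
\proba{\esp{R'\mid X,T}>u\,a^{-1}\cdot a^{2}}\le a\,,
\end{equation*}
that is, outside an event of probability $a$ over $(T_i)$, the conditional expectation of $R'$ is at most $u a$. Tuning this so the threshold reads $\esp{R'\mid X,T}\le u\,a$ is where the choice of $v$ in $a$ enters. On the complement, a conditional Markov inequality gives $\proba{R'\ge u\mid X,T}\le a$ (up to the normalisation by $u$ versus $v$, which I would adjust so the constants match the statement). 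Finally,
\begin{equation*}
\set{W+R'\ge v}\subset\set{W\ge v-u}\cup\set{R'\ge u}\,.
\end{equation*}
Combining this with Step 1 yields the claimed bound: $\proba{\sum_i|T_i|\xi_i\ge v-u\mid X,T}+a$, holding with probability $1-a$ over $T$.
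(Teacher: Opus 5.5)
\textbf{There is a genuine gap, and it is in Step~1, before the difficulty you anticipated.} The events $\cE_{\eps,j}$ are built from $\cD_j,\cD_j'$, which contain $X_{i+1}$ iff $S_{i+1}=1$. So when your induction conditions on the partial good-event count $W_i$, it conditions on a function of $S_{i+1}$. The Bayes bound only covers events in $\theta$ with $S^{(-(i+1))}$ held fixed, and in fact the claimed domination of $W$ is false.

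Here is a counterexample. Take $\cD=\emptyset$, $m=2$, and $\P_1=\P_0$ atomless with $\pi\equiv 1/2$. Let $\cA(D)$ output $a$ with probability $1/2$ if $D\neq\emptyset$ and $1/2-\eta$ if $D=\emptyset$, and $b$ otherwise; this is $(\eps,\eta)$-DP for every $\eps\geq 0$. Let $T=(1,1)$ if $\theta=a$ and $T=(0,0)$ otherwise. If $\log\frac{1/2}{1/2-\eta}>\eps$, then $\one_{\set{a\in\cE_{\eps,1}}}=\one_{\set{S_2=1}}$, and symmetrically for index $2$. Hence on $\set{T=(1,1)}$ you get $W=2\cdot\one_{\set{S_1=S_2=1}}$ and $\proba{W\geq 2\mid T=(1,1)}=\frac{1}{4-2\eta}$. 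For $\eps=0.01$ and $\eta=0.1$ this equals $0.263$, which exceeds $0.253\approx\big(\frac{e^\eps}{1+e^\eps}\big)^2$.

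Separately, Step~3 does not assemble. The success count is $W+R$, not $W+R'$. An excess $R'$ over the benchmark only makes sense once the \emph{whole} count, bad events included, has been dominated by a level-dependent Bernoulli sum whose benchmark part $\sum_i|T_i|\xi_i(\eps)$ runs over all $i$. So the good/bad split must be dropped rather than patched; this is how the paper is organised, via $\xi_i'$ and $A_m+B_m$. Your diagnosis that \Cref{prop:hockey} controls a hockey-stick mass, not $\proba{|\PLRV|>\eps}$, is correct.

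\textbf{The paper's construction has the same defect, so copying it will not close the gap.} Its $\xi_i'$ is built from $|\PLRV_{\cD_i,\cD_i'}|$ with the other bits at their realized values. In the example above, $\proba{\sum_i\one_{\set{T_i=S_i}}\geq 1\mid T=(1,1)}\approx 0.789$, while $\proba{\sum_i|T_i|\xi_i'\geq 1\mid T=(1,1)}\approx 0.777$.

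A clean repair is to reveal the bits sequentially given $(\theta,\bar X)$.
\begin{itemize}
\item For index $i$, let $P_i,Q_i$ be the laws of $\cA(\cD_i)$ and $\cA(\cD_i')$ with $S_{<i}$ fixed and the unrevealed bits $S_{>i}$ averaged under their prior. This is legitimate because the $S_j$ are independent given $\bar X$ under \Cref{hyp:data_generation}. Set $\Lambda_i=\log\frac{\dd P_i}{\dd Q_i}(\theta)$.
\item Bayes then gives $\proba{S_i=T_i\mid \theta,\bar X,S_{<i}}\leq |T_i|\,p_i(|\Lambda_i|)$, with no error term.
\item Draw each $S_i$ through a fresh uniform $U_i$. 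Pathwise, $\sum_i\one_{\set{T_i=S_i}}\leq A+B$, where $A=\sum_i|T_i|\one_{\set{U_i\leq p_i(\eps)}}$ is exactly the benchmark given $(\bar X,T)$ and $B=\sum_i|T_i|\one_{\set{p_i(\eps)<U_i\leq p_i(\max(\eps,|\Lambda_i|))}}$.
\item By joint convexity of the hockey-stick divergence, $(P_i,Q_i)$ are still $(\eps,\delta)$-indistinguishable in both directions, so the proof of \Cref{prop:hockey} applies to them.
\item You also need $p_i(\gamma')-p_i(\eps)\leq 1-e^{\eps-\gamma'}$. This follows from $\sigma(z+y)-\sigma(z)=\frac{e^z(e^y-1)}{(1+e^z)(1+e^{z+y})}$ with $\sigma(z)=\frac{e^z}{1+e^z}$.
\item Together these give $\esp{B\mid \bar X}\leq m(1+e^{-\eps})\delta$.
\end{itemize}

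The obstacle you flagged is not one. Conditioning on $T$ is conditioning on an event in $(\theta,\bar X)$, and tampering plays no role there.

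Finally, your two-stage Markov closes with $a=\sqrt{m(1+e^{-\eps})\delta/u}$. Use threshold $ua$ for $\esp{B\mid \bar X,T}$, then conditional Markov at level $u$. If you define $a$ through $v$, the two stages balance only when $u\geq v$, where the bound is trivial. \Cref{thm:unified_auditing_better} indeed has $u$ in this position. The $v$ in the displayed statement comes from the swapped roles of $u$ and $v$ at the end of the paper's proof.
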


\subsection{Intermediate Result}
We first start with the following result, from which \Cref{thm:conditional_audit_eps_delta} will directly follow, by plugging in $\gamma=\eps$ and $\delta'=(1+e^{-1})\delta$, as we then prove in \Cref{prop:hockey}.

\begin{theorem}[Intermediate Result]
\label{thm:unified_auditing_better}
    Let $\gamma>0$ and $\delta'>0$ such that for all adjacent datasets $\cD\sim\cD'$,
    \begin{equation*}
        \esp{(1-e^{\gamma-|\PLRV_{\cD,\cD'}^\cA(\cA(\cD))|})_+} \leq \delta'\,.
    \end{equation*}
    Assume that \Cref{hyp:data_generation} holds.
    Let $(\zeta_i)_{i\in[m]}$ be independent Bernoulli random variables of parameters $(p_i(\gamma))_{i\in[m]}$, defined as:
    \begin{equation*}
        p_i(\gamma)=\frac{e^{\gamma+\eps_i}}{1+e^{\gamma+\eps_i}}\,,
    \end{equation*}
    where $\eps_i=|\log(\pi(X_i)/(1-\pi(X_i)))$.
    Then, for all $u,v> 0$, with probability $1-\sqrt{m\delta'/u}$ over the randomness of $(T_i)_{i\in[m]}$:
    \begin{align*}
        \proba{\sum_{i=1}^m \max(0,T_iS_i) \geq v \Big| (X_i,T_i)_{i\in[m]}} &\leq   \proba{\sum_{i=1}^m  |T_i|\xi_i
        \geq v-u\,\Big|\, (X_i,T_i)_{i\in[m]}} \\
        &\qquad+ \sqrt{m\delta'/u}\,.
    \end{align*}
\end{theorem}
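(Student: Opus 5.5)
The plan is to run the proof of \Cref{thm:unified_auditing_better_tampered} again without the tampering variables. Where that argument used Markov's inequality on the excess term $B_m'$, we will instead use a two-stage Markov argument. This produces a failure probability over $(T_i)$ together with an additive slack $u$.

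\textbf{Step 1 (pointwise bound).} Fix $i$, set $\cD_i,\cD_i'$ as before, and let $\cE_{\gamma,i}=\set{\theta:|\PLRV^\cA_{\cD_i,\cD_i'}(\theta)|\le\gamma}$. Bayes' rule with prior $\pi(X_i)$, exactly as in the proof of \Cref{thm:pure_DP_auditing}, gives
\begin{equation*}
\proba{S_i=T_i\mid \theta\in\cE\cap\cE_{\gamma,i},S^{(-i)},\bar X}\le \frac{1}{1+e^{-\gamma-\eps_i}}=p_i(\gamma).
\end{equation*}
The same argument on the level sets $\dd\cE_{\gamma',i}$ gives the bound $p_i(\gamma')$.

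\textbf{Step 2 (coupling and domination).} Build Bernoulli variables $\xi_i(\gamma')$ of parameter $p_i(\gamma')$, monotonically coupled through a single uniform $U_i$. Define
\begin{equation*}
\xi_i'=\one_\set{\theta\in\cE_{\gamma,i}}\xi_i(\gamma)+\int_{\gamma'>\gamma}\one_\set{\theta\in\dd\cE_{\gamma',i}}\xi_i(\gamma').
\end{equation*}
Induction together with \citet[Lemma 4.9]{steinke2023privacy} then shows that, conditionally on $(X_i,T_i)$, $\sum_i\max(0,T_iS_i)$ is stochastically dominated by $A+B'$. Here $A=\sum_i|T_i|\xi_i(\gamma)$, whose law is that of $\sum_i|T_i|\xi_i$ in the statement, and $B'=\sum_i(\xi_i'-\xi_i(\gamma))\ge0$.

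\textbf{Step 3 (controlling $B'$).} Given $\xi_i(\gamma)=0$, the increment $\xi_i(\gamma')-\xi_i(\gamma)$ is Bernoulli with parameter at most $1-e^{\gamma+\eps_i-\gamma'-\eps_i}=1-e^{\gamma-\gamma'}$. So, as before, the hypothesis gives $\esp{B'}\le m\delta'$ unconditionally. The difficulty is that the statement conditions on $(X_i,T_i)$, while $B'$ is correlated with $T$ through $\theta$. We handle this in two stages. First, the conditional expectation $Y=\esp{B'\mid (X_i,T_i)}$ satisfies $\esp{Y}\le m\delta'$, so Markov gives $Y\le \sqrt{m\delta' u}$ except on a $(T_i)$-event of probability at most $\sqrt{m\delta'/u}$. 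Second, on the good event, conditional Markov gives $\proba{B'\ge u\mid X,T}\le Y/u\le\sqrt{m\delta'/u}$. Splitting on $\set{B'\ge u}$ then yields
\begin{equation*}
\proba{A+B'\ge v\mid X,T}\le \proba{A\ge v-u\mid X,T}+\sqrt{m\delta'/u}.
\end{equation*}
\Cref{thm:conditional_audit_eps_delta} follows by \Cref{prop:hockey} with $\gamma=\eps$ and $\delta'=(1+e^{-\eps})\delta$, reconciling the $u$ versus $v$ in the stated rate by the choice of $u$.

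\textbf{Main obstacle.} I expect Step 3 to be the hardest part. Conditioning on $T$ breaks the independence of $A$ and $B'$ that the tampered proof exploited. Also, the event $\set{\theta\in\cE_{\gamma,i}}$ depends on the random datasets, so the measurability of the conditional expectation has to be checked carefully.
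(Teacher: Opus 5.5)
Your proposal is correct and is essentially the paper's proof. It uses the same truncated-PLRV Bayes bound with prior $\pi(X_i)$, the same monotone coupling defining $\xi_i'$ with inductive domination via \citet[Lemma 4.9]{steinke2023privacy}, and the same two-stage Markov argument; you apply Markov to $\E[B'\mid (X_i,T_i)_i]$ and then conditional Markov, while the paper applies Markov directly to the conditional tail probability of $B_m$, giving the identical $\sqrt{m\delta'/u}$. The one real difference is how you get $\E[B']\leq m\delta'$: you use the conditional increment parameter $\frac{1-e^{\gamma-\gamma'}}{1+e^{-\gamma'-\eps_i}}\leq 1-e^{\gamma-\gamma'}$, as in the tampered proof, which is cleaner than the direct chain in the paper's untampered proof, whose last step $(e^{L-\gamma}-1)_+\leq(1-e^{\gamma-L})_+$ (with $L$ the absolute PLRV) fails for $L>\gamma$ even though the bound it targets does hold.
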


\begin{proof}[Proof of \Cref{thm:unified_auditing_better}]
    	Fix some $i\in[m]$, let $\cD_i=\cD\cup\set{X_j,S_j=1,j\ne i}\cup\set{X_i}$ and $\cD'_i=\cD\cup\set{X_j,X_j=1,j\ne i}$, respectively corresponding to the (random) datasets obtained by setting $S_i$ to 1 and $-1$.
     If $S_i=1$, we have $\dtrain=\cD_i$, while if $S_i=-1$ we have $\dtrain = \cD'_i$.
    Let us define the following (random, since $S_j,j\ne i$ and $S_i$ are random) set:
\begin{equation*}
\cE_{\gamma,i} = \set{\theta\in\Theta:|\PLRV^\cA_{\cD_i,\cD'_i}(\theta)|\leq \gamma}\,.
\end{equation*}	
We first start by proving an upper-bound on:
	\begin{equation*}
		\proba{S_i=1 | \theta\in\cE\cap\cE_{\gamma,i}, S^{(-i)} = s^{(-i)},\bar X }\,,
	\end{equation*} 
 for $\cE$ such that $\theta\in\cE\cap\cE_{\gamma,i}$ is of non-null probability.
	Using Bayes' theorem,
	\begin{align*}
		&\proba{S_i=1 | \theta\in\cE\cap\cE_{\gamma,i}, S^{(-i)} = s^{(-i)},\bar X }\\
  &= \frac{ \proba{ \theta\in\cE\cap\cE_{\gamma,i} | S_i=1 , S^{(-i)} = s^{(-i)},\bar X } \times \proba{S_i=1 | S^{(-i)} = s^{(-i)},\bar X }}{ \proba{ \theta\in\cE\cap\cE_{\gamma,i} | S^{(-i)} = s^{(-i)},\bar X } } \\
		\quad& =  \frac{ \proba{ \theta\in\cE\cap\cE_{\gamma,i} | S_i=1 , S^{(-i)} = s^{(-i)},\bar X } \times \proba{S_i=1 | S^{(-i)} = s^{(-i)},\bar X }}{ \sum_{s\in\set{-1,1}}\proba{ \theta\in\cE\cap\cE_{\gamma,i} | S_i=s, S^{(-i)} = s^{(-i)},\bar X } \proba{ S_i=s | S^{(-i)} = s^{(-i)},\bar X } } \\
		\quad& =  \frac{ \pi(X_i) \proba{ \theta\in\cE\cap\cE_{\gamma,i} | S_i=1 , S^{(-i)} = s^{(-i)},\bar X } }{ \pi(X_i)\proba{ \theta\in\cE\cap\cE_{\gamma,i}| S_i=1, S^{(-i)} = s^{(-i)},\bar X }  + (1-\pi(X_i)\proba{ \theta\in\cE| S_i=-1, S^{(-i)} = s^{(-i)},\bar X }  } \,,
	\end{align*}
	since $\proba{ S_i=1 | S^{(-i)} = s^{(-i)},\bar X } = \pi(X_i)$ and  $\proba{ S_i=-1 | S^{(-i)} = s^{(-i)},\bar X } = 1-\pi(X_i)$.
    Then, let	
	\begin{equation*}
		R_i \eqdef  \frac{\proba{ \theta\in\cE\cap\cE_{\gamma,i} | S_i=1 , S^{(-i)} = s^{(-i)},\bar X } }{ \proba{ \theta\in\cE\cap\cE_{\gamma,i} | S_i=-1 , S^{(-i)} = s^{(-i)},\bar X }}\,.
	\end{equation*}
    We have that:
    \begin{align*}
        \proba{ \theta\in\cE\cap\cE_{\gamma,i} | S_i=1 , S^{(-i)} = s^{(-i)},\bar X } &= \E_{S_{>i}}\int \one_\set{\theta\in\cE}\one_\set{|\PLRV_{\cD_i,\cD_i'}(\theta)|\leq \gamma} \dd\P_{\cA(\cD_i)}(\theta)\\
        &\leq e^\gamma\E_{S_{>i}}\int \one_\set{\theta\in\cE}\one_\set{|\PLRV_{\cD_i,\cD_i'}(\theta)|\leq \gamma} \dd\P_{\cA(\cD_i')}(\theta)\\
        & = e^\gamma \proba{ \theta\in\cE\cap\cE_{\gamma,i} | S_i=-1 , S^{(-i)} = s^{(-i)},\bar X }\,,
    \end{align*}
    since $\PLRV_{\cD_i,\cD_i'}(\theta)\leq \gamma$ implies that $\dd\P_{\cA(\cD_i)}(\theta)\leq e^\gamma \dd\P_{\cA(\cD_i')}(\theta)$.
    Thus, we have that $R_i\leq e^\gamma$, and:
	\begin{align*}
		\proba{S_i=1 | \theta\in\cE\cap\cE_{\gamma,i}, S^{(-i)} = s^{(-i)},\bar X } & \leq \frac{ \pi(X_i) }{ \pi(X_i)  + (1-\pi(X_i))R_i^{-1} } \\
        & \leq \frac{ \pi(X_i) }{ \pi(X_i)  + (1-\pi(X_i))e^{-\gamma} } \\
		&\leq \frac{1}{ 1  + \frac{1-\pi(X_i)}{\pi(X_i)}e^{-\gamma} }\\
		&\leq \frac{1}{1+e^{-\gamma-\eps_i}}\,,
	\end{align*}
    where $e^{\eps_i}=\max(\frac{\pi(X_i)}{1-\pi(X_i)},\frac{1-\pi(X_i)}{\pi(X_i)})$.
	Similarly, since $|\PLRV_{\cD_i,\cD_i'}(\theta)|=|\PLRV_{\cD_i',\cD_i}(\theta)|$ we have that:
	\begin{align*}
		\proba{S_i=-1 | \theta\in\cE\cap\cE_{\gamma,i}, S^{(-i)} = s^{(-i)},\bar X } & \leq \frac{1}{1+e^{-\gamma-\eps_i}}\,.
	\end{align*}
 As a consequence, for all $i$ and event $\theta\in\cE$:
 \begin{align*}
   		&\proba{S_i=T_i | \theta\in\cE\cap\cE_{\gamma,i}, S^{(-i)} = s^{(-i)},\bar X }\\
     & = \proba{S_i=1 | \theta\in\cE\cap\cE_{\gamma,i}, T_i=1, S^{(-i)} = s^{(-i)},\bar X }\proba{T_i=1 | \theta\in\cE\cap\cE_{\gamma,i}, S^{(-i)} = s^{(-i)},\bar X }\\
     &\quad+ \proba{S_i=-1 | \theta\in\cE\cap\cE_{\gamma,i}, T_i=-1, S^{(-i)} = s^{(-i)},\bar X }\proba{T_i=-1 | \theta\in\cE\cap\cE_{\gamma,i}, S^{(-i)} = s^{(-i)},\bar X }\\
     &\leq \frac{1}{1+e^{-\gamma-\eps_i}}\,.  
 \end{align*}
Now, for any $\gamma'>\gamma$, let:
\begin{equation*}
\dd\cE_{\gamma,i} = \set{\theta\in\Theta:|\PLRV_{\cD_i,\cD'_i}(\theta)|= \gamma'}\,.
\end{equation*}	
Although this event can be of null probability if the PLRV is continuous, we can still condition on $\dd\cE_{\gamma,i}$, as it amounts to conditioning on a real valued random variable.
Using the same reasoning as above,
	\begin{align*}
		\proba{S_i=T_i| \theta\in\cE\cap\dd\cE_{\gamma',i}, S^{(-i)} = s^{(-i)},\bar X } & \leq \frac{1}{1+e^{-\gamma'-\eps_i}}\,.
	\end{align*}
Let $\xi_j(\gamma')$ be independent Bernoulli random variables of parameters $\frac{1}{1+e^{-\gamma'-\eps_j}}$, independent of all the rest and coupled such that $\xi_j(\gamma)\leq\xi_j(\gamma')$ for all $\gamma'\geq \gamma$. Such a coupling can be obtained by sampling $(U_i)_{i\in[m]}$ i.i.d. random variables uniformly in $[0,1]$ and setting, for any $\gamma'\geq \gamma$:
\begin{equation*}
    \xi_j = \one_\set{U_j\leq p_j(\gamma')}\,.
\end{equation*}
Let then:
\begin{equation*}
    \xi_j' = \one_\set{\cA(\cD)\in\cE_{\gamma,i}} \xi_j(\gamma) + \int_{\gamma'>\gamma} \one_\set{\cA(\cD) \in\dd\cE_{\gamma',i}}\xi_j(\gamma')\,,
\end{equation*}
which is a Bernoulli random variable, as at most only one of the $\zeta_j$ is ‘‘activated''.
	Let us prove by induction that, conditionally on $(X_j,T_j)_{j\in[m]}$, for all $i\in[m]$, $W_i = \sum_{j\leq i} \max(0,T_jS_j)$ is stochastically dominated by $\tilde W_i = \sum_{j\leq i} |T_j|\xi_j'$.
	For $i=0$ this is of course true since both quantities are null..
	Then, if we assume that the result holds at some $ i\leq n$, we have that $W_{i+1}=W_i +  \one_\set{S_{i+1} = \mathrm{sign}(T_{i+1})}$ and $\tilde W_{i+1}=\tilde W_i +  \xi_{i+1}'$.
    Conditionally on the event $\set{W_i=w_i,\tilde W_i=\tilde w_i,\bar T=\bar t,\bar X=\bar x}$ for $w_i,\tilde w_i,\bar t,\bar x$ such that this event is not of probability 0, if $t_{i+1}\ne 0$, using the above computations and writing $\theta=\cA(\cD)$ as the output of the training algorithm:
    \begin{align*}
        &\esp{\one_\set{S_{i+1} = \mathrm{sign}(T_{i+1})} |W_i=w_i,\tilde W_i=\tilde w_i,\bar T=\bar t,\bar X=\bar x}\\
        &= \proba{S_{i+1} = \mathrm{sign}(T_{i+1}) |W_i=w_i,\tilde W_i=\tilde w_i,\bar T=\bar t,\bar X=\bar x}\\
        &=  \proba{S_{i+1} = T_{i+1}|W_i=w_i,\tilde W_i=\tilde w_i,\bar T=\bar t,\bar X=\bar x}\\
        &=  \proba{S_{i+1} = T_{i+1},\theta\in\cE_{\gamma,i+1}|W_i=w_i,\tilde W_i=\tilde w_i,\bar T=\bar t,\bar X=\bar x}\\
        &\quad +  \int_{\gamma'>\gamma} \proba{S_{i+1} = T_{i+1},\theta\in\dd\cE_{\gamma',i+1}|W_i=w_i,\tilde W_i=\tilde w_i,\bar T=\bar t,\bar X=\bar x}\dd\gamma'\\
        &\leq  \proba{S_{i+1} = T_{i+1}|\theta\in\cE_{\gamma,i+1},W_i=w_i,\tilde W_i=\tilde w_i,\bar T=\bar t,\bar X=\bar x}\\
        &\qquad\times\proba{\theta\in\cE_{\gamma,i+1}|W_i=w_i,\tilde W_i=\tilde w_i,\bar T=\bar t,\bar X=\bar x}\\
        &\quad + \int_{\gamma'>\gamma}\proba{S_{i+1} = T_{i+1}|\theta\in\dd\cE_{\gamma',i+1},W_i=w_i,\tilde W_i=\tilde w_i,\bar T=\bar t,\bar X=\bar x}\\
        &\qquad\times\proba{\theta\in\dd\cE_{\gamma',i+1}|W_i=w_i,\tilde W_i=\tilde w_i,\bar T=\bar t,\bar X=\bar x}\dd\gamma'\\
        &\leq  \frac{e^{\gamma+\eps_i}}{1+e^{\gamma+\eps_i}}\\
        &\qquad\times\proba{\theta\in\cE_{\gamma,i+1}|W_i=w_i,\tilde W_i=\tilde w_i,\bar T=\bar t,\bar X=\bar x}\\
        &\quad + \int_{\gamma'>\gamma} \frac{e^{\gamma'+\eps_i}}{1+e^{\gamma'+\eps_i}}\\
        &\qquad\times\proba{\theta\in\dd\cE_{\gamma',i+1}|W_i=w_i,\tilde W_i=\tilde w_i,\bar T=\bar t,\bar X=\bar x}\dd\gamma'\\
        &= \esp{\xi'_{i+1}|W_i=w_i,\tilde W_i=\tilde w_i,\bar T=\bar t,\bar X=\bar x}\,,
    \end{align*}
    using the above computations with the adequate event $\cE$.
    Thus,
    \begin{align*}
        &\esp{\max(0,T_{i+1},S_{i+1}) |W_i=w_i,\tilde W_i=\tilde w_i,\bar T=\bar t,\bar X=\bar x}\\
        &\leq \esp{|T_{i+1}|\xi_{i+1}'|W_i=w_i,\tilde W_i=\tilde w_i,\bar T=\bar t,\bar X=\bar x}\,,
    \end{align*}
    leading to $\max(0,T_{i+1}S_{i+1})$ stochastically dominated by $|T_{i+1}|\xi_{i+1}'$ since stochastic domination for Bernoulli random variables is equivalent to domination of means, conditionally on $W_i$, $\tilde W_i$, $X$ and $T_{\leq i+1}$.
    Using \citet[Lemma 4.9]{steinke2023privacy}: $W_{i+1}$ is stochastically dominated by $\tilde W_{i+1}$, conditionally on $W_i$, $\tilde W_i$ and $T_{\leq i+1}$.
    Thus, $W_n$ is stochastichally dominated by:
    \begin{equation*}
        W_n'=\sum_{i=1}^m|T_i|\xi_i'\,.
    \end{equation*}
    Recall that:
    \begin{equation*}
    \xi_j' = \one_\set{\cA(\cD)\in\cE_{\gamma,i}} \xi_j(\gamma) + \int_{\gamma'>\gamma} \one_\set{\cA(\cD) \in\dd\cE_{\gamma',i}}\xi_j(\gamma')\,,
\end{equation*}
    so that:
    \begin{equation*}
        W_n' = A_m + B_m\,,
    \end{equation*}
    where
    \begin{equation*}
        A_m = \sum_{i=1}^m|T_i|\xi_i(\gamma)\,,
        \end{equation*}    
and
    \begin{align*}
        B_m &\eqdef  \sum_{j=1}^m|T_j|\left(\int_{\gamma'>\gamma} \one_\set{\cA(\cD) \in\dd\cE_{\gamma',i}}\xi_j(\gamma') - \one_\set{\cA(\cD)\notin\cE_{\gamma,i}} \xi_j(\gamma)\right)\\
        & = \sum_{j=1}^m|T_j|\left(\int_{\gamma'>\gamma} \one_\set{\cA(\cD) \in\dd\cE_{\gamma',i}}\big(\xi_j(\gamma') -  \xi_j(\gamma)\big)\right)\\
    \end{align*}
    and we are left with upper bounding this second sum. Note that under our coupling of $\xi_j(\gamma),\xi_j(\gamma')$, each element of the sum in $B_m$ is non-negative, enabling us to use Markov inequality later in the proof.
    For $j\in[m]$,
    \begin{align*}
        &\esp{\int_{\gamma'>\gamma} \one_\set{\cA(\cD) \in\dd\cE_{\gamma',i}}\xi_j(\gamma')} \\
        &= \int_{\gamma'>\gamma} \proba{\cA(\cD) \in\dd\cE_{\gamma',i}}\esp{\xi_j(\gamma')}\dd\gamma'\\
        &= \int_{\gamma'>\gamma} \proba{\cA(\cD) \in\dd\cE_{\gamma',i}} \frac{e^{\gamma'+\eps_j}}{1+e^{\gamma'+\eps_j}} \dd\gamma'\\
        &= \int_{\gamma'>\gamma} \proba{|\PLRV^\cA_{\cD_j,\cD_j'}(\cA(\cD))| = \gamma'} \frac{e^{\gamma'+\eps_j}}{1+e^{\gamma'+\eps_j}} \dd\gamma'\\
        & = \esp{ \frac{e^{|\PLRV^\cA_{\cD_j,\cD_j'}(\cA(\cD))|+\eps_j}}{1+e^{|\PLRV^\cA_{\cD_j,\cD_j'}(\cA(\cD))|+\eps_j}} \Big| |\PLRV^\cA_{\cD_j,\cD_j'}(\cA(\cD))| >\gamma }\proba{\cA(\cD)\notin\cE_{\gamma,j}} \,.
    \end{align*}
    Thus,
    \begin{align*}
        &\esp{\int_{\gamma'>\gamma} \one_\set{\cA(\cD) \in\dd\cE_{\gamma',j}}\xi_j(\gamma') - \one_\set{\cA(\cD)\notin\cE_{\gamma,j}} \xi_j(\gamma)}\\
        &= \esp{ \left(\frac{e^{|\PLRV^\cA_{\cD_j,\cD_j'}(\cA(\cD))|+\eps_j}}{1+e^{|\PLRV^\cA_{\cD_j,\cD_j'}(\cA(\cD))|+\eps_j}} - \frac{e^{\gamma+\eps_j}}{1+e^{\gamma+\eps_j}}\right) \Big| |\PLRV^\cA_{\cD_j,\cD_j'}(\cA(\cD))| >\gamma }\proba{\cA(\cD)\notin\cE_{\gamma,j}}\\
        &= \esp{ \left(\frac{e^{|\PLRV^\cA_{\cD_j,\cD_j'}(\cA(\cD))|+\eps_j}}{1+e^{|\PLRV^\cA_{\cD_j,\cD_j'}(\cA(\cD))|+\eps_j}} - \frac{e^{\gamma+\eps_j}}{1+e^{\gamma+\eps_j}}\right)_+ }\qquad (\text{where}\qquad \forall z\in\R\,,\quad z_+ = \max(0,z))\\
        &\leq \esp{ \left(\frac{e^{|\PLRV^\cA_{\cD_j,\cD_j'}(\cA(\cD))|+\eps_j}}{1+e^{\gamma+\eps_j}} - \frac{e^{\gamma+\eps_j}}{1+e^{\gamma+\eps_j}}\right)_+ }\\
        &\leq \esp{ \left(\frac{e^{|\PLRV^\cA_{\cD_j,\cD_j'}(\cA(\cD))|+\eps_j}-e^{\gamma+\eps_j}}{1+e^{\gamma+\eps_j}} \right)_+ }\\
        &\leq \esp{ \left(\frac{e^{|\PLRV^\cA_{\cD_j,\cD_j'}(\cA(\cD))|+\eps_j}-e^{\gamma+\eps_j}}{e^{\gamma+\eps_j}} \right)_+ }\\
        &\leq \esp{ \left(\frac{e^{|\PLRV^\cA_{\cD_j,\cD_j'}(\cA(\cD))|}-e^{\gamma}}{e^{\gamma}} \right)_+ }\\
        &\leq \esp{ \left(1-e^{\gamma-|\PLRV^\cA_{\cD_j,\cD_j'}(\cA(\cD))|}\right)_+ }\,.\\
    \end{align*}
    Now, under $S_j=1$ we have $\cD=\cD_j$, while under $S_j=-1$ we have $\cD=\cD_j'$. Either way, we have:
    \begin{align*}
                &\esp{\int_{\gamma'>\gamma} \one_\set{\cA(\cD) \in\dd\cE_{\gamma',j}}\xi_j(\gamma') - \one_\set{\cA(\cD)\notin\cE_{\gamma,j}} \xi_j(\gamma)}\\
                &\leq \max\left(\esp{ \left(1-e^{\gamma-|\PLRV^\cA_{\cD_j,\cD_j'}(\cA(\cD_j))|}\right)_+ },\esp{ \left(1-e^{\gamma-|\PLRV^\cA_{\cD_j,\cD_j'}(\cA(\cD_j'))|}\right)_+ } \right)\\
                &\leq \delta'\,.
    \end{align*}
    Each element in the sum defining $B_m$ is of bounded first moment of order $\delta'$, but the two difficulties are that: these events are not independent, and these events are not independent of random variables $T_i$, enabling us to only be able to use a basic Markov inequality.
    We thus have, on the one hand:
    \begin{align*}
        \esp{B_m|\bar X} &\leq m\delta'\,.
    \end{align*}
    Using Markov inequality, 
    \begin{equation*}
            \proba{B_m>v|\bar X}\leq \frac{m\delta'}{v}\,,
    \end{equation*}
    and thus, for any $u,v>0$:
    \begin{align*}
        &\proba{\sum_{i=1}^m\max(0,S_iT_i)>u|\bar X, \bar T}\\
        &\leq \proba{A_m>u-B_m|\bar X,\bar T}\\
        &\leq \proba{\sum_{i=1}^m|T_i|\xi_i>u-v, B_m\leq v|\bar X,\bar T}+\proba{\sum_{i=1}^m|T_i|\xi_i>u-B_m, B_m\geq v|\bar X,\bar T}\\
        &\leq \proba{\sum_{i=1}^m|T_i|\xi_i>u-v|\bar X,\bar T}+\proba{B_m\geq v|\bar X,\bar T}\,.
    \end{align*}
Then, since using Markov inequality:
\begin{align*}
    &\esp{\proba{B_m\geq v|\bar X,\bar T}|\bar X} \\
    & = \proba{B_m\geq v|\bar X}\\
    &\leq \frac{m\delta'}{v}\,,
\end{align*}
using yet again Markov inequality, for any $v,w>0$:
\begin{align*}
    \proba{\proba{B_m\geq v|\bar X,\bar T} > w}\leq \frac{2m\delta'}{vw}\,,
\end{align*}
and we conclude by taking $w=\left(\frac{m\delta'}{v}\right)^{1/2}$.
\end{proof}

\subsection{Proof of \Cref{thm:conditional_audit_eps_delta}}

\begin{proof}[Proof of \Cref{thm:conditional_audit_eps_delta}]
    The proof follows by combining \Cref{prop:hockey} with \Cref{thm:unified_auditing_better}.
\end{proof}

\section{Conditional $f-$DP Auditing: Proof of \Cref{thm:conditional_audit_tampered}.\ref{thm:conditional_audit_tampered:fDP}}
\label{app:proof_f_DP_cond}

Let $\beta(X_i)=\max\set{\frac{\pi(X_i)}{1-\pi(X_i)} ,\frac{1-\pi(X_i)}{\pi(X_i)}} $ so that $\eps_i = \log(\beta(X_i))$, and let $(B_i)_{i\in[m]}$ be independent Bernoulli random variables of parameters $b_i\leq e^{-\eps_i}$, independent of all the rest.
Then, let $T_i'=B_iT_i$, where $T\in\set{-1,0,1}^m$ are the MIA guesses. $T'\in\set{-1,0,1}^m$ can be thought of as \textit{tampered} guesses, in order to to debias for the over confidence induced by distribution shift between members and non-members.

    We first start by proving the following lemma, inspired by \citet[Lemma 11]{mahloujifar2024auditing}. Intuitively, this lemma upper and lower bounds the influence of making a correct guess on the data distribution of the algorithm. It will serve as a back-bone to the proof of \Cref{thm:conditional_audit_tampered}.\ref{thm:conditional_audit_tampered:fDP}.

    \begin{lemma}\label{lemma:f_DP}
        For any measurable $\cE\in\Theta$, 
        \begin{equation*}
            \proba{\cA(\cD)\in\cE,T_1'=S_1 | X_1,\ldots,X_m}\leq f'(\proba{\cA(\cD)\in\cE | X_1,\ldots,X_m})\,,
        \end{equation*}
        where:
        \begin{equation*}
            f'(t) = \sup\set{\alpha: \alpha+f(t-\alpha)\leq 1}\,.
        \end{equation*}
    \end{lemma}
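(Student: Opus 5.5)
The plan is to reduce the statement to the balanced (randomized-membership) case treated in \citet[Lemma 11]{mahloujifar2024auditing}, using the tampering variable $B_1$ to cancel the imbalance created by the propensity score $\pi(X_1)$.

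\textbf{Step 1: fix the conditioning and the two laws.} Condition on $\bar X=(X_i)_{i\in[m]}$ and on $S^{(-1)}$, and average over $S^{(-1)}$ only at the end. Let $\P_+$ and $\P_-$ denote the conditional laws of $(\theta,T)$ given $S_1=1$ and $S_1=-1$, respectively. Since $T$ is a post-processing of $(\theta,\bar X)$ and $\cA$ is $f$-DP, we have $\P_+(A)\le \bar f(\P_-(A))$ and $\P_-(A)\le\bar f(\P_+(A))$ for every event $A$ in $(\theta,T)$. Write $\pi=\pi(X_1)$ and $c=\min(\pi,1-\pi)\le 1/2$, and set $E_\pm=\set{\theta\in\cE,\,T_1=\pm1}$. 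By \Cref{hyp:data_generation}, $\proba{S_1=1\mid \bar X,S^{(-1)}}=\pi$. Because $B_1$ is independent of everything else,
\begin{equation*}
\alpha\eqdef\proba{\theta\in\cE,T_1'=S_1\mid\cdot}=b_1\big(\pi\,\P_+(E_+)+(1-\pi)\,\P_-(E_-)\big),
\qquad
t\eqdef\proba{\theta\in\cE\mid\cdot}=\pi\,\P_+(\cE)+(1-\pi)\,\P_-(\cE).
\end{equation*}
The condition $b_1\le e^{-\eps_\DS(X_1)}=c/\max(\pi,1-\pi)$ gives $b_1\pi\le c$ and $b_1(1-\pi)\le c$. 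Hence
\begin{equation*}
\alpha\le 2c\,\alpha_B,\qquad \alpha_B\eqdef\tfrac12\big(\P_+(E_+)+\P_-(E_-)\big).
\end{equation*}
We also have $t=2c\,t_B+(1-2c)\,t_R$, where $t_B=\tfrac12(\P_+(\cE)+\P_-(\cE))$ and $t_R\ge0$ is the probability of $\cE$ under whichever of $\P_\pm$ carries the excess prior mass.

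\textbf{Step 2: the balanced bound.} The quantities $(\alpha_B,t_B)$ are exactly those of a balanced membership bit. Mahloujifar et al.'s argument therefore applies verbatim and gives $\alpha_B\le f'(t_B)$. The argument bounds $\P_-(E_-)\le\bar f(\P_+(E_-))$ (and symmetrically), then uses convexity of $f$ to show $\alpha_B+f(t_B-\alpha_B)\le1$.

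\textbf{Step 3: rescale and conclude.} From $\alpha\le 2c\,f'(t_B)$, it remains to show
\begin{equation*}
2c\,f'(t_B)\le f'\big(2c\,t_B+(1-2c)t_R\big).
\end{equation*}
For this I will establish two properties of $f'$ directly from its definition and the convexity, monotonicity and $f(x)\le 1-x$ properties of $f$: (i) $f'$ is nondecreasing; (ii) $f'$ is concave with $f'(0)=0$, so that $f'(\lambda s)\ge\lambda f'(s)$ for $\lambda\in[0,1]$. For (ii), I expect $f'$ to be the inverse-type transform of the convex function $\alpha\mapsto\alpha+f(t-\alpha)$, so concavity should follow from joint convexity of $(t,\alpha)\mapsto \alpha+f(t-\alpha)-1$ in the sublevel-set description $\set{(t,\alpha):\alpha+f(t-\alpha)\le1}$. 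That set is convex because it is a sublevel set of a convex function, so its upper boundary $t\mapsto f'(t)$ is concave. Monotonicity (i) should follow similarly, by showing that enlarging $t$ does not shrink the feasible $\alpha$-range. Finally, I average over $S^{(-1)}$ and use concavity of $f'$ (Jensen) to pass from conditioning on $(\bar X,S^{(-1)})$ to conditioning on $\bar X$ alone.

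\textbf{Main obstacle.} The delicate step is Step 3, specifically verifying monotonicity and concavity of $f'$ rigorously. The feasible set in $\alpha$ need not be an interval unless one uses $f(x)\le 1-x$ together with convexity. If that fails, the fallback is to redo Mahloujifar et al.'s convexity argument directly with the weights $b_1\pi$ and $b_1(1-\pi)$, both at most $c$, in place of $1/2$, and to absorb the residual mass $(1-2c)t_R$ into the argument of $f$ using the fact that $f$ is decreasing.
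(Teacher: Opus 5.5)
Your proof is correct, but it takes a genuinely different route from the paper's.

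\textbf{The paper's route.} The paper never passes through the balanced case. It bounds the untampered success $p=\proba{\cA(\cD)\in\cE,T_1=S_1\mid \bar X}$ directly with the skewed weights $(\pi,1-\pi)$, and uses convexity of $f$ together with $\max\bigl(\tfrac{\pi}{1-\pi},\tfrac{1-\pi}{\pi}\bigr)=e^{\eps_1}$ to obtain $p\le \bar f\bigl(e^{\eps_1}\proba{\cA(\cD)\in\cE,T_1\neq S_1\mid\bar X}\bigr)$. The inclusion $\{T_1'=S_1\}\subset\{T_1=S_1\}$ then replaces the argument by $q-p'$. Finally it cancels the tampering factor through $p'\le e^{-\eps_1}\bar f(e^{\eps_1}(q-p'))\le \bar f(q-p')$, which needs only concavity of $\bar f$ and $\bar f(0)\ge 0$.

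\textbf{Your route.} You isolate the effect of tampering in a single rescaling: $b_1\pi,\,b_1(1-\pi)\le c$ gives $\alpha\le 2c\,\alpha_B$. You then use the balanced lemma of Mahloujifar et al.\ verbatim. The star-shaped property you exploit is that of $f'$ (concave, nondecreasing, $f'(0)=0$) rather than that of $\bar f$.

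\textbf{What each buys.}
\begin{itemize}
\item Your version costs the structural facts about $f'$, but the paper proves these anyway in \Cref{lem:monotone} for the subsequent theorem, so nothing is wasted.
\item In exchange, your version makes transparent that tampering thins both classes down to the overlap mass $2c$.
\item Your explicit conditioning on $S^{(-1)}$, followed by Jensen on $f'$, also spells out a step the paper leaves implicit. The paper applies the $f$-DP inequality to laws conditioned only on $\bar X$, which are mixtures over the other membership bits, and that step silently requires concavity of $\bar f$.
\item The paper's argument is shorter and more elementary. It keeps the actual weights $(\pi,1-\pi)$ throughout, rather than discarding the excess prior mass $(1-2c)t_R$ via monotonicity.
\item Both arrive at the same inequality $p'+f(q-p')\le 1$.
\end{itemize}

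\textbf{Your stated obstacle.} The worry you raise is milder than you suggest. For fixed $t$, the feasible set $\{\alpha:\alpha+f(t-\alpha)\le 1\}$ is a sublevel set of a convex function of $\alpha$, hence an interval. For monotonicity, the only real issue is the domain of $f$, which you can handle by extending $f$ by $0$ beyond $1$; this is legitimate since $f(1)=0$ follows from $f(p)\le 1-p$.
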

    \begin{proof}[Proof of \Cref{lemma:f_DP}]
        Let $p'=\proba{\cA(\cD)\in\cE, T_1' = S_1 | X_1,\ldots,X_m}$, $p=\proba{\cA(\cD)\in\cE, T_1 = S_1 | X_1,\ldots,X_m}$ and $q=\proba{\cA(\cD)\in\cE | X_1,\ldots,X_m}$.
        First, notice that:
        \begin{align*}
            p' &= \proba{\cA(\cD)\in\cE, T_1' = S_1 | X_1,\ldots,X_m}\\
            & = \proba{\cA(\cD)\in\cE, T_1 = S_1, B_1=1 | X_1,\ldots,X_m}\\
            & = \proba{\cA(\cD)\in\cE, T_1 = S_1 | X_1,\ldots,X_m}\times \proba{B_1=1|X_1}\\
            & = b_1\proba{\cA(\cD)\in\cE, T_1 = S_1 | X_1,\ldots,X_m}\\
            & \leq e^{-\eps_1}\proba{\cA(\cD)\in\cE, T_1 = S_1 | X_1,\ldots,X_m}\\
            & = e^{-\eps_1}p\,.
        \end{align*}
        We then have:
        \begin{align*}
            p &= \proba{\cA(\cD)\in\cE, T_1 = S_1 | X_1,\ldots,X_m,S_1=0} \proba{S_1=0|X_1,\ldots,X_m}\\
            &\quad+ \proba{\cA(\cD)\in\cE, T_1 = S_1 | X_1,\ldots,X_m,S_1=1} \proba{S_1=1|X_1,\ldots,X_m}\\
            &= \proba{\cA(\cD)\in\cE, T_1 = 0 | X_1,\ldots,X_m,S_1=0} (1-\pi(X_1))\\
            &\quad +\proba{\cA(\cD)\in\cE, T_1 = 1 | X_1,\ldots,X_m,S_1=1} \pi(X_1)\\
            &\leq \bar f\left(\proba{\cA(\cD)\in\cE, T_1 = 0 | X_1,\ldots,X_m,S_1=1}\right) (1-\pi(X_1))\\
            &\quad +\bar f\left(\proba{\cA(\cD)\in\cE, T_1 = 1 | X_1,\ldots,X_m,S_1=0}\right) \pi(X_1)\\
            &\leq 1 - f\left(\proba{\cA(\cD)\in\cE, T_1 = 0 | X_1,\ldots,X_m,S_1=1}\right) (1-\pi(X_1))\\
            &\quad - f\left(\proba{\cA(\cD)\in\cE, T_1 = 1 | X_1,\ldots,X_m,S_1=0}\right) \pi(X_1)\,,
        \end{align*}
        by using the $f$-DP assumption on the two adjacent datasets (removing or adding $X_1$), since $T_1$ is also a function of $\cA(\cD)$.
        Now, by convexity and decreasing property of $f$:
        \begin{align*}
            p &\leq 1- f\Big(  (1-\pi(X_1))\proba{\cA(\cD)\in\cE, T_1 = 0 | X_1,\ldots,X_m,S_1=1} \\
            &\quad + \pi(X_1)\proba{\cA(\cD)\in\cE, T_1 = 1 | X_1,\ldots,X_m,S_1=0}\Big)\\
            &\leq 1- f\Big(  \frac{1-\pi(X_1)}{\pi(X_1)}\proba{\cA(\cD)\in\cE, T_1 = 0, S_1=1 | X_1,\ldots,X_m} \\
            &\quad + \frac{\pi(X_1)}{1-\pi(X_1)}\proba{\cA(\cD)\in\cE, T_1 = 1 ,S_1=0| X_1,\ldots,X_m}\Big)\\
            & \leq 1- f\Big( \beta(X_1)\times \proba{\cA(\cD)\in\cE, T_1 \ne S_1 | X_1,\ldots,X_m}\Big)\\
            & = \bar f\Big( e^{\eps_1}\times \proba{\cA(\cD)\in\cE, T_1 \ne S_1 | X_1,\ldots,X_m}\Big)\,,
        \end{align*}
        where $\beta(X_1) = \max\Big(\frac{\pi(X_1)}{1-\pi(X_1)} , \frac{1-\pi(X_1)}{\pi(X_1)}\Big) =e^{\eps_1}$.
        Then, notice that, since we have the event inclusion $\set{T_1'=S_1}\subset\set{T_1=S_1}$:
        \begin{align*}
            &\proba{\cA(\cD)\in\cE, T_1 \ne S_1 | X_1,\ldots,X_m} \\
            &\leq \proba{\cA(\cD)\in\cE, T_1' \ne S_1 | X_1,\ldots,X_m}\\
            &= \proba{\cA(\cD)\in\cE | X_1,\ldots,X_m} - \proba{\cA(\cD)\in\cE, T_1' = S_1 | X_1,\ldots,X_m}\\
            &= q - p'\,.
        \end{align*}
        Thus, since $\bar f$ is concave, non-decreasing, and $\bar f(0)\geq 0$:
        \begin{align*}
            p' & \leq e^{-\eps_1} p\\
            &\leq e^{-\eps_1}\bar f(e^{\eps_1} (q-p'))\\
            &\leq e^{-\eps_1}\bar f(e^{\eps_1} (q-p')) +(1- e^{-\eps_1})\bar f(0)\\
            &\leq \bar f(e^{-\eps_1}e^{\eps_1} (q-p') +(1- e^{-\eps_1})\times 0)\\
            &=\bar f(q-p')\,,
        \end{align*}
        which leads to:\begin{equation*}
            p'+f(q-p')\leq 1\,,
        \end{equation*}
        giving us $p\leq f'(q)$, concluding the proof of \Cref{lemma:f_DP}.
    \end{proof}

We now prove \Cref{thm:conditional_audit_tampered}.\ref{thm:conditional_audit_tampered:fDP}. Although the remaining arguments are exactly the same as in \citet{mahloujifar2024auditing}, we include them for completeness.

\begin{proof}[Proof of \Cref{thm:conditional_audit_tampered}.\ref{thm:conditional_audit_tampered:fDP}]

    We first state the following lemma \citep[Proposition 12]{mahloujifar2024auditing}.

    \begin{lemma}\label{lem:monotone}
                $f'$ is increasing and concave.
    \end{lemma}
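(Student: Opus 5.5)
The plan is to work directly with the feasibility region that defines $f'$:
\[
C=\set{(t,\alpha)\in[0,1]\times\R:\ 0\le t-\alpha\le 1,\ \alpha+f(t-\alpha)\le 1}.
\]
With this set, $f'(t)=\sup\set{\alpha:(t,\alpha)\in C}$. To handle the domain cleanly, I would extend $f$ to $\R$ by $f(y)=+\infty$ for $y\notin[0,1]$. This extension is still convex, because $f$ is convex on the interval $[0,1]$. With it, $C$ is exactly the sublevel set $\set{g\le 1}$ of
\[
g(t,\alpha)=\alpha+f(t-\alpha).
\]

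First I would check that $f'(t)$ is well defined and finite for every $t\in[0,1]$. The point $\alpha=0$ is always feasible, since $f(t)\le 1-t\le 1$, so the supremum is over a nonempty set. Feasibility forces $t-\alpha\ge 0$, so the supremum is at most $t$. Hence $0\le f'(t)\le t$.

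\textbf{Monotonicity.} Let $t\le t'$ and let $\alpha$ be feasible for $t$. Since $f$ is nonincreasing,
\[
\alpha+f(t'-\alpha)\le \alpha+f(t-\alpha)\le 1 .
\]
For the domain constraint, $t'-\alpha\ge t-\alpha\ge 0$. For the upper constraint, $t'-\alpha\le 1$ need not hold automatically; if it fails, $\alpha$ is still dominated by the feasible point $\alpha'=t'-1\ge\alpha$, which is admissible because $\alpha'+f(1)\le t'-1+0\le 1$. So every value feasible at $t$ is matched or exceeded by a feasible value at $t'$, and taking suprema gives $f'(t)\le f'(t')$.

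This yields monotonicity in the non-strict sense. That is the sense used afterwards, since $f'$ is only applied as an upper bound that must be preserved under increasing arguments. If strictness were required, I would try combining concavity with $f'(0)=0$ and $f'(t)>0$ for $t>0$, but I do not expect to need it.

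\textbf{Concavity.} This is the core step, and it comes from convexity of $g$. The map $(t,\alpha)\mapsto t-\alpha$ is affine, so $f(t-\alpha)$ is convex, and adding the linear term $\alpha$ keeps $g$ convex. Therefore $C$ is a convex set. Now fix $t_1,t_2$, $\lambda\in[0,1]$ and $\eta>0$, and pick feasible $\alpha_k$ with $\alpha_k\ge f'(t_k)-\eta$ for $k=1,2$. Convexity of $C$ makes $\big(\lambda t_1+(1-\lambda)t_2,\ \lambda\alpha_1+(1-\lambda)\alpha_2\big)$ feasible, so
\[
f'\big(\lambda t_1+(1-\lambda)t_2\big)\ \ge\ \lambda f'(t_1)+(1-\lambda)f'(t_2)-\eta .
\]
Letting $\eta\to 0$ gives concavity. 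This is the standard fact that the upper boundary of a convex set, viewed as a function of $t$, is concave.

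\textbf{Main obstacle.} The step needing the most care is the boundary bookkeeping. One must verify that the $+\infty$ extension preserves convexity of $f$ on all of $\R$, and that the supremum is attained or approximated consistently at the endpoints $t\in\set{0,1}$. Once those are settled, both properties follow from the two one-line inequalities above.
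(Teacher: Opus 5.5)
Your proposal is correct and follows essentially the same route as the paper: monotonicity from $f$ being nonincreasing, and concavity from convexity of $f$ applied to convex combinations of feasible pairs $(t_k,\alpha_k)$, followed by taking suprema. Your version is slightly more careful than the paper's in two places. The paper only treats midpoints, so it implicitly relies on midpoint concavity together with boundedness of $f'$ to conclude concavity. It also does not address feasible $\alpha$ with $t'-\alpha>1$, which you handle explicitly.
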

    \begin{proof}[Proof of \Cref{lem:monotone}]
        For $t<t'$, we have that for all $\alpha\in\R$ such that $t-\alpha$ and $t'-\alpha$ are in $[0,1]$:
        \begin{equation*}
            \alpha + f(t-\alpha) \geq \alpha + f(t-\alpha)\,,
        \end{equation*}
        since $f$ is decreasing. 
        Thus, $f(t)\leq f(t')$ and $f$ is increasing.

        For concavity, let $t,t'\in\R$. By convexity of $f$, for any $\alpha,\alpha'\in\R$ such that $t-\alpha$ and $t'-\alpha'$ are in $[0,1]$:
        \begin{equation*}
            f(\frac{t+t'}{2}-\frac{\alpha+\alpha'}{2}) \leq \frac{f(t-\alpha)+f(t'-\alpha')}{2}\,,
        \end{equation*}
        by convexity of $f$. Thus, if $\alpha,\alpha'$ satisfy $f(t-\alpha)+\alpha \leq 1$ and $f(t'-\alpha')+\alpha' \leq 1$, we have 
        \begin{equation*}
             \frac{\alpha+\alpha'}{2}+ \frac{\alpha + f(t-\alpha)+\alpha'+ f(t'-\alpha)}{2}\leq 1\,,
        \end{equation*}
        and thus $\frac{\alpha+\alpha'}{2} + f(\frac{t+t'}{2}-\frac{\alpha+\alpha'}{2})\leq 1$. We can thus conclude that $f(\frac{t+t'}{2})\geq \frac{\alpha+\alpha'}{2}$ and taking the sup over $\alpha,\alpha'$ satisfying $f(t-\alpha)+\alpha \leq 1$ and $f(t'-\alpha')+\alpha' \leq 1$, we have concavity.
    \end{proof}


    
    As in \citet[Theorem 9]{mahloujifar2024auditing}, we assume that the adversary makes guesses on all $m$ inputs with a vector $q\in\set{0,1}^m$ with exactly $r$ 1's and $m-r$ 0's. Only correct guesses that are in locations where $q_i=1$ are counted.
    If $t_i=\one_\set{S_i=T_i'}$, we thus have 
    \begin{align*}
        p_k &= \proba{\sum_{i=1}^m t_iq_i = k\,\big|\, (X_i)_{i\in[m]}}\\
        &= \proba{\sum_{i=2}^m t_iq_i=k-1\text{ and }t_1=1\text{ and }q_1=1\,\big|\, (X_i)_{i\in[m]}}\\
        &\quad+\proba{\sum_{i=2}^m t_iq_i=k\text{ and }t_1q_1=0\,\big|\, (X_i)_{i\in[m]}}\,.
    \end{align*}
    Here, we could use \Cref{lemma:f_DP}, which would yield:
    \begin{align*}
        &\proba{\sum_{i=2}^m t_iq_i=k-1\text{ and }t_1=1\text{ and }q_1=1\,\big|\, (X_i)_{i\in[m]}} \\
        &\leq f'\left( \proba{\sum_{i=2}^m t_iq_i=k-1\text{ and }q_1=1\,\big|\, (X_i)_{i\in[m]}}\right)\,.
    \end{align*}
    To obtain sharper bounds, we instead sum over all $k\in T$ and use the same reasoning:
    \begin{align*}
        \sum_{k\in T}p_k &= \sum_{k\in T} \proba{\sum_{i=1}^m t_iq_i = k\,\big|\, (X_i)_{i\in[m]}}\\
        & = \sum_{k\in T} \proba{\sum_{i=2}^m t_iq_i=k-1\text{ and }t_1=1\text{ and }q_1=1\,\big|\, (X_i)_{i\in[m]}}\\
        &\quad+\proba{\sum_{i=2}^m t_iq_i=k\text{ and }t_1q_1=0\,\big|\, (X_i)_{i\in[m]}}\\
        &= \proba{1+\sum_{i=2}^m t_iq_i\in T\text{ and }t_1=1\text{ and }q_1=1\,\big|\, (X_i)_{i\in[m]}}\\
        &\quad+\proba{\sum_{i=2}^m t_iq_i\in T\text{ and }t_1q_1=0\,\big|\, (X_i)_{i\in[m]}}\,.
    \end{align*}
    We now use \Cref{lemma:f_DP} on the first term:
    \begin{align*}
         \sum_{k\in T}p_k & \leq f'\left(\proba{1+\sum_{i=2}^m t_iq_i\in T\text{ and }q_1=1\,\big|\, (X_i)_{i\in[m]}}\right)\\
        &\quad+\proba{\sum_{i=2}^m t_iq_i\in T\text{ and }t_1q_1=0\,\big|\, (X_i)_{i\in[m]}}\,.
    \end{align*}
    The RHS of the bound is now invariant under a permutation of the order of indices, so that for a random permutation $\pi$ sampled uniformly at random in all permutations on $[n]$, we have:
    \begin{align*}
        \sum_{k\in T}p_k & \leq \E_\pi f'\left(\proba{1+\sum_{i=2}^m t_{\pi(i)}q_{\pi(i)}\in T\text{ and }q_{\pi(1)}=1\,\big|\, (X_i)_{i\in[m]}}\right)\\
        &\quad +\E_\pi\proba{\sum_{i=2}^m t_{\pi(i)}q_{\pi(i)}\in T\text{ and }t_{\pi(1)}q_{\pi(1)}=0\,\big|\, (X_i)_{i\in[m]}}\\
        & \leq f'\left( \E_\pi \proba{1+\sum_{i=2}^m t_{\pi(i)}q_{\pi(i)}\in T\text{ and }q_{\pi(1)}=1\,\big|\, (X_i)_{i\in[m]}}\right)\\
        &\quad+\E_\pi\proba{\sum_{i=2}^m t_{\pi(i)}q_{\pi(i)}\in T\text{ and }t_{\pi(1)}q_{\pi(1)}=0\,\big|\, (X_i)_{i\in[m]}}\,,
    \end{align*}
    using concavity of $f'$.
    The reasoning is then the same as \citet{mahloujifar2024auditing}: when permuting the order, $\sum_{i=2}^mt_{\pi(i)}q_{\pi(i)} = k$ and $t_{\pi(1)}q_{\pi(1)}=0$ count each instances $tq$ with exactly $k$ non-zero locations, for $(m-k)\times(m-1)!$ times, leading to:
    \begin{align*}
        \E_\pi\proba{\sum_{i=2}^m t_{\pi(i)}q_{\pi(i)}\in T\text{ and }t_{\pi(1)}q_{\pi(1)}=0\,\big|\, (X_i)_{i\in[m]}} = \sum_{k\in T}\frac{m-k}{m}p_k\,.
    \end{align*}
    Similarly for the other term, $\sum_{i=2}^mt_{\pi(i)}q_{\pi(i)} = k-1$ and $q_{\pi(1)}=1$ count each instances $t,q$ that satisfy $t_iq_i=1$ for $k-1$ indices exactly ($(r-k+1)(m-1)!$ times each) and each instances $t,q$ that satisfy $t_iq_i=1$ for $k$ indices exactly ($k(m-1)!$ times each), leading to:
    \begin{align*}
        \E_\pi\proba{1+\sum_{i=2}^m t_{\pi(i)}q_{\pi(i)}\in T\text{ and }q_{\pi(1)}=1\,\big|\, (X_i)_{i\in[m]}} = \sum_{k\in T} \frac{k}{m}p_k + \frac{r-k+1}{m}p_{k-1}\,.
    \end{align*}
    Thus,
    \begin{align*}
        \sum_{k\in T}p_k &\leq f'\left( \sum_{k\in T} \frac{k}{m}p_k + \frac{r-k+1}{m}p_{k-1} \right) + \sum_{k\in T}\frac{m-k}{m}p_k\,,
    \end{align*}
    leading to:
    \begin{align*}
        \sum_{k\in T}\frac{k}{m}p_k &\leq f'\left( \sum_{k\in T} \frac{k}{m}p_k + \frac{r-k+1}{m}p_{k-1} \right)\,.
        \end{align*}
    Thus, by definition of $f'$, 
    \begin{align*}
        \sum_{k\in T}\frac{k}{m}p_k &\leq 1-f\left( \sum_{k\in T}  \frac{r-k+1}{m}p_{k-1} -f\left( \sum_{k\in T}  \frac{r-k+1}{m}p_{k-1} \right) \right)\\
        &\leq 1-f\left( \sum_{k\in T} \frac{r-k+1}{m}p_{k-1}\right)\\
        &= \bar f\left( \sum_{k\in T} \frac{r-k+1}{m}p_{k-1}\right)\,,
    \end{align*}
    since $\bar f=1-f$ and $f$ is decreasing. This concludes the proof.
\end{proof}

\begin{proof}[Proof of \Cref{cor:validity_oracle} (conditional $f$-DP part)]
    The proof of is a direct adaptation of the proof \citet[Theorem 10]{mahloujifar2024auditing} (which is quite technically involved).
\end{proof}

\section{Propensity Score Estimation Algorithms}
\label{app:prop_learning_top}

\subsection{Propensity Score Estimation with Cross-Fitting}
    \label{app:propensity_learning}


We use \emph{cross-fitting} to estimate propensity scores without evaluating the classifier on the same examples used for training. We randomly split the data into two folds, train a propensity model on each fold, and use each model to predict membership for the other fold. Thus, every $\hat\pi_i$ is obtained from a classifier that did not see $X_i$ during training, reducing overfitting and ensuring that the estimated propensity scores reflect out-of-sample predictive performance. The procedure is summarized in \Cref{alg:propensity_learning}.

\begin{algorithm}[H]
    \caption{Propensity score learning}
    \label{alg:propensity_learning}
    \begin{algorithmic}[1]
        \STATE \textbf{Inputs:} 
        $(X_i,S_i)_{i\in[m]}\in(\cX\times \set{-1,1})^m$
                
        \STATE Construct two random balanced splits $\cI_1,\cI_2$ of $[m]$
        
        \FOR{$k \in \set{1,2}$}
            \STATE Fit classifier $\hat\pi^{(k)}$ on $(X_i,S_i)_{i\in\cI_k}$ to predict $\proba{S_i=1|X_i}$
        \ENDFOR
        \FOR{$i\in[m]$}
            \STATE $\hat\pi_i\gets \left\{\begin{aligned}
                & \hat \pi^{(2)}(X_i)\quad\text{if}\quad i\in\cI_1\\
                & \hat \pi^{(1)}(X_i)\quad\text{if}\quad i\in\cI_2
            \end{aligned}\right.$
        \ENDFOR
        
        \STATE \textbf{Return} $(\hat\pi_i)_{i\in[m]}$
    \end{algorithmic}
\end{algorithm}

\subsection{Bootstrap Algorithm}
\label{app:bootstrap}

\Cref{alg:bootstrap} formalizes our bootstrap procedure.
Given a function $\Psi:[0,1]^m\to \R$, the bootstrap procedure produces confidence intervals around:
\begin{equation*}
    \Psi((\pi(X_i))_{i\in[m]})\,,
\end{equation*}
using only estimates of the propensity scores.
We then use this by seeing empirical lower bounds as functions of the propensity scores, all the rest being fixed. 

\begin{algorithm}[H]
    \caption{Bootstrap upper confidence estimate
\label{alg:bootstrap}
    }
    \begin{algorithmic}[1]
      \STATE \textbf{Inputs:} Confidence probability $1-p'$, function $\Psi:[0,1]^m\to\R$, number of bootstraps $K\in\N^*$, $(X_i)_{i\in[m]} \cX^m$, training dataset for propensity scores $\cD'\in(\cX\times\set{-1,1})^{m'}$.
      \FOR{$k\in\set{1,\ldots,K}$}
      \STATE $\cD_k'\gets $ subsampling of $\cD'$ of size $m'$ with replacement
      \STATE Run \Cref{alg:propensity_learning} on $\cD_k'$ to obtain $(\tilde\pi_i^{(k)})_{i\in[m]}$
      \STATE $\tilde v_k\gets \Psi((\tilde\pi_i^{(k)}(X_i))_{i\in[m]})$
      \ENDFOR
      \STATE Run \Cref{alg:propensity_learning} on $\cD$ to obtain $(\hat\pi_i)_{i\in[m]}$
      \STATE $\tilde v\gets \Psi((\hat\pi(X_i))_{i\in[m]})$
      \STATE $\hat v_k \gets \tilde v_k + \tilde v - \mathrm{median}((\tilde v_k)_{k\in[K]} $
      \STATE \textbf{Return} $\mathrm{Quantile}\big((\hat v_k)_{k\in[K]},1-p'\big)$
    \end{algorithmic}
\end{algorithm}

\section{Experimental Details: Propensity Score Estimation}
\label{app:propensity-estimation-details}

\paragraph{Estimand and sampling design.}
For each audit comparison, we estimate the propensity score introduced in
\Cref{def:propensity},
\begin{equation}
    \pi(x)
    =
    \Pr(S_i=1 \mid X_i=x),
\end{equation}
where \(S_i=1\) denotes a member and \(S_i=-1\) denotes a non-member.
The propensity score is defined relative to the sampling design used to
construct the audit population. We therefore balance the two source
populations before fitting the propensity model: when
\(|\mathcal D_1| \neq |\mathcal D_0|\), we subsample the larger population
without replacement so that $ n_1 = |\mathcal D_1| = |\mathcal D_0| = n_0.$
Consequently, under the resulting audit design, $ \Pr(S_i=1) = \Pr(S_i=-1) = \frac{1}{2}$
and a calibrated classifier probability can be interpreted directly as
\(\pi(x)\) for this balanced population. Under this design, the propensity score also determines the pointwise
distribution-shift leakage used by the Zero-Run correction:
\begin{equation}
    \epsilon_{\mathrm{DS}}(x)
    =
    \left|
        \log
        \left(
            \frac{\pi(x)}{1-\pi(x)}
        \right)
    \right|.
    \label{eq:estimated-ds-leakage}
\end{equation}

Only covariates derived from \(X_i\) are supplied as propensity-model features.
The membership indicator \(S_i\) is used as the supervised classification
target, but task labels and all quantities derived from the audited model,
including its losses, logits, predictions, and internal representations, are
excluded. This separation prevents the propensity estimator from attributing
model-induced privacy leakage to distribution shift.

\paragraph{Input representations.}
For high-dimensional inputs, we use fixed representations computed by
pretrained encoders that are independent of the audited model. Text inputs are
represented using the 1,024-dimensional final-layer CLS embedding of
\texttt{BAAI/bge-m3}. The embeddings are \(L_2\)-normalized, input sequences
are truncated at 8,192 tokens, and no retrieval instruction or task-specific
prefix is added.

Images are represented using the 768-dimensional final-layer CLS embedding of
\texttt{google/vit-base-patch16-224-in21k}
\citep{dosovitskiy2021imageworth16x16words}. Images are resized to
\(224\times224\) using bilinear interpolation, rescaled by \(1/255\), and
normalized channelwise using mean \(0.5\) and standard deviation \(0.5\).

\paragraph{Propensity models.}
We compare six propensity-estimation pipelines spanning linear and nonlinear
model classes and alternative probability-calibration methods. Specifically,
we consider \(L_2\)-regularized logistic regression \citep{scikit-learn} with
either sigmoid (Platt) or isotonic calibration, a linear SVM with sigmoid
calibration, a random forest with sigmoid calibration, an extremely randomized
tree ensemble with sigmoid calibration, and histogram gradient boosting with
sigmoid calibration. Coordinatewise feature standardization is applied to the
linear models. \Cref{tab:propensity-models} summarizes the complete set of
estimators.

\begin{table}[t]
    \centering
    \small
    \renewcommand{\arraystretch}{1.15}
    \caption{
        Propensity-score model specifications. All reported propensity scores
        are obtained using five-fold outer cross-fitting and three-fold inner
        probability calibration. Optional PCA is fitted exclusively within the
        corresponding training folds.
    }
    \label{tab:propensity-models}

    \begin{tabularx}{\textwidth}{
        >{\raggedright\arraybackslash}p{0.28\textwidth}
        >{\raggedright\arraybackslash}p{0.22\textwidth}
        >{\raggedright\arraybackslash}p{0.25\textwidth}
        >{\raggedright\arraybackslash}X
    }
        \toprule
        \textbf{Estimator}
        &
        \textbf{Preprocessing}
        &
        \textbf{Uncalibrated score}
        &
        \textbf{Calibration}
        \\
        \midrule

        \(L_2\)-regularized logistic regression
        &
        Standardization
        &
        Decision function
        &
        Sigmoid
        \\

        \addlinespace

        \(L_2\)-regularized logistic regression
        &
        Standardization
        &
        Decision function
        &
        Isotonic
        \\

        \addlinespace

        Linear SVM
        &
        Standardization
        &
        Signed margin
        &
        Sigmoid
        \\

        \addlinespace

        Random forest
        &
        None
        &
        Class probability
        &
        Sigmoid
        \\

        \addlinespace

        Extremely randomized trees
        &
        None
        &
        Class probability
        &
        Sigmoid
        \\

        \addlinespace

        Histogram gradient boosting
        &
        None
        &
        Class probability
        &
        Sigmoid
        \\

        \bottomrule
    \end{tabularx}
\end{table}

\paragraph{Cross-fitting and calibration.}
All propensity scores used in the audit are computed out of sample using
five-fold stratified cross-fitting. Within each outer training fold, probability
calibration is learned using three inner folds, with sigmoid or isotonic
calibration depending on the estimator. All learned preprocessing, including
optional PCA and feature standardization, is fitted exclusively on the
corresponding training data. Consequently, every reported
\(\widehat{\pi}(X_i)\) is obtained from a pipeline that was not trained on
\(X_i\), and all downstream diagnostics and privacy calculations use these
held-out predictions.

\paragraph{Calibration and overlap diagnostics.}
We retain the estimated propensity scores without clipping or shrinking them
toward \(1/2\), since extreme values provide direct evidence of poor overlap
between \(\mathbb P_1\) and \(\mathbb P_0\). Through
\(\widehat{\epsilon}_{\mathrm{DS}}(X_i)
=|\log(\widehat{\pi}(X_i)/(1-\widehat{\pi}(X_i)))|\), such observations incur
a larger distribution-shift correction and contribute less to the conditional
audit. We evaluate the propensity estimators using held-out ROC AUC, log loss,
Brier score, expected calibration error, and calibration intercept and slope,
together with overlap diagnostics such as the propensity-score distribution,
the fraction of extreme scores, empirical common support, and the effective
sample size induced by the correction. AUC measures population separability,
whereas calibration and overlap diagnostics determine whether the estimated
probabilities provide a reliable basis for the Zero-Run correction; high AUC
alone does not imply adequate overlap.

\paragraph{Propensity-estimation uncertainty.}
Finally, we propagate uncertainty in propensity estimation through the final
privacy audit. We use \(K=600\) bootstrap replicates. In each replicate, we
resample observations with replacement within the member and non-member
populations and refit the complete propensity-estimation procedure, including
all learned preprocessing,  classifier fitting and calibration. This produces replicate-specific propensity
scores
\[
    \left(
        \widehat{\pi}^{(k)}(X_i)
    \right)_{i=1}^m,
    \qquad
    k=1,\ldots,K.
\]

For each replicate, we recompute the Zero-Run correction and extract the
corresponding empirical privacy lower bound
\(\widehat{\mu}^{(k)}\). The audited model output and the MIA scores are held
fixed across these propensity bootstrap replicates; the resampling therefore
targets uncertainty arising from estimation of the nuisance propensity
function.

As in \Cref{sec:propensity_estimation}, we split the overall error budget
between the privacy audit and propensity uncertainty. In our experiments,
\(p=p'=0.025\). We report the lower \(p'\)-quantile
\begin{equation}
    \widehat{\mu}_{\mathrm{boot}}
    =
    \operatorname{Quantile}_{p'}
    \left(
        \widehat{\mu}^{(1)},
        \ldots,
        \widehat{\mu}^{(K)}
    \right),
    \qquad
    p'=0.025.
    \label{eq:bootstrap-lower-bound}
\end{equation}
This conservative summary accounts for both finite-sample variability and
instability of the propensity-estimation pipeline. As discussed in
\Cref{sec:propensity_estimation}, its formal bootstrap interpretation requires
the usual regularity assumptions for the functional mapping propensity scores
to the resulting empirical privacy lower bound.

\section{Experimental Synthetic Distribution Shift}
\label{appendix:gaussian_syntethic_data}

We consider \(m=20{,}000\) candidate records represented by the standard basis vectors
\(e_1,\ldots,e_m\). Each record is independently assigned to be a member or a non-member:
\begin{equation}
    Y_i\sim\operatorname{Bernoulli}(1/2),
\end{equation}
where \(Y_i=1\) denotes a member. The member and non-member sets are therefore
\begin{equation}
    \mathcal{D}_{\mathrm{mem}}
    =
    \{e_i:Y_i=1\},
    \qquad
    \mathcal{D}_{\mathrm{non}}
    =
    \{e_i:Y_i=0\}.
\end{equation}

The mechanism releases one noisy sum of all member records:
\begin{equation}
    \theta_{\mathrm{priv}}
    =
    \sum_{i=1}^{m}Y_i e_i+\eta,
    \qquad
    \eta\sim\mathcal{N}(0,\sigma^2 I_m).
\end{equation}
The Gaussian vector \(\eta\) is sampled once for the entire release. Since adding or removing one record changes the sum by one basis vector, the \(L_2\) sensitivity is \(1\), and the mechanism satisfies \(\mu\)-GDP with
\begin{equation}
    \mu_{\mathrm{true}}=\frac{1}{\sigma}.
\end{equation}
We use \(\sigma=1.5\), giving \(\mu_{\mathrm{true}}=2/3\). For candidate \(i\), the membership score is its coordinate in the released vector:
\begin{equation}
    s_i
    =
    \langle e_i,\theta_{\mathrm{priv}}\rangle
    =
    Y_i+\eta_i.
\end{equation}
Thus, member scores follow \(\mathcal{N}(1,\sigma^2)\), while non-member scores follow
\(\mathcal{N}(0,\sigma^2)\). These scores are all obtained from the same noisy release. To introduce distribution shift, we randomly order the candidates once. At shift level \(\lambda\), we select the first
\begin{equation}
    m_\lambda
    =
    \operatorname{round}\left(m\frac{\lambda}{4}\right)
\end{equation}
candidates. Each selected candidate receives an additional feature that agrees with its membership label with probability
\begin{equation}
    q_\lambda=\frac{1}{1+\exp(-\lambda)}
\end{equation}
and takes the opposite value otherwise. Unselected candidates receive the value zero.

The same membership labels, noisy release, and membership scores are reused for every value of \(\lambda\). Only the additional feature changes. Therefore, increasing \(\lambda\) changes the member--non-member distribution shift without changing the underlying Gaussian mechanism. This extra feature should be seen as "advice" given to the auditor that leaks membership information and encapsulates the full distribution shift. 

\section{CIFAR-10 Experiments}
\label{app:cifar10-experiments}

\subsection{Synthetic Data Generation for CIFAR-10}
\label{app:synth-data-cifar10}

We construct a synthetic non-member dataset for the CIFAR-10 training set using a pretrained class-conditional DDO-EDM \cite{pmlr-v267-zheng25b, NEURIPS2022_a98846e9} diffusion model. We generate a pool of 104,000 candidates using the deterministic EDM Heun \cite{NEURIPS2022_a98846e9} sampler with 18 denoising steps and no stochastic churn. Each candidate is generated from a unique noise seed and conditioned on its CIFAR-10 class. We allocate 8,000 candidates to each class, except bird, cat, and dog, for which we generate 16,000 candidates because these classes exhibited the largest utility gaps in preliminary experiments. The final dataset contains 50,000 images, balanced at 5,000 examples per class.

We select candidates exclusively from real CIFAR-10 training examples. We deterministically split the 50,000 training images into a 45,000-image reference set and a 5,000-image holdout set, using 4,500 reference examples per class for selection. Images are represented using the concatenated, L2-normalized penultimate features of independently trained CIFAR-10 ResNet-18 \cite{he2016deep} classifiers. We first remove only clear semantic failures: we discard a generated image when the classifier ensemble predicts a class different from its conditioning label with probability greater than 0.9. This removes 1,083 of the 104,000 candidates; importantly, we retain low-confidence or ambiguous examples rather than filtering the pool toward artificially easy ones.

We then select 5,000 images independently for each class to match the coverage and diversity of the real training distribution. The real reference embeddings of each class are partitioned into 250 clusters, and the synthetic selection budget is distributed across clusters proportionally to their frequency in real CIFAR-10. Within each cluster, candidates are greedily selected using a facility-location criterion that rewards additional coverage of real examples, diversity relative to already selected synthetic examples, and agreement with the real data's characteristic class-confusion structure. Additional quotas over classifier entropy and prediction margin preserve the mixture of easy, ambiguous, and difficult examples found in the real training set. Thus, selection optimizes the synthetic set's properties collectively rather than choosing the most confidently classified or individually closest images.

\subsection{CIFAR-10 victim model training}
\label{app:cifar10-model-training}

We used a Wide ResNet-50-2 (\texttt{wide\_resnet50\_2}, approximately 66.9M parameters) adapted to CIFAR-10. The original $7\times7$ input convolution was retained with stride reduced from 2 to 1, the initial max-pooling layer was removed, and images were processed at their native $32\times32$ resolution. We considered four regimes designed to separate the effects of pretraining, classifier capacity, and full-model fine-tuning. Pretrained models were initialized from the torchvision ImageNet-1K V2 weights before replacing the classification head. In the two head-only settings, the backbone was frozen and kept in evaluation mode, ensuring that BatchNorm statistics were also fixed.

\begin{table}[t]
    \centering
    \small
    \begin{tabular}{llll}
        \toprule
        \textbf{Scenario} & \textbf{Classifier} & \textbf{Trainable parameters} & \textbf{Base LR} \\
        \midrule
        \texttt{pretrained\_linear}
        & Linear$(2048,10)$
        & 20,490
        & $0.1$ \\

        \texttt{pretrained\_mlp}
        & $2048{\rightarrow}4096{\rightarrow}2048{\rightarrow}10$
        & 16,803,850
        & $0.05$ \\

        \texttt{pretrained\_full}
        & Linear$(2048,10)$
        & 66,854,730
        & backbone $0.01$, head $0.1$ \\

        \texttt{scratch\_full}
        & Linear$(2048,10)$
        & 66,854,730
        & $0.2$ \\
        \bottomrule
    \end{tabular}
    \caption{Wide ResNet-50-2 victim-model configurations. The MLP classifier is
    $2048{\rightarrow}4096{\rightarrow}\mathrm{ReLU}{\rightarrow}
    2048{\rightarrow}\mathrm{ReLU}{\rightarrow}10$.}
    \label{tab:wrn_training_scenarios}
\end{table}

All models were trained for 100 epochs using only the 50,000 real CIFAR-10 training examples. Synthetic CIFAR-10 data and the official test set were reserved exclusively for evaluation. Inputs were scaled to $[0,1]$ and normalized with the ImageNet mean $(0.485,0.456,0.406)$ and standard deviation $(0.229,0.224,0.225)$. To avoid additional sources of regularization or augmentation, we used no random crops, horizontal flips, MixUp, CutMix, label smoothing, dropout, stochastic depth, or weight decay. Optimization used SGD with momentum $0.9$, batch size 512, and cross-entropy loss, resulting in 98 updates per epoch and 9,800 updates overall. The learning rate followed a five-epoch linear warm-up and cosine decay. 

To track the evolution of memorization, each model was evaluated before training and after every epoch on the real training set, synthetic evaluation set, and official CIFAR-10 test set. Evaluation was performed without augmentation and recorded per-example loss, prediction confidence, and gradient-norm statistics used by the subsequent privacy analysis. All experiments used seed 42 for initialization, shuffling, and random-number generators, together with deterministic PyTorch and cuDNN execution. These choices make the four scenarios directly comparable and ensure that synthetic examples influence only the evaluation and auditing stages, not model optimization.

\subsection{Propensity Score Analysis}
\label{sec:propensity_cifar}

\begin{figure}[H]
  \centering
  \begin{subfigure}[t]{0.485\textwidth}
    \vspace{0pt}
    \centering
    \includegraphics[width=\linewidth]{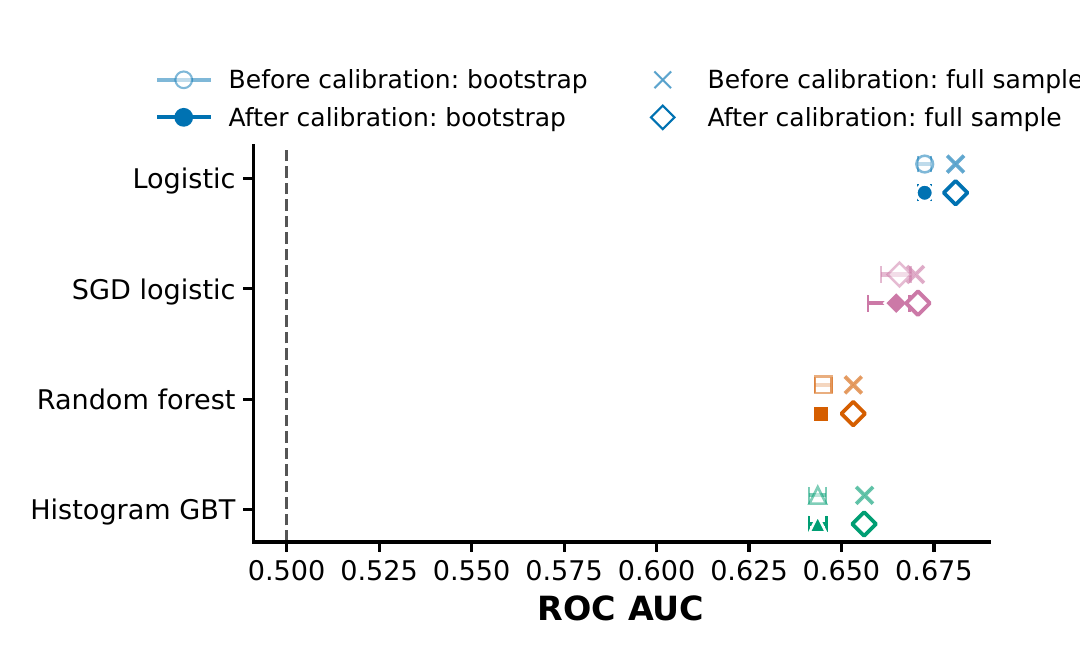}
    \caption{ROC AUC before and after calibration.}
    \label{fig:cifar10-propensity-auc}
  \end{subfigure}
  \hfill
  \begin{subfigure}[t]{0.485\textwidth}
    \vspace{0pt}
    \centering
    \includegraphics[width=\linewidth]{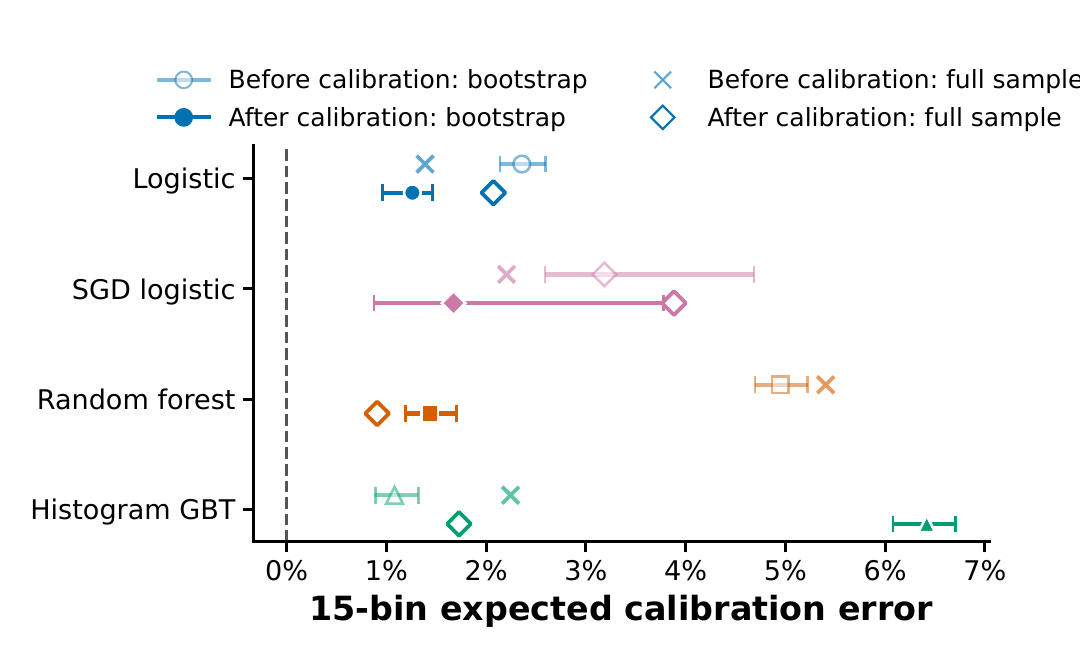}
    \caption{15-bin ECE before and after calibration.}
    \label{fig:cifar10-propensity-ece}
  \end{subfigure}

  \vspace{0.6em}

  \begin{subfigure}[t]{0.485\textwidth}
    \vspace{0pt}
    \centering
    \includegraphics[width=\linewidth]{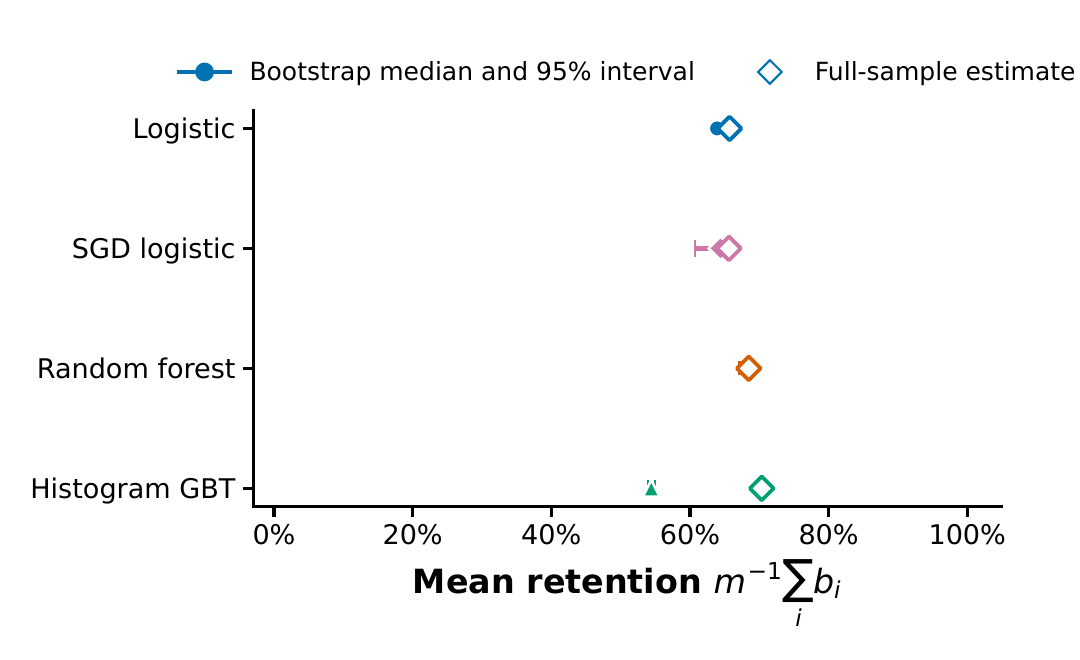}
    \caption{Mean f-DP retention probability.}
    \label{fig:cifar10-propensity-retention}
  \end{subfigure}
  \hfill
  \begin{subfigure}[t]{0.485\textwidth}
    \vspace{0pt}
    \centering
    \includegraphics[width=\linewidth]{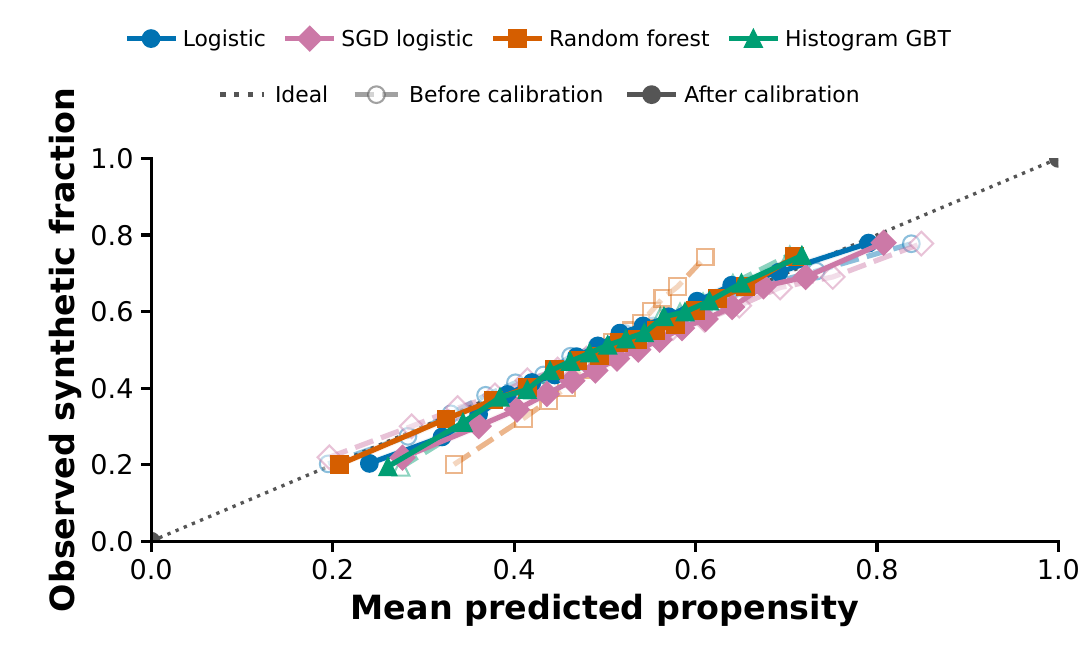}
    \caption{Reliability before and after calibration.}
    \label{fig:cifar10-propensity-reliability}
  \end{subfigure}

  \caption{Propensity-model diagnostics for synthetic versus real CIFAR-10
  embeddings without PCA. Panels (a) and (b) compare each estimator before and
  after calibration using 600 bootstrap replicates and full-sample estimates;
  panel (d) compares the corresponding full-sample reliability curves. Panel
  (c) reports the calibrated probabilities used for f-DP rejection, where
  $b_i=\min(\hat\pi_i,1-\hat\pi_i)/\max(\hat\pi_i,1-\hat\pi_i)$, so
  $m^{-1}\sum_i b_i$ is the expected fraction of images retained.}
  \label{fig:cifar10-propensity-summary}
\end{figure}

\Cref{fig:cifar10-propensity-summary} provides a sensitivity analysis of the
estimated distribution-shift correction with respect to the choice of
propensity model. Across logistic regression, SGD logistic regression, random
forest, and histogram gradient boosting, the held-out ROC AUC remains in the
relatively narrow range of approximately \(0.64\)--\(0.68\)
(\Cref{fig:cifar10-propensity-auc}). Thus, the ability to distinguish synthetic
from real CIFAR-10 examples is not an artifact of a single classifier family:
both linear and nonlinear estimators consistently identify a non-negligible
feature-level shift. At the same time, the AUC values remain far from one,
indicating that the two populations retain substantial overlap. This is the
regime in which the Zero-Run correction is most useful: treating the synthetic
examples as IID non-members would ignore measurable confounding, whereas the
overlap is still sufficient to obtain an informative audit. As expected,
calibration has little effect on AUC because it primarily changes the scale of
the predicted probabilities rather than the ranking of observations. Probability calibration is nevertheless important for the privacy correction
itself. In particular, the estimated pointwise shift,
\[
    \widehat{\epsilon}_{\mathrm{DS}}(X_i)
    =
    \left|
        \log\left(
            \frac{\hat{\pi}_i}{1-\hat{\pi}_i}
        \right)
    \right|,
\]
is sensitive to probability errors, especially when \(\hat{\pi}_i\) is close
to zero or one. Panels~(b) and~(d) therefore provide information that cannot be
recovered from AUC alone. Calibration substantially reduces the error of the
random forest, whose ECE falls from roughly \(5\%\) to near \(1\%\), while the
linear models exhibit lower post-calibration bootstrap ECEs but some
disagreement between their bootstrap and full-sample estimates. Histogram
gradient boosting displays the strongest instability: its post-calibration
full-sample ECE is below \(2\%\), whereas the corresponding bootstrap estimate
is considerably larger. Although the reliability curves generally follow the
diagonal over the central propensity range, the remaining tail deviations and
bootstrap variability show that an apparently well-calibrated full-sample fit
can understate uncertainty in the nuisance model.

This variation is propagated directly to the effective size of the audit.
As shown in \Cref{fig:cifar10-propensity-retention}, the mean retention
\(m^{-1}\sum_i b_i\) is approximately \(55\%\)--\(70\%\) across the evaluated
estimators. The correction therefore preserves a meaningful majority of the
observations, but reduces the effective evidence available to the audit by
roughly \(30\%\)--\(45\%\). The qualitative conclusion is robust across model
classes: synthetic and real images are detectably different, yet their overlap
is sufficient for a non-vacuous correction. The precise correction magnitude,
however, remains sensitive to propensity-model fitting and calibration,
particularly for the nonlinear estimator. Consequently, downstream privacy
bounds should be accompanied by bootstrap uncertainty and reported across
multiple propensity-model classes rather than selecting the specification
that produces the largest lower bound. Stability across these choices provides
stronger evidence that the final audit reflects model-induced membership
leakage rather than an artifact of the propensity estimator.

\subsection{Audit Analysis}

\Cref{fig:cifar10-audit-trajectories} reports the conservative bootstrap
bound \(\widehat{\mu}_{\mathrm{boot}}\) for checkpoints from each model's
single 100-epoch training run. At each checkpoint, the victim model and
its MIA scores are held fixed while the Zero-Run audit is recomputed over
the \(K=600\) bootstrap propensity fits; each colored point is the
resulting lower \(0.025\) quantile. The corrected trajectories begin near
zero and increase as optimization proceeds, showing that the audit
responds to model-dependent membership signal accumulated during
training rather than only to the static shift between real and synthetic
images. The close agreement among the four propensity estimators,
particularly after the initial transient, further indicates that the
temporal pattern is not tied to a single nuisance-model family.

The trajectories also distinguish the four training regimes. The frozen
pretrained linear head exhibits little corrected leakage throughout and
ends below \(\widehat{\mu}=0.09\). The pretrained MLP and the model trained
fully from scratch accumulate leakage more gradually before plateauing
near \(0.6\), whereas full fine-tuning from pretrained weights produces
an earlier increase and stabilizes around \(0.45\)--\(0.50\). After the
first few epochs, the held-out real-data control remains above the
corrected synthetic-data curves and is more variable, but it supports
the same broad conclusion: the higher-capacity regimes expose
substantially more membership leakage than the linear head. Zero-Run is
therefore conservative in magnitude while still recovering the onset and
saturation of leakage during training. Each point is a separate audit of
a fixed checkpoint; the trajectories should not be interpreted as a
privacy-composition analysis across multiple checkpoint releases.

\begin{figure}[t]
    \centering
    \includegraphics[width=\linewidth]
        {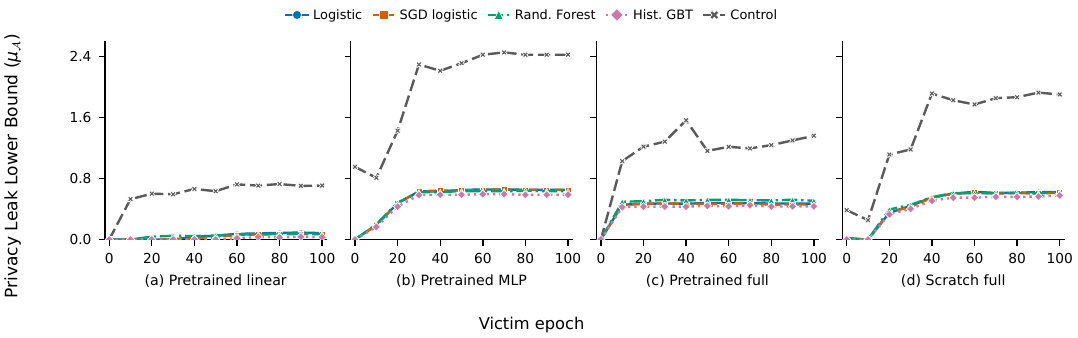}
    \caption{Conservative CIFAR-10 \(\mu\)-GDP lower bounds across 100 victim-training
    epochs. Colored curves use four propensity corrections with synthetic
    non-members; Control is the held-out real-data benchmark.}
    \label{fig:cifar10-audit-trajectories}
\end{figure}

\section{WikiText Experiments}
\label{app:wikitext}
\subsection{Synthetic Data Generation}

Following the PANORAMIA setup, we produced synthetic WikiText-103 data using a fine-tuned GPT-2-small generator. We trained this generator on 29,582 natural 64-token WikiText chunks, with another 3,697 chunks for validation. Training ran for 60 epochs with seed 42, and we selected the checkpoint with the lowest validation loss. We then used a separate set of 12,678 WikiText chunks as prompts. For each 64-token prompt, the generator sampled eight 64-token continuations using top-$k$ sampling with $k=200$, temperature $1$, and top-$p=1$. After removing the prompts, we retained 97,536 of the 101,424 generated suffixes because their decoded text re-tokenized to exactly 64 tokens.

Crucially, the generator was not the model being audited. The audited target was a separate GPT-2-small model trained in an independent run, with its own weights and checkpoint, on 95,085 WikiText chunks for 200 epochs. Thus, although the generator and target shared the same architecture and pretrained GPT-2 initialization, they were distinct model instances trained with different data configurations and for different purposes. The generator produced reference text, whereas the target model was evaluated for evidence of membership leakage.

Some of the natural data used to train the generator was also included in the target model's training population, as prescribed by the experimental data split. However, the generated 64-token suffixes themselves were not used to train the target model and had no exact sequence overlap with its natural training or held-out populations. Consequently, these suffixes served as synthetic nonmember examples: their behavior under the target model was compared with that of known training members to estimate membership-inference bounds.

\subsection{Victim Model Training Setup}

Following the WikiText-103 setup of PANORAMIA, we fine-tuned a pretrained GPT-2-small model (124.4M parameters) using standard causal language modeling. WikiText-103 documents were tokenized with the GPT-2 tokenizer and partitioned into fixed sequences of 64 tokens, each defining one membership unit. The victim training set contained 95,085 such sequences, while 10,565 sequences were held out for evaluation. Each sequence was used as both input and target, yielding 63 next-token predictions per example. No synthetic examples or differential-privacy mechanisms were used during victim training. We follow the $95{,}085/10{,}565$ partition stated in the paper; this differs from the released PANORAMIA data-module path, which produces an $84{,}520/10{,}565/10{,}565$ train/validation/test split, and this discrepancy should therefore be accounted for when reproducing the experiment.

Training was performed for 200 epochs with AdamW, a learning rate of $2\times10^{-5}$, weight decay $0.01$, and 100 warm-up steps. The per-GPU batch size was 64 across four GPUs, giving an effective batch size of 256 without gradient accumulation. This corresponds to 372 optimizer updates per epoch and 74,400 updates over the full training trajectory. Standard next-token cross-entropy was minimized, and model performance on the held-out data was evaluated once per epoch. All training randomness used seed 42. The victim was trained only once: the models audited at different levels of memorization are checkpoints extracted from this same continuous optimization trajectory rather than independently trained models.

The increase in held-out loss and perplexity at later checkpoints indicates progressively stronger overfitting, making these checkpoints useful for studying how membership leakage changes as training proceeds. During subsequent auditing, each checkpoint is kept fixed; no membership-inference model or auditing procedure participates in victim optimization.


\begin{figure}[H]
  \centering
  \begin{subfigure}[t]{0.485\textwidth}
    \vspace{0pt}
    \centering
    \includegraphics[width=\linewidth]{Figures/cifar10_synth_data/cifar10_bootstrap_auc.pdf}
    \caption{ROC AUC before and after calibration.}
    \label{fig:wikitext-propensity-auc}
  \end{subfigure}
  \hfill
  \begin{subfigure}[t]{0.485\textwidth}
    \vspace{0pt}
    \centering
    \includegraphics[width=\linewidth]{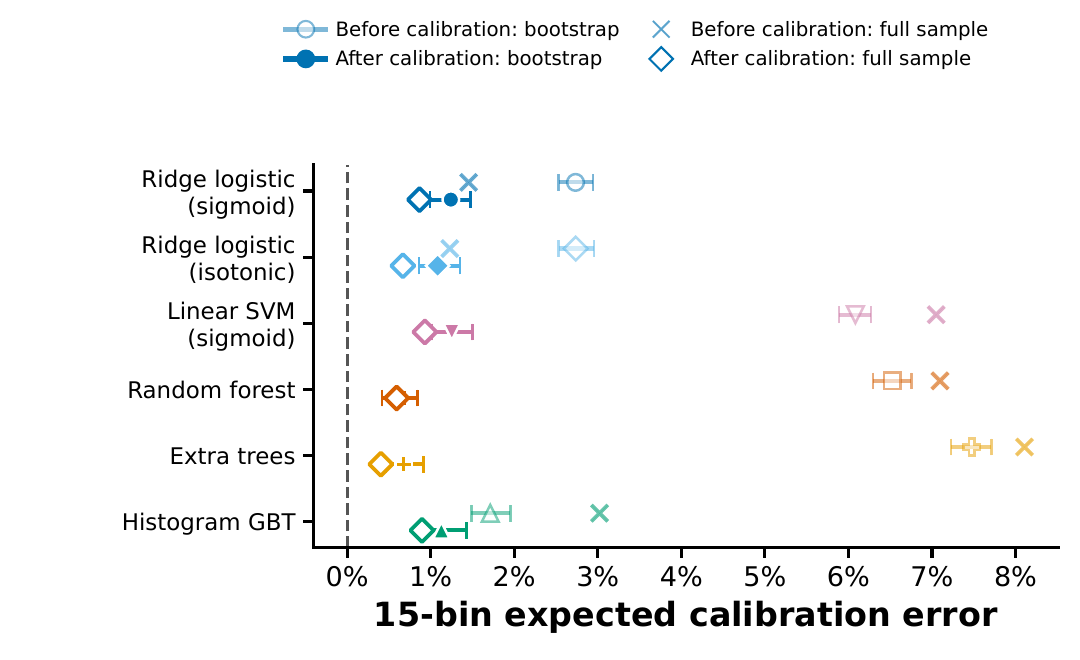}
    \caption{15-bin ECE before and after calibration.}
    \label{fig:wikitext-propensity-ece}
  \end{subfigure}

  \vspace{0.6em}

  \begin{subfigure}[t]{0.485\textwidth}
    \vspace{0pt}
    \centering
    \includegraphics[width=\linewidth]{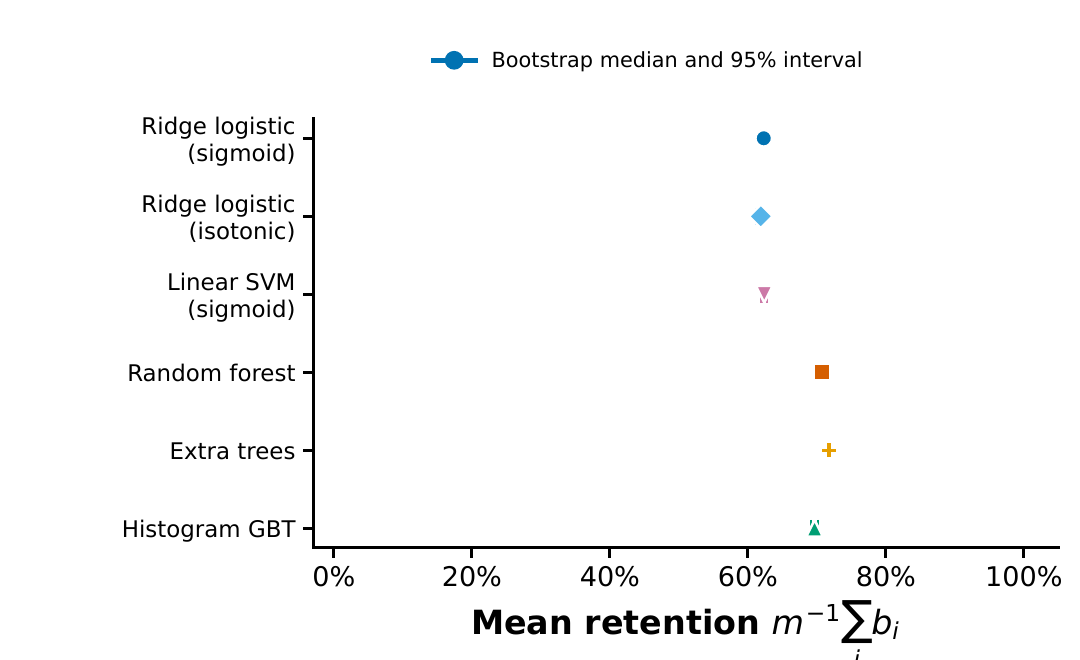}
    \caption{Mean f-DP retention probability.}
    \label{fig:wikitext-propensity-retention}
  \end{subfigure}
  \hfill
  \begin{subfigure}[t]{0.485\textwidth}
    \vspace{0pt}
    \centering
    \includegraphics[width=\linewidth]{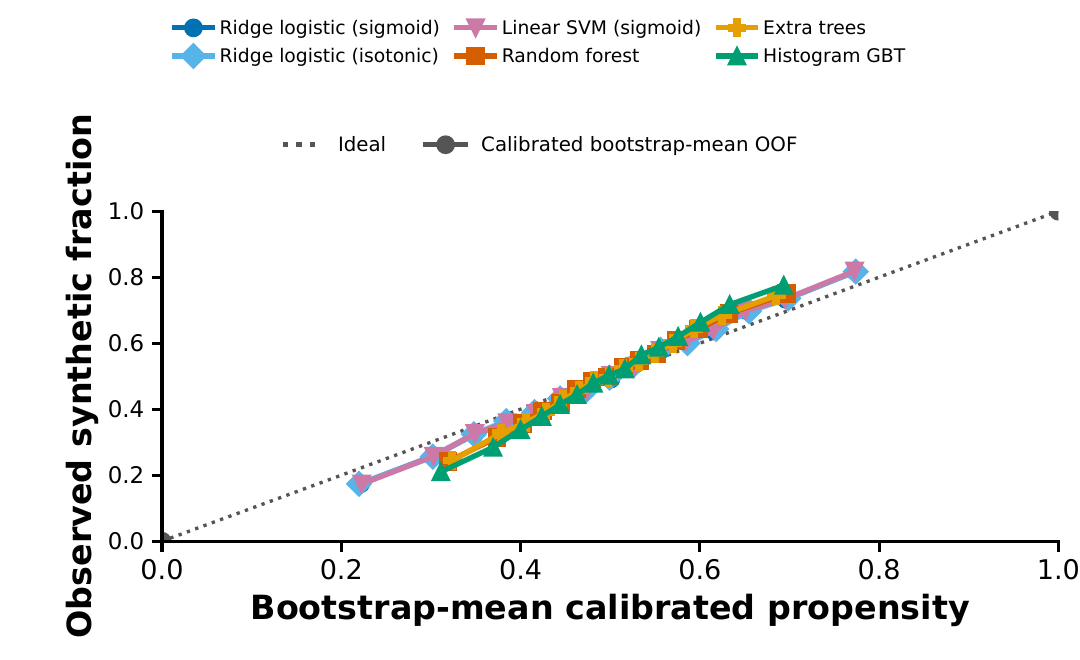}
    \caption{Calibrated bootstrap-mean reliability.}
    \label{fig:wikitext-propensity-reliability}
  \end{subfigure}

  \caption{Propensity-model diagnostics for GPT-2-generated suffixes versus
  real WikiText-103 embeddings, without PCA. Panels (a) and (b) compare each
  estimator before and after calibration using full-sample estimates and 95\%
  percentile intervals from 600 text-hash-group bootstrap refits. Panel (c)
  reports the calibrated f-DP rejection probabilities, where
  $b_i=\min(\hat\pi_i,1-\hat\pi_i)/\max(\hat\pi_i,1-\hat\pi_i)$ and
  $m^{-1}\sum_i b_i$ is the expected retained fraction. Panel (d) uses each
  text's bootstrap-mean calibrated out-of-fold propensity; the campaign did
  not retain the raw score vectors required for a pre-calibration reliability
  curve.}
  \label{fig:wikitext-propensity-summary}
\end{figure}

\subsection{Propensity Score Analysis}
\label{sec:wikitext-propensity-analysis}

\Cref{fig:wikitext-propensity-summary} repeats the propensity-model
sensitivity analysis for real WikiText-103 sequences and GPT-2-generated
suffixes. The reported ROC AUCs lie between approximately \(0.64\) and
\(0.68\), indicating a consistent but moderate distribution shift across
propensity-model classes. As in the CIFAR-10 experiment, calibration primarily
improves probability quality rather than discrimination: it substantially
reduces ECE, particularly for the SVM and tree-based estimators, while the
calibrated reliability curves remain close to the diagonal over the supported
propensity range.

The corresponding \(f\)-DP retention probabilities are stable across the
linear and nonlinear estimators, with approximately \(60\%\)--\(70\%\) of the
audit sample retained. The correction therefore removes a meaningful
feature-only membership signal without eliminating most of the available
evidence, and this conclusion is not driven by a single propensity
specification. Panel~(d) should be interpreted as a post-calibration diagnostic,
since only the bootstrap-mean calibrated out-of-fold scores were retained for
the reliability analysis.

\section{Failure Modes}
\subsection{The Enron email vs TREC spam case}
\label{subsec:enron-trec-propensity}

\begin{figure}[t]
    \centering
    \begin{minipage}[t]{0.48\columnwidth}
        \centering
        \includegraphics[width=\linewidth]{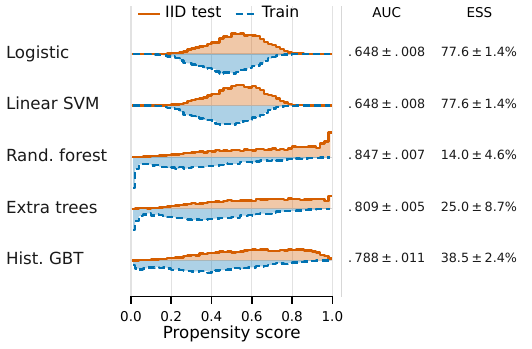}
        \caption{Performance of five propensity-score estimators in
distinguishing the iWildCam training set from the benchmark's nominally
IID test split. Although the two splits contain examples from the same
camera locations, they differ in their label composition. Entries
report five-fold means \(\pm\) std. Nonlinear estimators achieve AUCs
of up to \(0.847\), with effective sample sizes as low as \(14.0\%\),
showing that the nominally IID test set is not exchangeable with the
training set.}
        \label{fig:iwildcamera_train_test_iid}
    \end{minipage}
    \hfill
    \begin{minipage}[t]{0.48\columnwidth}
        \centering
        \includegraphics[width=\linewidth]
            {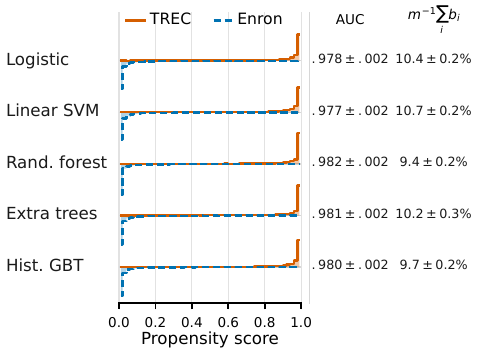}
        \caption{Performance of five propensity-score estimators in distinguishing
        TREC from Enron emails. Logistic denotes logistic regression with isotonic
        calibration. Entries report five-fold means \(\pm\) std.
        AUCs near \(0.98\) 
        indicate strong dataset separability and insufficient overlap.}
        \label{fig:enron-trec-main-separation}
    \end{minipage}
\end{figure}

We evaluated whether Enron emails provide adequate covariate support for the TREC spam corpus using their text embeddings. Because the corpora have different sizes, we sampled 15,223 Enron emails without replacement to match the 15,223 TREC observations. Within each of five outer folds, the embeddings were reduced to 48 principal components using only the corresponding training data; probability calibration was performed in three inner folds. All quantities below are therefore based on out-of-fold predictions, with no bootstrap resampling.

The candidate models produced consistently high discrimination, with ROC AUC between 0.977 and 0.982, and low calibration error, with ECE between 0.002 and 0.011 (Figure~\ref{fig:enron-trec-model-comparison}. The narrow variation across outer folds indicates that the separation is not attributable to a single split. Random forest achieved the largest AUC and smallest Brier score, whereas Histogram GBT retained the largest effective sample size and the smallest fraction of extreme scores. We therefore use Histogram GBT as a descriptive overlap model rather than claiming a universally best classifier.

High predictive performance did not imply adequate overlap. Even for Histogram GBT, the effective sample size was only 297 of 15,223 Enron observations (2.0\%), and 36.8\% of all scores were below 0.01 or above 0.99. Figure~\ref{fig:enron-trec-overlap-calibration} makes the distinction clear: the calibrated probabilities track the observed TREC rate, while the score distributions remain concentrated near opposite ends of the propensity range. Thus, calibration improves the interpretation of the probabilities but cannot recover support where the embedding distributions scarcely overlap.

Taken together, these results show a large  shift between Enron and TREC. Any unrestricted inverse-propensity analysis would place most of its effective weight on a very small subset of Enron emails and should therefore be regarded as unsupported without additional overlap restrictions or a narrower target population.

\begin{figure*}[t]
    \centering
    \begin{minipage}[t]{0.32\textwidth}
        \centering
        \includegraphics[width=\linewidth]{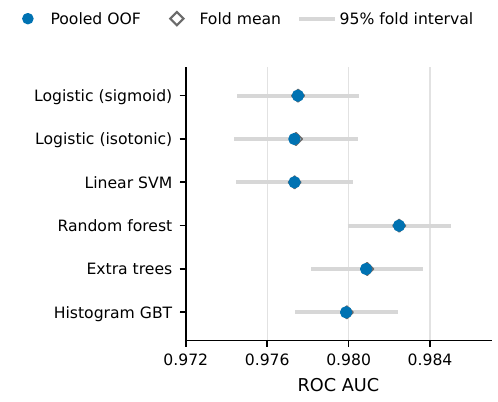}
        \smallskip
        {\small (a) ROC AUC}
    \end{minipage}\hfill
    \begin{minipage}[t]{0.32\textwidth}
        \centering
        \includegraphics[width=\linewidth]{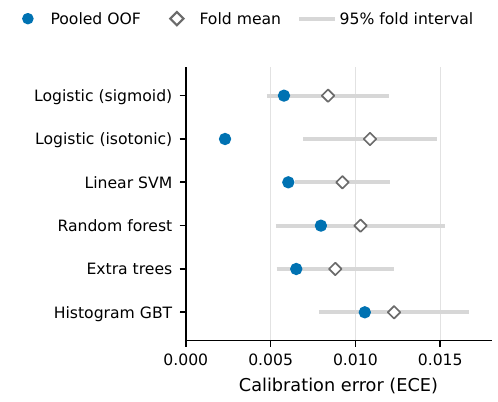}
        \smallskip
        {\small (b) Calibration error}
    \end{minipage}\hfill
    \begin{minipage}[t]{0.32\textwidth}
        \centering
        \includegraphics[width=\linewidth]{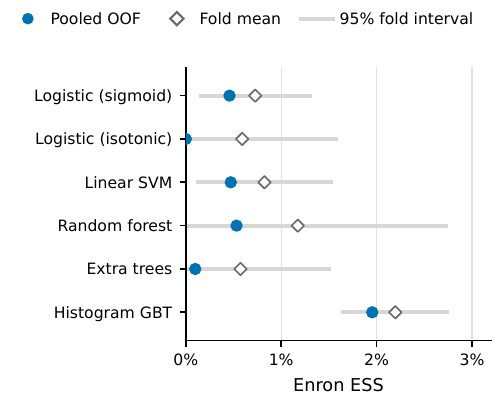}
        \smallskip
        {\small (c) Effective sample size}
    \end{minipage}
    \caption{Cross-fitted comparison of the propensity estimators. Circles denote pooled out-of-fold estimates; diamonds and horizontal intervals show the mean and 95\% $t$ interval across five outer folds. Strong and stable discrimination coexists with a very small effective sample size.}
    \label{fig:enron-trec-model-comparison}
\end{figure*}

\begin{figure*}[t]
    \centering
    \includegraphics[width=0.82\textwidth]{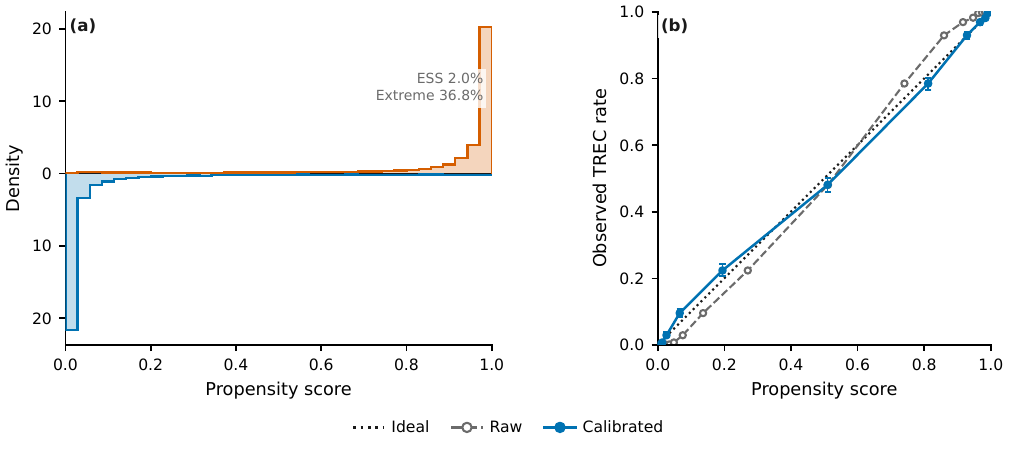}
    \caption{Out-of-fold overlap and calibration for Histogram GBT. The probabilities are well calibrated, but Enron and TREC remain strongly separated, producing low ESS and many extreme scores.}
    \label{fig:enron-trec-overlap-calibration}
\end{figure*}

\begin{figure*}[t]
    \centering
    \centering
    \includegraphics[width=0.5\linewidth]{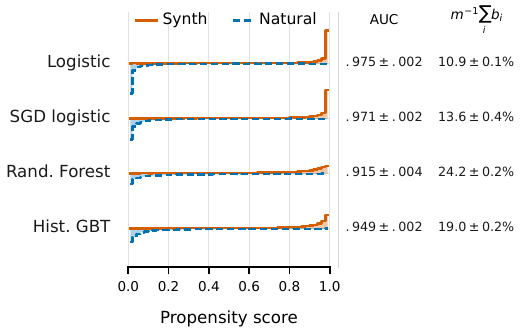}
    \smallskip
    \caption{Cross-fitted comparison of four propensity estimators on natural and PANORAMIA-generated CIFAR-10 images. Mirrored histograms show pooled out-of-fold propensity scores; columns report the mean AUC and retention $m^{-1}\sum_i b_i$, with $\pm$ one standard deviation across three outer folds. All estimators strongly separate synthetic from natural images, resulting in poor overlap and low retention.}
    \label{fig:panoramia-synth-cifar10}
\end{figure*}

\end{document}